\providecommand{\U}[1]{\protect\rule{.1in}{.1in}}
\renewcommand{\baselinestretch}{2}
\newtheorem{theorem}{Theorem}
\newtheorem{corollary}[theorem]{Corollary}
\newtheorem{lemma}[theorem]{Lemma}
\begin{document}

\newpage\
\newpage

\begin{center}
APPROVAL SHEET \newline
\end{center}

\noindent Title of Dissertation: Amplified Quantum Transforms\newline\newline
Name of Candidate: David Jonathan Cornwell\newline\hspace*{3.8cm}{Doctor of
Philosophy, 2014} \newline\newline Dissertation and Abstract Approved:
\rule{4cm}{.1mm}\newline\newline\hspace*{7cm}{Dr. S. Lomonaco} \newline%
\hspace*{7cm}{Professor} \newline\hspace*{7cm}{CSEE}\newline\newline\newline
Date Approved: \rule{4cm}{.1mm}\newline

\newpage

\begin{center}
CURRICULUM VITAE \newline\newline
\end{center}

\noindent1980 Bachelor of Science, Mathematics, University of York, UK
\newline1995 Master of Science, Statistics, UMBC, USA \newline2014 Doctor of
Philosophy, Applied Mathematics, UMBC, USA \newline

\newpage

\hbox{\ }

\renewcommand{\baselinestretch}{1}

\begin{center}
{\normalsize {\large {ABSTRACT} }}

{\normalsize {\large \vspace{3em} }}
\end{center}

{\normalsize \hspace{-0.15in}
\begin{tabular}
[c]{ll}%
Title of dissertation: & {\large Amplified Quantum Transforms}\\
& \\
& \\
\  & \\
& {\large David J. Cornwell, Doctor of Philosophy}\\
\  & 2014\\
Dissertation directed by: & {\large Professor Samuel J. Lomonaco}\\
& {\large Department of Computer Science}\\
& UMBC
\end{tabular}
}

{\normalsize \vspace{3em} }%

\renewcommand{\baselinestretch}{2}\small\normalsize

In this thesis we investigate two new Amplified Quantum Transforms. In
particular we create and analyze the Amplified Quantum Fourier Transform
(Amplified-QFT) and the Amplified-Haar Wavelet Transform. The Amplified-QFT
algorithm is used to solve the following problem:

\textbf{The Local Period Problem: }\textit{Let }$L=\{0,1,...,N-1\}$\textit{ be
a set of }$N$\textit{ labels and let }$A$\textit{ be a subset of }$M$\textit{
labels of period }$P$\textit{, i.e. a subset of the form}%
\[
A=\{j:j=s+rP,r=0,1,...,M-1\}
\]
\textit{ where }$P\leq\sqrt{N\text{ }}$\textit{and }$M<<N$\textit{, and where
}$M$\textit{ is assumed known. Given an oracle}%
\[
f:L\rightarrow\{0,1\}
\]
\textit{ which is }$1$\textit{ on }$A$\textit{ and }$0$\textit{ elsewhere,
find the local period }$P$\textit{ and the offset }$s$\textit{.}

First, we provide a brief history of quantum mechanics and quantum computing.

Second, we examine the Amplified-QFT in detail and compare it against the
Quantum Fourier Transform (QFT) and Quantum Hidden Subgroup (QHS) algorithms
for solving the Local Period Problem. We calculate the probabilities of
success of each algorithm and show the Amplified-QFT\ is quadratically faster
than the QFT and QHS algorithms.

Third, we examine the Amplified-QFT algorithm for solving The Local Period
Problem with an Error Stream.

Fourth, we produce an uncertainty relation for the Amplified-QFT algorithm.

Fifth, we show how the Amplified-Haar Wavelet Transform can solve the Local
Constant or Balanced Signal Decision Problem which is a generalization of the
Deutsch-Jozsa algorithm.


\newpage

\thispagestyle{empty} \hbox{\ }
\renewcommand{\baselinestretch}{1}\small\normalsize

\begin{center}
AMPLIFIED\ QUANTUM\ TRANSFORMS \newline\ \newline\ \newline by \newline%
\ \newline David J. Cornwell\newline\newline2014\ \newline\ \newline%
\ \newline\ \newline Dissertation submitted to the Faculty of the Graduate
School of the \newline University of Maryland, Baltimore County in partial
fulfillment \newline of the requirements for the degree of \newline Doctor of Philosophy

2014
\end{center}

\vspace{7.5em}

\noindent Advisory Committee: \newline Professor Samuel J. Lomonaco,
Chair/Advisor \newline Professor Thomas Armstrong, Co-Advisor \newline
Professor Muddappa Gowda \newline Professor Florian Potra\newline Professor
Yanhua Shih\newline



\newpage

\thispagestyle{empty} \hbox{\ }

\vfill%
\renewcommand{\baselinestretch}{1}\small\normalsize
{\normalsize \vspace{-0.65in} }

\begin{center}
\copyright \hbox{ }Copyright by David J. Cornwell 2014
\end{center}

\vfill


\newpage\ 

\newpage
\pagestyle{plain} \pagenumbering{roman} \setcounter{page}{1}

%

\renewcommand{\baselinestretch}{2}\small\normalsize
\hbox{\ }


\begin{center}
Acknowledgements
\end{center}

\vspace{1ex}

I would like to thank my parents for supporting my university education. I
would like to thank my fiancee, Ivone de Lima, for being so considerate while
I\ worked on my PhD since 2006. I would like to wish my three sons, Tim, Zac
and Nic a terrific future.

I would also like to express a deep thank you to my thesis advisor Dr
Lomonaco. It has been a great pleasure discussing ideas and working closely
with him over the years. Also I would like to thank the UMBC\ Math Department
for enabling my cross disciplinary PhD to occur and allowing me to work on my
PhD over so many years while I\ worked full time. This has been a terrific experience.

%

\renewcommand{\baselinestretch}{1}
\small\normalsize

\newpage%

\tableofcontents

\newpage
\addcontentsline{toc}{chapter}{List of Abbreviations}

%

\renewcommand{\baselinestretch}{1}\small\normalsize
\hbox{\ }

\vspace{-4em}

\begin{center}
List of Abbreviations
\end{center}

\vspace{3pt}

{\normalsize
\begin{tabular}
[c]{ll}%
QFT & Quantum Fourier Transform\\
QHS & Quantum Hidden Subgroup\\
Amplified-QFT & Amplified Quantum Fourier Transform\\
& \\
&
\end{tabular}
}

\newpage\setlength{\parskip}{0em}
\renewcommand{\baselinestretch}{2}\small\normalsize

\setcounter{page}{1} \pagenumbering{arabic}


\newpage\renewcommand{\thechapter}{1}

\chapter{Introduction}

\section{Executive Summary}

In this thesis we analyze two new quantum algorithms. The first algorithm is
called the Amplified Quantum Fourier Transform (Amplified-QFT) which solves
the Local Period Problem (see Chapter 2) and the Local Period Problem with
Error Stream (see Chapter 3). We also produce an Uncertainty Principle for
this algorithm (see Chapter 4). The second algorithm is called the Amplified
Haar Wavelet Transform which solves the \textbf{Local Constant or Balanced
Signal Decision Problem} which is a generalization of the Deutsch-Josza
problem (See Chapter 5).

What is the Local Period Problem? This is best explained by an example.
Suppose we have a 1024 long signal of zeros and ones in positions 0 to 1023
which is nearly all zeros, except for 7 ones, which are located at positions
\{208,213,218,223,228,233,238\}. We can see that this sequence of ones has
period 5.

The Local Period Problem, is given the signal (which we call an Oracle) and
given the number of ones (7), find the period (5) and the starting position of
the sequence (the offset 208). In the notation of Chapter 2 we have
$N=1024,M=7,P=5,s=208$ and the periodic set of ones
$A=\{208,213,218,223,228,233,238\}.$

How does the Amplified-QFT\ solve this problem? We begin with a superposition
which has amplitudes of $+1/\sqrt{N}$ where the oracle is zero and
$-1/\sqrt{N}$, where the oracle is a one. We then run Grover's algorithm which
increases the amplitudes on the positions of the ones given by the set $A$ to
very close to $1/\sqrt{M},$ and decreases the amplitudes on the positions of
the zeros to very close to $0$. We then run the QFT algorithm on this state
and make a measurement to try to recover the period $P.$

\textbf{Result 1:} We show that the Amplified-QFT algorithm is, on average,
\textbf{quadratically} \textbf{faster} than two other algorithms, the Quantum
Fourier Transform (QFT) and the Quantum Hidden Subgroup (QHS) algorthms for
solving this problem. This result is obtained in section 2.8. The reason for
this is that the QFT and QHS algorithms do not amplify the amplitudes on the
set $A~$\ whereas the Amplified-QFT algorithm does. The results of both the
QFT and QHS algorithms are dominated by the number of zeros and so find it
difficult to find the period of the small set of ones. For these two
algorithms the probability of measuring the value zero is close to 1.

\textbf{Result 2:} We find the probabilities of success of each of these three
algorithms. These results are summarized for each of the algorithms in section
2.7 but are obtained in sections 2.9 for the Amplified-QFT, 2.10 for the QFT
and 2.11 for the QHS algorithms. We show that the ratio of the probabilities
of success of the Amplified-QFT to the QFT algorithms is approximately $N/4M$
whereas the ratio of the probabilities of success of the Amplified-QFT to the
QHS algorithms is approximately $N/2M$.

\textbf{Result 3:} In section 2.12 we produce two quantum algorithms for
finding the offset $s.$

\bigskip

\textbf{Result 4:} In section 2.13 we produce a general result where we
replace the QFT by a general unitary operator $U$ in the Amplified-QFT
algorithm. We find a property on $U$ such that the ratio of the probabilities
of the Amplified-U divided by U case is the same as the ratio of the
Amplified-QFT divided by QFT case.

\textbf{Result 5:} In section 2.14, we replace Grover's algorithm with a
general amplification algorithm in the Amplified-QFT algorithm and find an
upper bound on the probabilities of success in this case.

What is the Local Period Problem with Error Stream? We extend the example
described at the beginning of this executive summary.

Suppose, in addition to the 7 ones in the periodic set $A,$ there are $L=6$
additional ones introduced in random positions due to errors in the oracle. We
now have a random set $G=\{17,111,234,433,727,813\}$. How does this affect the
probability of success for the same three algorithms defined in chapter 2 and
the ability to recover the period 5?

\textbf{Result 6:} In chapter 3 section 3.1, we summarize the exact
probabilities of success which now include components of sums over a random
set. These values are obtained in sections 3.4, 3.5 and 3.6.

\textbf{Result 7:} In section 3.2, we calculate the corresponding expected
values and variances of the sums over the random set given by (where $T=L+M$)%
\[
\left\vert \frac{M}{T}\omega^{sy}+\frac{1}{T}\sum_{z\in G}\omega
^{zy}\right\vert ^{2}%
\]

and%

\[
\left\vert \frac{1}{T}\omega^{sy}\left[  \frac{1-\omega^{MPy}}{1-\omega^{Py}%
}\right]  +\frac{1}{T}\sum_{z\in G}\omega^{zy}\right\vert ^{2}%
\]

and%
\[
\left\vert \frac{1}{T}\sum_{z\in G}\omega^{zy}\right\vert ^{2}%
\]

\textbf{Result 8:} Also in section 3.2, for the Amplified-QFT algorithm, we
show that an upper bound of the expected probability of success has a minimum
value when $L=MinL$ given by%
\[
MinL=-M^{2}+\sqrt{M(M-1)(M(M-1)+N)}%
\]

This indicates that as $L$ increases to $MinL$, the upper bound of the
expected probability of success decreases, but then as $L$ increases in value
past $MinL$, the expected probability of success can increase again due to randomness.

\textbf{Result 9:} In chapter 4 we obtain an uncertainty principle for the
Amplified-QFT algorithm. Let $N=$ total number of elements, $M=$ number of
elements whose amplitudes are close to $1/\sqrt{M}$ after first running
Grover's algorithm, $N_{y}=$ number of elements which have non-zero amplitudes
after running the QFT, then we have%
\[
MN_{y}\geq N
\]

What is the \textbf{Local Constant or Balanced Signal Decision Problem}?

This is best explained by an example. Suppose we have a signal $S$ which is
1024 long consisting of zeros and ones. Suppose we are given two pairs of
locations $A_{128}=\{128,129\}$ and $A_{722}=\{722,723\}$ where the signal is
either constant or balanced at these locations and we wish to determine which
is the case. Here $A=A_{128}\cup A_{722}.$ For example we could have the
constant signal case $S(128)=0,S(129)=0,S(722)=1,S(723)=1$ or we have a
balanced signal case $S(128)=0,S(129)=1,S(722)=1,S(723)=0$.

\textbf{Result 10:} In chapter 5 section 5.2, we show that the Amplified-Haar
Wavelet Transform can solve this problem quadratically faster than a classical
algorithm to solve this problem.

\textbf{Result 11: }In general this problem cannot be solved by the Quantum
Haar Wavelet Transform alone because the values of the signal on the set
$\overline{A}$ (the complementary set of $A$) affect the results. We do need
the amplification step in order to solve this problem. However we identify a
specific case where the Quantum Haar Wavelet Transform can solve the problem
(when either we have $A$ is constant and $\overline{A}$ is balanced or $A$ is
balanced and $\overline{A}$ is constant). We show that in this case, the
Amplified-Haar Wavelet Transform is faster than the Quantum Haar-Wavelet
transform when $M>\frac{N^{1/3}(1-2M/N)^{4/3}}{2592^{1/3}(1-M/N)^{2/3}}.$

\section{A Brief History of Quantum Mechanics}

In this section we provide a brief history of quantum mechanics (see the list
of books in the references section especially books 1, 5, 14, 18, 19, 27, 29,
36 and 37).

As a material body is heated it emits radiation at different frequencies and
intensities as the temperature increases. The problem is to provide a
theoretical explanation for the observed effects. Rayleigh and Jeans applied
the principles of statistical mechanics to this problem but were not
completely successful. Their theoretical models predicted the "Ultraviolet
Catastrophe" which did not occur in practice. In 1900 Max Plank solved this
black body radiation problem by assuming that the energy of the emitted
radiation comes in energy packets or quanta and that the relationship between
energy $E$ and frequency $\upsilon$ is given by
\[
E=h\upsilon
\]

and where $h$ is Planck's constant where $h=6.626x10^{-34}Js.$ This assumption
led him to produce results for the black body radiation problem that matched
experimentally observed values which had not been done before. In 1918 he
received the Nobel prize for this work.

In 1905 Einstein explaned the photo electric effect by using Planck's quantum
approach. Light incident on a metal surface causes the emission of electrons.
The more intense the light, the more electrons of a given energy are produced.
Also light must exceed a certain minimum frequency before electrons are
emitted. Einstein produced the following formula for the photo electric effect%

\[
K=h(\upsilon-\upsilon_{0})
\]

where $\upsilon_{0}$ is the frequency of light below which the photo electric
effect does not occur, and $K$ is the energy of the emitted electron. Once
again, this approach agrees with experimentally observed results.

Around this same timeframe the model of the atom was provided by J.J. Thomson.
He had shown by experiment that atoms consist of positively and negatively
charged components. His model assumed that the positive charge was distributed
evenly throughout the atom, interspersed with negatively charged electrons.
However this classical model could not explain the line spectra of different
elements. Rutherford performed experiments concerning the scattering of alpha
particles by atoms. His experiments suggested that negatively charged
electrons orbited a central positively charged nucleus much like planets
orbiting the Sun, however problems remained. Orbiting electrons should emit
radiation and fall into the nucleus. The atom should only exist for a very
short time.

Niels Bohr decided that a model based on the quantum approach was needed.
Suppose the different energy levels of an atom are given by $E_{1},E_{2}...$
then the difference between these energy levels should be discrete values
given by%
\[
h\upsilon_{m,n}=E_{m}-E_{n}%
\]

where $\upsilon_{m,n\text{ }}$ is the frequency of light emitted when the atom
moves from the excited state $E_{m}$ to $E_{n}.$ The observed line spectra
could be explained using the formula%
\[
E_{n}=-\frac{Rh}{n^{2}}%
\]

By assuming the electrons moved in a circular orbit and the electrostatic
attraction force was balanced by the centrifugal force, Bohr was able to
obtain a theoretical value for $R$ from the formula%
\[
R=\frac{4\pi^{2}e^{4}m_{e}}{h^{3}}%
\]

which agreed with observation (where $e$ is the charge of the electron and
$m_{e}$ is its mass). Sommerfeld extended this work to the case of elliptical
orbits. \ Bohr's theory of the atom was successful and he created an institute
in Copenhagen for atomic studies.

In 1926 Schrodinger developed his famous wave equation which he used to
explain the spectral lines of the Hydrogen atom. This equation has the
following general form for the time dependent case%
\[
i\overline{h}\frac{\partial}{\partial t}\Psi=H\Psi
\]
where $H$ is the Hamiltonian operator and $\Psi$ is the state vector or wave
function of the system. The wave equation introduces the fundamental concept
of superposition for if $\Psi_{1}$ is a solution and $\Psi_{2}$ is a solution
then $\Psi_{1}+\Psi_{2}$ is also a solution by linearity.

Heisenberg developed his matrix mechanics formulation of quantum mechanics
which was shown to be equivalent to Schrodinger's wave equation version.
Heisenberg also discovered his famous Uncertainty Principle which is a
relationship between two complementary or conjugate variables such as position
and momentum.%
\[
\sigma_{x}\sigma_{p}\geq\frac{\overline{h}}{2}%
\]

where $\sigma$ is the standard deviation of the appropriate variable.

The Copenhagen Interpretation of quantum mechanics was put forward by Niels
Bohr. This contained the elements of unreality, non-locality and uncertainty.
Einstein challenged these principles in an ongoing and great debate with Niels
Bohr at the 1927 and 1930 Solvay Conferences culminating in the famous
Einstein, Podolsky and Rosen (EPR)\ paper of 1935. In this paper EPR\ claimed
that quantum mechanics should have the elements of reality, locality and
certainty which could be achieved by a Hidden Variable theory - a classical
theory. However in 1964 John Bell showed that certain correlations in quantum
theory would be much stronger than those of a hidden variable theory and he
produced Bell's Inequality which would identify which theory was true. This
meant that one could tell from performing an experiment whether quantum
mechanics was the correct theory or whether a hidden variable theory was the
correct theory. Many experiments have been performed that show quantum
mechanics is the true theory and have ruled out most hidden variable theories.
However each experiment performed so far has not ruled out all hidden variable
theories. Some loopholes have remained. In the future experiments will be
performed that will eventually rule out all the remaining loopholes but if we
apply the induction argument for theories we can say the probability that
quantum mechanics is true is currently very close to 1 and the probability
that there is a true hidden variable theory is very close to 0. Alternatively
if we use Karl Popper's approach we would say that the theory of quantum
mechanics has not been refuted. However there is still the chance it could be
refuted in favor of a conjectured hidden variable theory.

\section{A Brief History Of Quantum Computing}

In this section we provide a brief history of the development of quantum
computing to set the stage for this thesis.

The early days of quantum computing were kicked off with ideas from Paul
Benioff and Richard Feynman. Benioff investigated the idea of whether quantum
systems could efficiently simulate classical computers. In 1981 Richard
Feynman investigated the question whether a classical computer could simulate
a classical or quantum system exactly. In 1985 Feynman investigated the
notions of reversibility and irreversibility in computation.

Then in 1985 David Deutsch wrote a ground breaking paper entitled "Quantum
Theory, the Church-Turing Principle and the Universal Quantum Computer", in
the Proceedings of the Royal Society in which he replaced Turing's classical
ideas of computation with quantum ideas. Quantum computing was born. Deutsch
also presented the first quantum algorithm using two qubits, Deutsch's
algorithm which was slightly faster than a classical computer.

In 1992, David Deutsch and Richard Josza produced their Deutsch-Josza
algorithm that worked on n qubits. The idea is to be able to distinguish
whether a Boolean function is balanced or constant. Classically this would
take a work factor of $2^{n}$ however the quantum algorithm produced an
exponential speedup.

In 1994 Peter Shor published a paper entitled "Polynomial-Time Algorithms for
Prime Factorization and Discrete Logarithms on a Quantum Computer" that
specified a quantum algorithm to factor large integers using the quantum
Fourier transform that was exponentially faster than classical methods. He
also described a quantum algorithm to solve the discrete log problem.
Factoring and the discrete log problem are at the heart of cryptographic
algorithms that are used to protect internet traffic. If you can easily factor
numbers then you can break the RSA encryption algorithm. If you can easily
solve the discrete log problem then you can easily break the Diffie-Hellman
key exchange method. Shor's paper ignited research and global interest in
quantum computing, both in the unclassified and classifed arenas. The race is
now on to be the first nation to build a real quantum computer that could
implement either of these attacks.

In 1996 Lov Grover wrote a paper entitled "A fast quantum mechanical algorithm
for database search" in which he described a quantum algorithm that could
seach for an item in an $\ N$ long list with a work factor of $O(\sqrt{N}%
\dot{)}$ providing a quadratic speedup over the classical $O(N)$ method. It
was later shown by Zalka (ref 158.) that this algorithm was optimal.

Since Shor's algorithm and Grover's algorithm were published there has been
significant research in the area of quantum computing algorithms. There are
several good survey articles on the quant-ph website.

\section{Outline of Thesis - Amplified Quantum\ Transforms}

The two main algorithms of quantum computing are Grover's search algorithm and
Shor's algorithm for factoring $n=pq$ by using the quantum Fourier transform.
In this thesis we combine Grover's search algorithm with the quantum Fourier
transform to solve the Local Period Problem. We call this new algorithm The
Amplified Quantum Fourier Transform (Amplified-QFT). We show that this new
algorithm solves the Local Period Problem quadratically faster than the
quantum Fourier transform alone.

In Chapter 2 we analyze the Amplified-QFT algorithm when applied to a periodic
oracle. We calculate and compare the probabilities of success of the QFT
algorithm, the quantum hidden subgroup (QHS) algorithm and the\ Amplified-QFT
algorithm. The contents of this chapter are based off the published paper ref[14].

In Chapter 3 we analyze the Amplified-QFT when applied to a periodic oracle
with an error stream and calculate and compare the probabilities of success of
the QFT algorithm, the QHS algorithm and the Amplified-QFT algorithm.

In Chapter 4 we produce an uncertainty principle for the Amplified-QFT algorithm.

In Chapter 5 we show how the one dimensional Amplified Haar Wavelet Transforms
can be used to solve a certain decision problem.

\newpage

\renewcommand{\thechapter}{2}

\chapter{The Amplified Quantum Fourier Transform}

\section{Introduction}

In this chapter we create and analyze a new quantum algorithm called the
Amplified Quantum Fourier Transform (Amplified-QFT) for solving the following problem:

\textbf{The Local Period Problem:}\textit{ Let }$L=\{0,1,...,N-1\}$\textit{ be
a set of }$N$\textit{ labels and let }$A$\textit{ be a subset of }$M$\textit{
labels of period }$P$\textit{, i.e. a subset of the form}%
\[
A=\{j:j=s+rP,r=0,1,...,M-1\}
\]
\textit{ where }$P\leq\sqrt{N\text{ }}$\textit{and }$M<<N$\textit{, and where
}$M$\textit{ is assumed known. Given an oracle }%
\[
f:L\rightarrow\{0,1\}
\]
\textit{ which is }$1$\textit{ on }$A$\textit{ and }$0$\textit{ elsewhere,
find the local period }$P$\textit{ and the offset }$s$\textit{.}

The first part of this chapter provides some background information on
amplitude amplification, period finding and defines the Amplified-QFT
algorithm. The second part of the chapter summarizes the main results and
compares the Amplified-QFT algorithm against the Quantum Fourier Transform
(QFT) and Quantum Hidden Subgroup (QHS) algorithms when solving the local
period problem. It is shown that the Amplified-QFT algorithm is, on average,
quadratically faster than both the QFT and\ QHS algorithms. The third part of
the chapter provides the detailed proofs of the main results, describes the
method of recovering $P$ from an observation $y$ and describes the algorithm
for finding the offset $s$. In the final section of the chapter we provide a
general result where we replace the QFT with a general unitary operator U and
identify what property it must have to produce the same probabilities of
success as the QFT.

\section{Background-Amplitude Amplification}

In ref[4] Lov Grover specified a quantum search algorithm that searched for a
single marked element $x0$ in an $N$ long list $L$. An oracle $f:L\rightarrow
\{0,1\}$ is used to mark the element such that $f(x0)=1$ and $f$ is $0$
elsewhere. Grover's quantum algorithm finds the element with a work factor of
$O(\sqrt{N})$ whereas on a classical computer this would take $O(N)$, thereby
obtaining a quadratic speedup. Grover's algorithm can be summarized as follows:

a) Initialize the state to be the uniform superposition state $|\psi>=H|0>$
where $H$ is the Hadamard transform.

b) Reflect the current state about the plane orthogonal to the state $|x0>$ by
using the operator $(I-2|x0><x0|).$

c)\ Reflect the new state back around $|\psi>$ by using the operator $(2$
$|\psi>$ $<\psi|-I)$. This operator is a reflection about the average of the
amplitudes of the new state.

d) Repeat steps b) and c) $O(\sqrt{N})$ times until most of the probability is
on $|x0>.$

e) Measure the resulting state to obtain $x0.$

Also in ref[4], Grover suggested this algorithm could be extended to the case
of searching for an element in a subset $A$ of $M$ marked elements\ in an $N$
long list $L$. Once again an oracle $f:L\rightarrow\{0,1\}$ is used to mark
the elements of the subset $A$. Grover's algorithm solves this problem with a
work factor of $O(\sqrt{N/M})$. The elements of the set $A$ are sometimes
referred to as "good" and the elements not in $A$ are called "bad". Grover's
algorithm for this problem can be summarized as follows:

a) Initialize the state to be the uniform superposition state $|\psi>=H|0>$
where $H$ is the Hadamard transform.

b) Reflect the current state about the plane orthogonal to the state $|xgood>$
by using the operator $(I-2|xgood><xgood|)$, where $|xgood>$ is the normalized
sum of the good states defined by the set $A$. This changes the sign of the
amplitudes of the good states defined by $A.$

c)\ Reflect the new state back around $|\psi>$ by using the operator $(2$
$|\psi>$ $<\psi|-I)$.

d) Repeat steps b) and c) $O(\sqrt{N/M})$ times until most of the probability
is on the set $A.$

e) Measure the resulting state to obtain an element in the set $A.$

Both versions of Grover's algorithm are also known as Amplitude Amplification
algorithms which are generalized even further in ref [9]. The first part of
the Amplified-QFT algorithm consists of the second of these algorithms, except
for the final measurement step e).

\section{Background-Period Finding}

In ref[3], Peter Shor describes a quantum algorithm to solve the factorization
problem with exponential speed up over classical approaches. He translates the
factorization problem into one of finding the period of the function
$a^{x}ModN$ where $N$ is the number to be factored and $\gcd(a,N)=1$. The
period is found by making use of the QFT. Shor's factorization algorithm is
summarized below:

a) Find $Q:N^{2}\leq Q<2N^{2}$

b) Find $a:\gcd(a,N)=1$

c) Find the period of $a^{x}ModN$ using the QFT and using the $Qth$ root of unity

\qquad- Form the superposition $\frac{1}{\sqrt{Q}}\sum|x>|a^{x}ModN>$

\qquad- Apply the QFT to the first register $|x>\rightarrow\sum\omega^{xy}|y>$

\qquad- Measure $y$

\qquad- Form the continued fraction expansion of $y/Q$ to find $d/P$

\qquad- If $|y/Q-d/P|<1/2N^{2}$ and $\gcd(d,P)=1$ then $\ P$ is recovered

d) If the period is not even start over

e) If $a^{P/2}+1=0ModN$ start over

f) Find $\gcd(a^{p/2}-1,N)$ to find the factor of $N.$

Step c) is the quantum part of Shor's factorization algorithm. We make use of
the QFT and continued fraction expansion method to recover the period $P$ in
the second part of the Amplified-QFT algorithm.

\section{The Amplified\ Quantum Fourier Transform Algorithm}

The Amplified-QFT algorithm solves the Local Period Problem:

\textbf{The Local Period Problem: }Let $L=\{0,1,...,N-1\}$ be a set of $N$
labels and let $A$ be a subset of $M$ labels of period $P$, i.e. a subset of
the form
\[
A=\{j:j=s+rP,r=0,1,...,M-1\}
\]
where $P\leq\sqrt{N\text{ }}$and $M<<N$, and where $M$ is assumed known. Given
an oracle%
\[
f:L\rightarrow\{0,1\}
\]
which is $1$ on $A$ and $0$ elsewhere, find the local period $P$ and the
offset $s$.

The Amplified-QFT algorithm consists of the following steps where steps a)
through d)\ are the Amplitude Amplification steps and steps e) through i) are
the period finding steps that use the QFT:

a) Initialize the state to be the uniform superposition state $|\psi>=H|0>$
where $H$ is the Hadamard transform.

b) Reflect the current state about the plane orthogonal to the state $|xgood>$
by using the operator $(I-2|xgood><xgood|)$, where $|xgood>$ is the normalized
sum of the good states defined by the set $A.$This changes the sign of the
amplitudes of the good states defined by $A.$

c)\ Reflect the new state back around $|\psi>$ by using the operator $(2$
$|\psi>$ $<\psi|-I)$.

d) Repeat steps b) and c) $O(\sqrt{N/M})$ times until most of the probability
is on the set $A.$

e) Apply the QFT\ to the resulting state

f)\ Make a measurement $y$

g) Form the continued fraction expansion of $y/N$ to find $d/P$

h) If $|y/N-d/P|<1/2P^{2}$ and $\gcd(d,P)=1$ then $\ P$ is recovered

i) If $\gcd(d,P)\neq1$ repeat the algorithm starting at step a)

The Amplified-QFT algorithm produces the following states (See later sections
for the detailed analysis of the Amplified-QFT algorithm):

After applying steps b) and c) $k$ times where $k=\left\lfloor \frac{\pi
}{4\sin^{-1}(\sqrt{M/N})}\right\rfloor $ we arrive at the following state:%
\[
|\psi_{k}>=a_{k}\sum_{z\in A}|z>+b_{k}\sum_{z\notin A}|z>
\]

\noindent where%

\[
a_{k}=\frac{1}{\sqrt{M}}\sin(2k+1)\theta,b_{k}=\frac{1}{\sqrt{N-M}}%
\cos(2k+1)\theta
\]

\noindent are the appropriate amplitudes of the states and where the angle
$\theta$ is given by
\[
\sin\theta=\sqrt{M/N},\cos\theta=\sqrt{1-M/N}%
\]

\noindent.

The QFT at step e) performs the following action%

\[
|z>\rightarrow\frac{1}{\sqrt{N}}\sum_{y=0}^{N-1}e^{-2\pi izy/N}|y>
\]

After the application of the QFT to the state $|\psi_{k}>$ , letting
$\omega=e^{-2\pi i/N}$ we arrive at the following sate:%
\[
|\phi_{k}>=\sum_{y=0}^{N-1}\left[  \frac{a_{k}}{\sqrt{N}}\sum_{z\in A}%
\omega^{zy}+\frac{b_{k}}{\sqrt{N}}\sum_{z\notin A}\omega^{zy}\right]  |y>
\]

.

At step f) we measure this state with respect to the standard basis to yield
an integer $y\in\{0,1,...,N-1\}$ from which we can determine the period $P$
using the continued fraction method.

In a later section where we summarize the main results, we provide a table
showing the probabilities of measuring $y$ for the Amplified-QFT algorithm and
compare them against the probabilites obtained by performing the QFT and QHS algorithms.

\section{The QFT Algorithm}

The QFT\ algorithm applied to the Local\ Period Problem does not include the
amplitude amplification steps and consists of the following steps:

a) Initialize the state to be the uniform superposition state $|\psi>=H|0>$
where $H$ is the Hadamard transform.

b) Apply the oracle $f$ to $|\psi>$

c) Apply the QFT to this state

d)\ Make a measurement $y$

e) Form the continued fraction expansion of $y/N$ to find $d/P$

f) If $|y/N-d/P|<1/2P^{2}$ and $\gcd(d,P)=1$ then $\ P$ is recovered

g) If $\gcd(d,P)\neq1$ repeat the algorithm starting at step a)

At step b) after applying the oracle the state is given by (See later sections
for the detailed analysis of the QFT algorithm):%

\[
|\psi_{1}>=\frac{1}{\sqrt{N}}\left[  (-2)\sum_{z\in A}|z>+\sum_{z=0}%
^{N-1}|z>\right]
\]

At step c) the QFT applies the following action:%
\[
|z>\rightarrow\frac{1}{\sqrt{N}}\sum_{y=0}^{N-1}\omega^{zy}|y>
\]

\noindent to get%
\[
|\psi_{2}>=\sum_{y=0}^{N-1}\left[  \frac{(-2)}{N}\sum_{z\in A}\omega
^{zy}+\frac{1}{N}\sum_{z=0}^{N-1}\omega^{zy}\right]  |y>
\]

At step d) we measure this state with respect to the standard basis to yield
an integer $y\in\{0,1,...,N-1\}$ from which we can determine the period $P$
using the continued fraction method. We note that in the QFT algorithm case,
we would have to repeat the algorithm many times to recover the period $P$
because as we will see shortly, most of the probability is on the state $|0>.$

\section{The QHS Algorithm}

The QHS\ algorithm is a two register algorithm and does not include the
amplitude amplification steps. It consists of the following steps:

a) Initialize the state to be the uniform superposition state $|\psi>=H|0>|0>$
where $H$ is the Hadamard transform.

b) Apply the oracle $f$ and put the result into the second register of
$|\psi>$

c) Apply the QFT to the first register of this state

d)\ Make a measurement $y$

e) Form the continued fraction expansion of $y/N$ to find $d/P$

f) If $|y/N-d/P|<1/2P^{2}$ and $\gcd(d,P)=1$ then $\ P$ is recovered

g) If $\gcd(d,P)\neq1$ repeat the algorithm starting at step a)

At step b) we have the following state (See later sections for the detailed
analysis of the QHS algorithm):%

\[
|\psi_{1}>=\frac{1}{\sqrt{N}}%
{\displaystyle\sum\limits_{x=0}^{N-1}}
|x>|f(x)>
\]

After applying the QFT the state is given by:%

\[
|\psi_{2}>=%
{\displaystyle\sum\limits_{y=0}^{N-1}}
\frac{1}{N}|y>\sum_{x=0}^{N-1}\omega^{xy}|f(x)>
\]

At step d) we measure this state with respect to the standard basis to yield
an integer $y\in\{0,1,...,N-1\}$ from which we can determine the period $P$
using the continued fraction method.We note that in the QHS algorithm case, we
would have to repeat the algorithm many times to recover the period $P$
because as we will see shortly, most of the probability is on the state $|0>.$

\section{Summary of\ the Main Results}

We summarize the main results and compare the probability $\Pr(y)$ of
measuring $y$ in the final state arrived at for each of the three algorithms:
1) the Amplified-QFT algorithm 2) the QFT algorithm and 3) the QHS algorithm.
Here $\sin\theta=\sqrt{M/N}$ and $k=\left\lfloor \frac{\pi}{4\theta
}\right\rfloor $ and $0\leq$ $\frac{\sin^{2}(\pi MPy/N)}{\sin^{2}(\pi
Py/N)}\leq M^{2}$.

Case 1 (Amplified-QFT):

The probability $\Pr(y)$ is given exactly by

\begin{center}%
\[
\left\{
\begin{tabular}
[c]{lll}%
$\cos^{2}2k\theta$ & if & $y=0$\\
&  & \\
$tan^{2}\theta\sin^{2}2k\theta$ & if & $Py=0\operatorname{mod}N,y\neq0$\\
&  & \\
$\frac{1}{M^{2}}tan^{2}\theta\sin^{2}2k\theta\frac{\sin^{2}(\pi MPy/N)}%
{\sin^{2}(\pi Py/N)}$ & if & $Py\neq0\operatorname{mod}N\text{and }%
MPy\neq0\operatorname{mod}N$\\
&  & \\
$0$ & if & $Py\neq0\operatorname{mod}N\text{ and }MPy=0\operatorname{mod}N$%
\end{tabular}
\ \ \ \ \ \right\}
\]

\end{center}

Case 2 (QFT):

\bigskip The probability $\Pr(y)$ is given exactly by%

\[
\left\{
\begin{tabular}
[c]{lll}%
$\left(  1-\frac{2M}{N}\right)  ^{2}$ & if & $y=0$\\
&  & \\
$4\frac{M^{2}}{N^{2}}$ & if & $Py=0\operatorname{mod}N,y\neq0$\\
&  & \\
$\frac{4}{N^{2}}\frac{\sin^{2}(\pi MPy/N)}{\sin^{2}(\pi Py/N)}$ & if &
$Py\neq0\operatorname{mod}N\text{and }MPy\neq0\operatorname{mod}N$\\
&  & \\
$0$ & if & $Py\neq0\operatorname{mod}N\text{ and }MPy=0\operatorname{mod}N$%
\end{tabular}
\ \ \ \ \ \right\}
\]

Case 3 (QHS):

\bigskip The probability $\Pr(y)$ is given exactly by%

\[
\left\{
\begin{tabular}
[c]{lll}%
$1-\frac{2M(N-M)}{N^{2}}$ & if & $y=0$\\
&  & \\
$\frac{2M^{2}}{N^{2}}$ & if & $Py=0\operatorname{mod}N,y\neq0$\\
&  & \\
$\frac{2}{N^{2}}\frac{\sin^{2}(\pi MPy/N)}{\sin^{2}(\pi Py/N)}$ & if &
$Py\neq0\operatorname{mod}N\text{and }MPy\neq0\operatorname{mod}N$\\
&  & \\
$0$ & if & $Py\neq0\operatorname{mod}N\text{ and }MPy=0\operatorname{mod}N$%
\end{tabular}
\ \ \ \ \ \ \right\}
\]

We note that for the QFT and QHS algorithms $\Pr(y=0)$ is very close to $1$
because $M<<N.$ In the cases where $y\neq0$ we compare the ratios of $\Pr(y)$
\ in the Amplified-QFT and QFT case and then in the Amplified-QFT and
QHS\ case. Let $y$ be fixed such that either

1. $Py=0\operatorname{mod}N,y\neq0$ or

2. $Py\neq0\operatorname{mod}N$and $MPy\neq0\operatorname{mod}N$

and define $\Pr Ratio(y)=\Pr(y)_{Amplified-QFT}/\Pr(y)_{QFT}$ \ then we have
the following (see the later detailed sections)%

\begin{align*}
\frac{N}{4M}(\frac{N}{N-M})  &  \geq\Pr Ratio(y)\geq\frac{N}{4M}(\frac{N}%
{N-M})(1-\frac{2M}{N})^{2}\\
&  \Longrightarrow\Pr Ratio(y)\approx\frac{N}{4M}%
\end{align*}

and define $\Pr Ratio(y)=\Pr(y)_{Amplified-QFT}/\Pr(y)_{QHS}$ \ then we have
the following%

\begin{align*}
\frac{N}{2M}(\frac{N}{N-M})  &  \geq\Pr Ratio(y)\geq\frac{N}{2M}(\frac{N}%
{N-M})(1-\frac{2M}{N})^{2}\\
&  \Longrightarrow\Pr Ratio(y)\approx\frac{N}{2M}%
\end{align*}

Let $S_{ALG}=\{y:|\frac{y}{N}-\frac{d}{P}|\leq\frac{1}{2P^{2}},(d,P)=1\}$ be
the set of "successful" $y$'s. That is $S_{ALG\text{ }}$consists of those
$y$'s which can be measured after applying one of the three algorithms denoted
by $ALG$ and from which the period $P$ can be recovered by the method of
continued fractions. Note that the set $S_{ALG}$ is the same for each
algorithm. However the probability of this set varies with each algorithm. We
can see from the following that given $y1$ and $y2$, whose probability ratios
satisfy the same inequality, we can add their probabilities to get a new ratio
that satisfies the same inequality. In this way we can add probabilities over
a set on the numerator and denominator and maintain the inequality:
\begin{align*}
A  &  >\frac{P(y1)}{Q(y1)}>B\text{ }and\text{ }A>\frac{P(y2)}{Q(y2)}>B\\
&  \Longrightarrow A>\frac{P(y1)+P(y2)}{Q(y1)+Q(y2)}>B
\end{align*}

\noindent We see from the cases given above that%
\[
\frac{N}{4M}(\frac{N}{N-M})\geq\frac{\Pr(S_{Amplified-QFT})}{\Pr(S_{QFT})}%
\geq\frac{N}{4M}(\frac{N}{N-M})(1-\frac{2M}{N})^{2}%
\]

\noindent where the difference between the upper bound and lower bound is
exactly 1 and that%

\[
\frac{N}{2M}(\frac{N}{N-M})\geq\frac{\Pr(S_{Amplified-QFT})}{\Pr(S_{QHS})}%
\geq\frac{N}{2M}(\frac{N}{N-M})(1-\frac{2M}{N})^{2}%
\]

\noindent where the difference between the upper bound and lower bound is
exactly 2.

This shows that the Amplified-QFT is approximately$\frac{N}{4M}$ times more
successful than the QFT and $\frac{N}{2M}$ times more successful than the QHS
when $M<<N$. In addition it also shows that the QFT\ is $2$ times more
successful than the QHS in this problem. However, the success of the
Amplified-QFT algorithms comes at an increase in work factor of $O(\sqrt
{\frac{N}{M}})$. We note that in the case that P is a prime number that
$(d,P)=1$ is met trivially. However when P is composite the algorithms may
need to be rerun several times until $(d,P)=1$ is satisfied.

Towards the end of the chapter we show how to test whether a putative value of
$P$, given $s$ is known, can be tested to see if it is the correct value. We
also investigate the case where $s$ is unknown but is from a small known set
of values such that the values of $s$ can be exhausted over on a classical
computer. We also show how $s$ can be recovered by using a quantum algorithm
using amplitude amplification followed by a measurement.

\section{The Amplified-QFT\ is Quadratically Faster than the QFT\ or the QHS}

We show that the Amplified-QFT algorithm is, on average, quadratically faster
than the QFT\ or QHS\ algorithms. In order to show this, we use the geometric
probability distribution which provides the probability of the first success
in a sequence of trials where the probability of success is $p$ and the
probability of failure is \thinspace$1-p.$ \ For both the QFT\ and\ QHS
algorithms a trial is one complete execution of the algorithm. Because the
probability of measuring $y=0$ is close to $1$ we expect to have to repeat the
algorithm many times due to failure of measuring a successful $y$, before we
have the first success.

If $X$ is the random variable which counts the number of trials until the
first success then%

\[
P(X=k)=(1-p)^{k-1}p\text{ for }k=1,2...
\]

The expected value $E[X]$ and variance $Var[X]$ are given by:%

\[
E[X]=\frac{1}{p}\text{ and }Var[X]=\frac{1-p}{p^{2}}%
\]

The workfactor of the Amplified-QFT algorithm is given by the number of
iterations of each amplification step followed by a single QFT step:%

\[
O(\sqrt{\frac{N}{M}})
\]

For the QFT algorithm we have the probability of failure $\ 1-p$ is given by%

\[
\Pr(failure)=1-p\geq\Pr(y=0)=(1-\frac{2M}{N})^{2}%
\]

then%

\[
\Pr(success)=p\leq1-(1-\frac{2M}{N})^{2}=\frac{4M}{N}(1-\frac{M}{N})
\]

Then for the QFT algorithm, the expected number of trials until the first
success is%

\[
E[X]=\frac{1}{p}\geq\frac{N}{4M(1-\frac{M}{N})}\geq\frac{N}{4M}%
\]

The workfactor of the QFT\ algorithm is the expected number of times the QFT
has to be run, is given approximately by:%

\[
O(\frac{N}{M})
\]

Therefore the ratio of the expected work factor of the QFT algorithm and the
work factor of the Amplified-QFT is given by%

\[
O(\sqrt{\frac{N}{M}})
\]

showing that the Amplified-QFT\ algorithm is, on average, quadratically faster
than the QFT algorithm.

The variance in the number of times the QFT algorithm is run is given by%

\[
Var[X]=\frac{1-p}{p^{2}}\geq(\frac{N}{N-M})^{2}(\frac{N-2M}{4M})^{2}%
\]

For the QHS algorithm we have the probability of failure $\ 1-p$ is given by%

\[
\Pr(failure)=1-p\geq\Pr(y=0)=1-\frac{2M(N-M)}{N^{2}}%
\]

then%

\[
\Pr(success)=p\leq1-(1-\frac{2M(N-M)}{N^{2}})=\frac{2M}{N}(1-\frac{M}{N})
\]

Then for the QHS algorithm, the expected number of trials until the first
success is%

\[
E[X]=\frac{1}{p}\geq\frac{N}{2M(1-\frac{M}{N})}\geq\frac{N}{2M}%
\]

The workfactor of the QHS\ algorithm is the expected number of times the QHS
has to be run, is given approximately by:%

\[
O(\frac{N}{M})
\]

Therefore the ratio of the expected work factor of the QHS algorithm and the
work factor of the Amplified-QFT is given by%

\[
O(\sqrt{\frac{N}{M}})
\]

showing that the Amplified-QFT\ algorithm is, on average, quadratically faster
than the QHS algorithm.

The variance in the number of times the QHS algorithm is run is given by%

\[
Var[X]=\frac{1-p}{p^{2}}\geq(\frac{N}{N-M})^{2}(\frac{(N-M)^{2}+M^{2}}{4M^{2}%
})
\]

\section{The Amplified-QFT Algorithm - Detailed Analysis}

In this section we examine the Amplified-QFT algorithm in detail and produce
the results for the probability of success that were summarized in an earlier section.

The Amplified-QFT algorithm is defined by the following procedure (see earlier section):

\emph{Steps a) to d):} Apply the Amplitude Amplification algorithm to the
starting state $|0>$. The resulting state is given by $|\psi_{k}>$ (ref[4],
ref[7],ref[1]) where $k=\left\lfloor \frac{\pi}{4\sin^{-1}(\sqrt{M/N}%
)}\right\rfloor $:%
\[
|\psi_{k}>=a_{k}\sum_{z\in A}|z>+b_{k}\sum_{z\notin A}|z>
\]

\noindent where%

\[
a_{k}=\frac{1}{\sqrt{M}}\sin(2k+1)\theta,b_{k}=\frac{1}{\sqrt{N-M}}%
\cos(2k+1)\theta
\]

\noindent are the appropriate amplitudes of the states and where
\[
\sin\theta=\sqrt{M/N},\cos\theta=\sqrt{1-M/N}%
\]

\noindent Now we have , ref[7],

$k=\left\lfloor \frac{\pi}{4\theta}\right\rfloor $ $\Longrightarrow$
$\frac{\pi}{4\theta}-1\leq k\leq\frac{\pi}{4\theta}$ $\Longrightarrow$
$\frac{\pi}{2}-\theta\leq(2k+1)\theta\leq\frac{\pi}{2}+\theta$

$\Longrightarrow\sin\theta=\cos(\frac{\pi}{2}-\theta)\geq\cos(2k+1)\theta
\geq\cos(\frac{\pi}{2}+\theta)=-\sin\theta$

\noindent Notice that the total probability of the N-M labels that are not in
A is%

\begin{align*}
(N-M)(\frac{1}{\sqrt{N-M}}\cos(2k+1)\theta)^{2}  &  =\cos^{2}(2k+1)\theta\\
&  \Longrightarrow\cos^{2}(2k+1)\theta\leq\sin^{2}\theta=\sin^{2}(\sin
^{-1}(\sqrt{\frac{M}{N}}))\\
&  \Longrightarrow\cos^{2}(2k+1)\theta\leq\frac{M}{N}%
\end{align*}
whereas the total probability of the M labels in A is
\begin{align*}
M(\frac{1}{\sqrt{M}}\sin(2k+1)\theta)^{2}  &  =\sin^{2}(2k+1)\theta=1-\cos
^{2}(2k+1)\theta\\
&  \Longrightarrow\sin^{2}(2k+1)\theta\geq1-\frac{M}{N}%
\end{align*}
.

\emph{Step e):} Apply the QFT which performs the following action%

\[
|z>\rightarrow\frac{1}{\sqrt{N}}\sum_{y=0}^{N-1}e^{-2\pi izy/N}|y>
\]

After the application of the QFT to the state $|\psi_{k}>$ , letting
$\omega=e^{-2\pi i/N}$ , we have%
\[
|\phi_{k}>=\frac{a_{k}}{\sqrt{N}}\sum_{z\in A}\sum_{y=0}^{N-1}\omega
^{zy}|y>+\frac{b_{k}}{\sqrt{N}}\sum_{z\notin A}\sum_{y=0}^{N-1}\omega^{zy}|y>
\]

After interchanging the order of summation, we have%
\[
|\phi_{k}>=\sum_{y=0}^{N-1}\left[  \frac{a_{k}}{\sqrt{N}}\sum_{z\in A}%
\omega^{zy}+\frac{b_{k}}{\sqrt{N}}\sum_{z\notin A}\omega^{zy}\right]  |y>
\]

.

\emph{Steps f) to i):} Measure with respect to the standard basis to yield a
integer $y\in\{0,1,...,N-1\}$ from which we can determine the period P using
the continued fraction method.

The amplitude $Amp(y)$ of $|y>$ is given by
\begin{align*}
Amp(y)  &  =\frac{a_{k}}{\sqrt{N}}\sum_{z\in A}\omega^{zy}+\frac{b_{k}}%
{\sqrt{N}}\sum_{z\notin A}\omega^{zy}\\
&  =\frac{(a_{k}-b_{k})}{\sqrt{N}}\sum_{z\in A}\omega^{zy}+\frac{b_{k}}%
{\sqrt{N}}\sum_{z=0}^{N-1}\omega^{zy}\\
&  =\frac{(a_{k}-b_{k})}{\sqrt{N}}\sum_{r=0}^{M-1}\omega^{(s+rP)y}+\frac
{b_{k}}{\sqrt{N}}\sum_{z=0}^{N-1}\omega^{zy}\text{ (A is periodic)}\\
&  =\frac{(a_{k}-b_{k})}{\sqrt{N}}\omega^{sy}\sum_{r=0}^{M-1}\omega
^{rPy}+\frac{b_{k}}{\sqrt{N}}\sum_{z=0}^{N-1}\omega^{zy}%
\end{align*}

We calculate the $\Pr(y)$ for the following cases:

\qquad a) $y=0$

\qquad b) $Py=0\operatorname{mod}N$ and $y\neq0$

\qquad c) $Py\neq0\operatorname{mod}N$

\subsection{Amplified-QFT Analysis: y=0}

\noindent We calculate the probability $\Pr(y=0)$%

\begin{align*}
Amp(y)  &  =\frac{a_{k}}{\sqrt{N}}\sum_{z\in A}\omega^{zy}+\frac{b_{k}}%
{\sqrt{N}}\sum_{z\notin A}\omega^{zy}\\
&  =\frac{1}{\sqrt{N}}(Ma_{k}+(N-M)b_{k})\\
&  =\frac{1}{\sqrt{N}}\left[  \frac{M}{\sqrt{M}}\sin(2k+1)\theta+\frac
{N-M}{\sqrt{N-M}}\cos(2k+1)\theta\right] \\
&  =\sqrt{\frac{M}{N}}\sin(2k+1)\theta+\sqrt{1-\frac{M}{N}}\cos(2k+1)\theta\\
&  =\sin\theta\sin(2k+1)\theta+\cos\theta\cos(2k+1)\theta\\
&  =\cos(2k\theta)
\end{align*}

\noindent We have
\[
\Pr(y=0)=\cos^{2}(2k\theta)
\]

\subsection{Amplified-QFT Analysis: $Py=0\operatorname{mod}N,y\neq0$}

We calculate the probability $\Pr(y)$ in the case where
$Py=0\operatorname{mod}N,y\neq0$

Using the fact that%
\[
\sum_{z=0}^{N-1}\omega^{zy}=\frac{1-\omega^{Ny}}{1-\omega^{y}}=0,w^{y}\neq1
\]

\noindent we have
\begin{align*}
Amp(y)  &  =\frac{(a_{k}-b_{k})}{\sqrt{N}}\omega^{sy}\sum_{r=0}^{M-1}%
\omega^{rPy}+\frac{b_{k}}{\sqrt{N}}\sum_{z=0}^{N-1}\omega^{zy}\\
&  =\frac{(a_{k}-b_{k})}{\sqrt{N}}\omega^{sy}\sum_{r=0}^{M-1}\omega^{rPy}\\
&  =\frac{(a_{k}-b_{k})}{\sqrt{N}}\omega^{sy}M\\
&  =\frac{Mw^{sy}}{\sqrt{NM}}\sin(2k+1)\theta-\frac{Mw^{sy}}{\sqrt{N(N-M)}%
}\cos(2k+1)\theta\\
&  =\omega^{sy}\sqrt{\frac{M}{N}}(\sin(2k+1)\theta-\sqrt{\frac{M/N}{1-M/N}%
}\cos(2k+1)\theta)\\
&  =\omega^{sy}\sqrt{\frac{M}{N}}(\sin(2k+1)\theta-\frac{\sin\theta}%
{\cos\theta}\cos(2k+1)\theta)\\
&  =\omega^{sy}\tan\theta\sin2k\theta
\end{align*}

\noindent We have the probability $\Pr(y)$ in the case where
$Py=0\operatorname{mod}N,y\neq0$ is given by
\[
\Pr(y)=tan^{2}\theta\sin^{2}2k\theta
\]

\noindent Using $k=\left\lfloor \frac{\pi}{4\theta}\right\rfloor $
$\Longrightarrow$ $\frac{\pi}{4\theta}-1\leq k\leq\frac{\pi}{4\theta}$
$\Longrightarrow$ $\frac{\pi}{2}-2\theta\leq2k\theta\leq\frac{\pi}%
{2}\Longrightarrow\sin(\frac{\pi}{2}-2\theta)\leq\sin2k\theta\leq1$ we have
the following inequality for the probability $\Pr(y)$ in the case where
$Py=0\operatorname{mod}N,y\neq0$
\begin{align*}
\frac{\sin^{2}\theta}{\cos^{2}\theta}  &  \geq\Pr(y)=\tan^{2}\theta\sin
^{2}2k\theta\geq\tan^{2}\theta\sin^{2}(\frac{\pi}{2}-2\theta)\\
&  \Longrightarrow\frac{M}{N}\frac{1}{1-\frac{M}{N}}\geq\Pr(y)\geq\tan
^{2}\theta\sin^{2}(\frac{\pi}{2}-2\theta)\\
&  \Longrightarrow\frac{M}{N}(\frac{N}{N-M})\geq\Pr(y)\geq\frac{\sin^{2}%
\theta}{\cos^{2}\theta}\cos^{2}2\theta\\
&  \Longrightarrow\frac{M}{N}(\frac{N}{N-M})\geq\Pr(y)\geq\frac{\sin^{2}%
\theta}{\cos^{2}\theta}(2\cos^{2}\theta-1)^{2}\\
&  \Longrightarrow\frac{M}{N}(\frac{N}{N-M})\geq\Pr(y)\geq\frac{M}{N}(\frac
{N}{N-M})(1-\frac{2M}{N})^{2}%
\end{align*}

\subsection{Amplified-QFT Analysis: $Py\neq0\operatorname{mod}N$}

\noindent We calculate $\Pr(y)$ in the case where $Py\neq0\operatorname{mod}%
N.$

Making use of the previous results we have
\begin{align*}
Amp(y)  &  =\frac{(a_{k}-b_{k})}{\sqrt{N}}\omega^{sy}\sum_{r=0}^{M-1}%
\omega^{rPy}+\frac{b_{k}}{\sqrt{N}}\sum_{z=0}^{N-1}\omega^{zy}\\
&  =\frac{(a_{k}-b_{k})}{\sqrt{N}}\omega^{sy}\sum_{r=0}^{M-1}\omega^{rPy}\\
&  =\frac{(a_{k}-b_{k})}{\sqrt{N}}\omega^{sy}\left[  \frac{1-\omega^{MPy}%
}{1-\omega^{Py}}\right] \\
&  =\frac{1}{M}\frac{(a_{k}-b_{k})}{\sqrt{N}}\omega^{sy}M\left[
\frac{1-\omega^{MPy}}{1-\omega^{Py}}\right] \\
&  =\frac{1}{M}\omega^{sy}\tan\theta\sin2k\theta\left[  \frac{1-\omega^{MPy}%
}{1-\omega^{Py}}\right]
\end{align*}

\noindent\noindent\noindent Making use of the following identity%
\[
|1-e^{i\theta}|^{2}=4\sin^{2}(\theta/2)
\]

\noindent we have%
\[
\left\vert \frac{1-\omega^{MPy}}{1-\omega^{Py}}\right\vert ^{2}=\frac{\sin
^{2}(\pi MPy/N)}{\sin^{2}(\pi Py/N)}%
\]

\noindent and so the probability $\Pr(y)$ in the case where $Py\neq
0\operatorname{mod}N$ is given by
\[
\Pr(y)=\frac{1}{M^{2}}tan^{2}\theta\sin^{2}2k\theta\frac{\sin^{2}(\pi
MPy/N)}{\sin^{2}(\pi Py/N)}%
\]

\noindent Using the previous result $\frac{M}{N}(\frac{N}{N-M})\geq\tan
^{2}\theta\sin^{2}2k\theta\geq\frac{M}{N}(\frac{N}{N-M})(\frac{N-2M}{N})^{2}$
and letting $R=\frac{\sin^{2}(\pi MPy/N)}{\sin^{2}(\pi Py/N)}$ we have

\bigskip%
\begin{align*}
\frac{1}{M^{2}}\frac{M}{N}(\frac{N}{N-M})R  &  \geq\Pr(y)\geq\frac{1}{M^{2}%
}\frac{M}{N}(\frac{N}{N-M})(1-\frac{2M}{N})^{2}R\text{ and so}\\
\frac{1}{NM}(\frac{N}{N-M})R  &  \geq\Pr(y)\geq\frac{1}{NM}(\frac{N}%
{N-M})(1-\frac{2M}{N})^{2}R\text{ }%
\end{align*}

\noindent We notice that if in addition $MPy=0\operatorname{mod}N$ then
$\Pr(y)=0.$

\subsection{Amplified-QFT Summary}

The probability $\Pr(y)$ for the Amplified-QFT is summarized in the following
table and is given exactly by

\begin{center}%
\[
\left\{
\begin{tabular}
[c]{lll}%
$\cos^{2}2k\theta$ & if & $y=0$\\
&  & \\
$tan^{2}\theta\sin^{2}2k\theta$ & if & $Py=0\operatorname{mod}N,y\neq0$\\
&  & \\
$\frac{1}{M^{2}}tan^{2}\theta\sin^{2}2k\theta\frac{\sin^{2}(\pi MPy/N)}%
{\sin^{2}(\pi Py/N)}$ & if & $Py\neq0\operatorname{mod}N\text{and }%
MPy\neq0\operatorname{mod}N$\\
&  & \\
$0$ & if & $Py\neq0\operatorname{mod}N\text{ and }MPy=0\operatorname{mod}N$%
\end{tabular}
\ \ \ \ \ \right\}
\]

\end{center}

\section{The QFT Algorithm - Detailed Analysis.}

In this section we examine the QFT algorithm in detail and produce the results
for the probability of success that were summarized earlier in the paper. We
just apply the QFT to the binary oracle f, which is 1 on A and 0 elsewhere.

We begin with the following state%

\[
|\xi>=\frac{1}{\sqrt{N}}\sum_{z=0}^{N-1}|z>\otimes\frac{1}{\sqrt{2}}(|0>-|1>)
\]
\bigskip

\noindent and apply the unitary transform for f, $U_{f}$ ,\ to this state
which performs the following action:%

\[
U_{f}|z>|c>=|z>|c\oplus f(z)>
\]

\noindent to get the state $|\psi>$%

\begin{align*}
|\psi &  >=U_{f}\frac{1}{\sqrt{N}}\sum_{z=0}^{N-1}|z>\frac{1}{\sqrt{2}%
}(|0>-|1>)\\
&  =\frac{1}{\sqrt{N}}\left[  (-1)\sum_{z\in A}|z>+\sum_{z\notin A}|z>\right]
\frac{1}{\sqrt{2}}(|0>-|1>)\\
&  =\frac{1}{\sqrt{N}}\left[  (-2)\sum_{z\in A}|z>+\sum_{z=0}^{N-1}|z>\right]
\frac{1}{\sqrt{2}}(|0>-|1>)
\end{align*}

\noindent Next we apply the QFT to try to find the period P, dropping
$\frac{1}{\sqrt{2}}(|0>-|1>)$.

\noindent The QFT applies the following action:%
\[
|z>\rightarrow\frac{1}{\sqrt{N}}\sum_{y=0}^{N-1}\omega^{zy}|y>
\]

\noindent to get%
\[
|\phi>=\sum_{y=0}^{N-1}\left[  \frac{(-2)}{N}\sum_{z\in A}\omega^{zy}+\frac
{1}{N}\sum_{z=0}^{N-1}\omega^{zy}\right]  |y>
\]

We calculate the $\Pr(y)$ for the following cases:

\qquad a) $y=0$

\qquad b) $Py=0\operatorname{mod}N$ and $y\neq0$

\qquad c) $Py\neq0\operatorname{mod}N$

\subsection{QFT Analysis: $y=0$}

We calculate the probability $\Pr(y=0).$

We have%

\begin{align*}
Amp(y)  &  =\frac{(-2)}{N}\sum_{z\in A}\omega^{zy}+\frac{1}{N}\sum_{z=0}%
^{N-1}\omega^{zy}\\
&  =\frac{(-2)M}{N}+\frac{N}{N}\\
&  =1-\frac{2M}{N}%
\end{align*}

\noindent Therefore, in the QFT case, we have $\Pr(y=0)$ is very close to 1
and is given by
\[
\Pr(y=0)=1-\frac{4M}{N}+4\frac{M^{2}}{N^{2}}=\left(  1-\frac{2M}{N}\right)
^{2}%
\]

\noindent whereas in the Amplified-QFT case we have $\Pr(y=0)$ is given by
\[
\Pr(y=0)=\cos^{2}2k\theta
\]

\subsection{QFT Analysis: $Py=0\operatorname{mod}N,y\neq0$}

We calculate the probability $\Pr(y)$ where $Py=0\operatorname{mod}N,y\neq0.$

Using the fact that%
\[
\sum_{z=0}^{N-1}\omega^{zy}=\frac{1-\omega^{Ny}}{1-\omega^{y}}=0
\]

\noindent we have
\begin{align*}
Amp(y)  &  =\ \frac{-2}{N}\sum_{z\in A}\omega^{zy}+\frac{1}{N}\sum_{z=0}%
^{N-1}\omega^{zy}\\
&  =\frac{-2}{N}\omega^{sy}\sum_{r=0}^{M-1}\omega^{rPy}\\
&  =\frac{-2M}{N}\omega^{sy}%
\end{align*}

\noindent Therefore in the QFT$\ $case we have $\Pr(y)$ where
$Py=0\operatorname{mod}N,y\neq0$ is given by
\[
\Pr(y)=4\frac{M^{2}}{N^{2}}%
\]

\noindent which is small as $M<<N$, whereas in the Amplified-QFT case we have
$\Pr(y)$ is given by \
\[
\Pr(y)=tan^{2}\theta\sin^{2}2k\theta
\]

\noindent We can determine how the increase in amplitude varies with the
number of iterations $k$ of the Grover step in the Amplified-QFT by examining
the ratio of the amplitudes of the Amplified-QFT case and QFT\ case. This
ratio is given exactly by
\begin{align*}
AmpRatio(y)  &  =\frac{\frac{(a_{k}-b_{k})}{\sqrt{N}}\omega^{sy}M}{\frac
{-2M}{N}\omega^{sy}}\\
&  =\frac{(a_{k}-b_{k})}{-2}\sqrt{N}\\
&  =\frac{1}{-2}\left[  \sqrt{\frac{N}{M}}\sin(2k+1)\theta-\sqrt{\frac{N}%
{N-M}}\cos(2k+1)\theta\right] \\
&  =\frac{N}{-2M}\tan\theta\sin2k\theta
\end{align*}

\noindent We have the following for the probability ratio $\Pr Ratio(y),$ the
increase in probability due to amplification%

\[
\Pr Ratio(y)=\frac{N^{2}\tan^{2}\theta\sin^{2}2k\theta}{4M^{2}}%
\]

Using $k=\left\lfloor \frac{\pi}{4\theta}\right\rfloor $ and making use of
\[
\frac{M}{N}(\frac{N}{N-M})\geq\tan^{2}\theta\sin^{2}2k\theta\geq\frac{M}%
{N}(\frac{N}{N-M})(\frac{N-2M}{N})^{2}%
\]
we have the following inequality for the $\Pr Ratio(y)$:%

\begin{align*}
\frac{N}{4M}(\frac{N}{N-M})  &  \geq\Pr Ratio(y)\geq\frac{N}{4M}(\frac{N}%
{N-M})(1-\frac{2M}{N})^{2}\\
&  \Longrightarrow\Pr Ratio(y)\approx\frac{N}{4M}%
\end{align*}

\subsection{QFT Analysis: $Py\neq0\operatorname{mod}N$}

We calculate the probability $\Pr(y)$ in the case where $Py\neq
0\operatorname{mod}N.$

We have
\begin{align*}
Amp(y)  &  =\ \frac{-2}{N}\sum_{z\in A}\omega^{zy}+\frac{1}{N}\sum_{z=0}%
^{N-1}\omega^{zy}\\
&  =\frac{-2}{N}w^{sy}\sum_{r=0}^{M-1}\omega^{rPy}\\
&  =\frac{-2}{N}w^{sy}\left[  \frac{1-\omega^{MPy}}{1-\omega^{Py}}\right] \\
&  =\frac{-2}{N}w^{sy}\left[  \frac{1-\omega^{MPy}}{1-\omega^{Py}}\right]
\end{align*}

\noindent\noindent Once again, making use of the following identity%
\[
|1-e^{i\theta}|^{2}=4\sin^{2}(\theta/2)
\]

\noindent in the QFT\ case, we have $\Pr(y)$ where $Py\neq0\operatorname{mod}%
N.$is given by%

\[
\Pr(y)=\frac{4}{N^{2}}\left[  \frac{\sin^{2}(\pi MPy/N)}{\sin^{2}(\pi
Py/N)}\right]
\]

\noindent whereas in the Amplified-QFT case we have $\Pr(y)$ is given by
\[
\Pr(y)=\frac{1}{M^{2}}tan^{2}\theta\sin^{2}2k\theta\frac{\sin^{2}(\pi
MPy/N)}{\sin^{2}(\pi Py/N)}%
\]

We note that%

\[
0\leq\frac{\sin^{2}(\pi MPy/N)}{\sin^{2}(\pi Py/N)}\leq M^{2}%
\]

\noindent We notice that if in addition $MPy=0\operatorname{mod}N$ then
$\Pr(y)=0.$

\noindent The ratio of the amplitudes of the Amplified-QFT case and QFT\ case
is given exactly by
\begin{align*}
AmpRatio(y)  &  =\frac{\frac{(a_{k}-b_{k})}{\sqrt{N}}\omega^{sy}\left[
\frac{1-\omega^{MPy}}{1-\omega^{Py}}\right]  }{\frac{-2}{N}w^{sy}\left[
\frac{1-\omega^{MPy}}{1-\omega^{Py}}\right]  }\\
&  =\frac{(a_{k}-b_{k})}{-2}\sqrt{N}\\
&  =\frac{1}{-2}\left[  \sqrt{\frac{N}{M}}\sin(2k+1)\theta-\sqrt{\frac{N}%
{N-M}}\cos(2k+1)\theta\right] \\
&  =\frac{N}{-2M}\tan\theta\sin2k\theta
\end{align*}

\noindent We note that this ratio is the same as in that given in the previous
section and is independent of $y$. The variables in this ratio do not depend
in anyway on the QFT.

We have the following for the probability ratio $\Pr Ratio(y),$ the increase
in probability due to amplification%

\[
\Pr Ratio(y)=\frac{N^{2}\tan^{2}\theta\sin^{2}2k\theta}{4M^{2}}%
\]

As in the previous section, we have the following inequality for the $\Pr
Ratio(y)$, the increase in the probability due to amplification when
$k=\left\lfloor \frac{\pi}{4\theta}\right\rfloor $ and making use of $\frac
{M}{N}(\frac{N}{N-M})\geq\tan^{2}\theta\sin^{2}2k\theta\geq\frac{M}{N}%
(\frac{N}{N-M})(\frac{N-2M}{N})^{2}$%

\begin{align*}
\frac{N}{4M}(\frac{N}{N-M})  &  \geq\Pr Ratio(y)\geq\frac{N}{4M}(\frac{N}%
{N-M})(1-\frac{2M}{N})^{2}\\
&  \Longrightarrow\Pr Ratio(y)\approx\frac{N}{4M}%
\end{align*}

\subsection{QFT Summary}

The probability $\Pr(y)$ for the QFT is summarized in the following table and
is given exactly by%

\[
\left\{
\begin{tabular}
[c]{lll}%
$\left(  1-\frac{2M}{N}\right)  ^{2}$ & if & $y=0$\\
&  & \\
$4\frac{M^{2}}{N^{2}}$ & if & $Py=0\operatorname{mod}N,y\neq0$\\
&  & \\
$\frac{4}{N^{2}}\frac{\sin^{2}(\pi MPy/N)}{\sin^{2}(\pi Py/N)}$ & if &
$Py\neq0\operatorname{mod}N\text{and }MPy\neq0\operatorname{mod}N$\\
&  & \\
$0$ & if & $Py\neq0\operatorname{mod}N\text{ and }MPy=0\operatorname{mod}N$%
\end{tabular}
\ \ \ \ \right\}
\]

\section{The QHS Algorithm - Detailed Analysis}

In this section we examine the QHS algorithm in detail and produce the results
for the probability of success that were summarized earlier in the paper. The
QHS algorithm is a two register algorithm as follows (see ref[13] for
details). We begin with $|0>|0>$ where the first register is $n$ qubits and
the second register is $1$ qubit and apply the Hadamard transform to the first
register to get a uniform superposition state, followed by the unitary
transformation for the Oracle f to get:%

\[
|\psi>=\frac{1}{\sqrt{N}}%
{\displaystyle\sum\limits_{x=0}^{N-1}}
|x>|f(x)>
\]
Next we apply the QFT to the first register to get%

\begin{align*}
|\psi &  >=\frac{1}{\sqrt{N}}%
{\displaystyle\sum\limits_{x=0}^{N-1}}
\frac{1}{\sqrt{N}}\sum_{y=0}^{N-1}\omega^{xy}|y>|f(x)>\\
&  =%
{\displaystyle\sum\limits_{y=0}^{N-1}}
\frac{1}{N}\sum_{x=0}^{N-1}\omega^{xy}|y>|f(x)>\\
&  =%
{\displaystyle\sum\limits_{y=0}^{N-1}}
\frac{1}{N}|y>\sum_{x=0}^{N-1}\omega^{xy}|f(x)>\\
&  =\
{\displaystyle\sum\limits_{y=0}^{N-1}}
\frac{|||\Gamma(y)>||}{N}|y>\frac{|\Gamma(y)>}{|||\Gamma(y)>||}%
\end{align*}

\noindent where
\begin{align*}
|\Gamma(y)  &  >=\sum_{x=0}^{N-1}\omega^{xy}|f(x)>\\
&  =\sum_{x\in A}^{{}}\omega^{xy}|1>+\sum_{x\notin A}^{{}}\omega^{xy}|0>
\end{align*}

\noindent and where
\[
|||\Gamma(y)>||^{2}=\left\vert \sum_{x\in A}^{{}}\omega^{xy}\right\vert
^{2}+\left\vert \sum_{x\notin A}^{{}}\omega^{xy}\right\vert ^{2}%
\]

\noindent Next we make a measurement to get $y$ and find that the probability
of this measurement is%
\begin{align*}
\Pr(y)  &  =\frac{|||\Gamma(y)>||^{2}}{N^{2}}\\
&  =\frac{1}{N^{2}}\left\vert \sum_{x\in A}^{{}}\omega^{xy}\right\vert
^{2}+\frac{1}{N^{2}}\left\vert \sum_{x\notin A}^{{}}\omega^{xy}\right\vert
^{2}%
\end{align*}

\noindent The state that we end up in is of the form
\[
|\phi>=|y>\frac{|\Gamma(y)>}{|||\Gamma(y)>||}%
\]

We calculate the $\Pr(y)$ for the following cases:

\qquad a) $y=0$

\qquad b) $Py=0\operatorname{mod}N$ and $y\neq0$

\qquad c) $Py\neq0\operatorname{mod}N$

\subsection{QHS Analysis: $y=0$}

We calculate $\Pr(y=0)$

We have%
\begin{align*}
\Pr(y)  &  =\frac{1}{N^{2}}\left\vert \sum_{x\in A}^{{}}\omega^{xy}\right\vert
^{2}+\frac{1}{N^{2}}\left\vert \sum_{x\notin A}^{{}}\omega^{xy}\right\vert
^{2}\\
&  =\frac{M^{2}}{N^{2}}+\frac{(N-M)^{2}}{N^{2}}=\frac{M^{2}+N^{2}-2NM+M^{2}%
}{N^{2}}\\
&  =1-\frac{2M(N-M)}{N^{2}}%
\end{align*}

\noindent which is close to 1, whereas in the Amplified-QFT case we have
$\Pr(y=0)$ is given by
\[
\Pr(y=0)=\cos^{2}2k\theta
\]

\subsection{QHS Analysis: $Py=0\operatorname{mod}N,y\neq0$}

We calculate $\Pr(y)$ where $Py=0\operatorname{mod}N,y\neq0.$

We have%
\begin{align*}
\Pr(y)  &  =\frac{1}{N^{2}}\left\vert \sum_{x\in A}^{{}}\omega^{xy}\right\vert
^{2}+\frac{1}{N^{2}}\left\vert \sum_{x\notin A}^{{}}\omega^{xy}\right\vert
^{2}\\
&  =\frac{1}{N^{2}}\left\vert \omega^{sy}\sum_{r=0}^{M-1}\omega^{rPy}%
\right\vert ^{2}+\frac{1}{N^{2}}\left\vert \sum_{x\notin A}^{{}}\omega
^{xy}\right\vert ^{2}\\
&  =\frac{1}{N^{2}}\left\vert \omega^{sy}\sum_{r=0}^{M-1}\omega^{rPy}%
\right\vert ^{2}+\frac{1}{N^{2}}\left\vert -\omega^{sy}\sum_{r=0}^{M-1}%
\omega^{rPy}+\frac{1}{N}\sum_{x=0}^{N-1}\omega^{xy}\right\vert ^{2}\\
&  =\frac{2M^{2}}{N^{2}}%
\end{align*}

\noindent\noindent which is small because $M<<N$ and where we have used the
fact that
\[
\sum_{x=0}^{N-1}\omega^{xy}=0
\]

In the Amplified-QFT case we have $\Pr(y)$ is given by \
\[
\Pr(y)=tan^{2}\theta\sin^{2}2k\theta
\]

We have $\Pr Ratio(y)=\Pr(y)_{Amplified-QFT}/\Pr(y)_{QHS}$, the increase in
the probability due to amplification is given by%

\[
\frac{N^{2}\tan^{2}\theta\sin^{2}2k\theta}{2M^{2}}%
\]

We have the following inequality for the
\[
\Pr Ratio(y)=\Pr(y)_{Amplified-QFT}/\Pr(y)_{QHS}%
\]
where $k=\left\lfloor \frac{\pi}{4\theta}\right\rfloor $ and making use of
$\frac{M}{N}(\frac{N}{N-M})\geq\tan^{2}\theta\sin^{2}2k\theta\geq\frac{M}%
{N}(\frac{N}{N-M})(\frac{N-2M}{N})^{2}$%

\begin{align*}
\frac{N}{2M}(\frac{N}{N-M})  &  \geq\Pr Ratio(y)\geq\frac{N}{2M}(\frac{N}%
{N-M})(1-\frac{2M}{N})^{2}\\
&  \Longrightarrow\Pr Ratio(y)\approx\frac{N}{2M}%
\end{align*}

\subsection{QHS Analysis: $Py\neq0\operatorname{mod}N$}

We calculate $\Pr(y)$ where $Py\neq0\operatorname{mod}N.$

We have%
\begin{align*}
\Pr(y)  &  =\frac{1}{N^{2}}\left\vert \sum_{x\in A}^{{}}\omega^{xy}\right\vert
^{2}+\frac{1}{N^{2}}\left\vert \sum_{x\notin A}^{{}}\omega^{xy}\right\vert
^{2}\\
&  =\frac{1}{N^{2}}\left\vert \omega^{sy}\sum_{r=0}^{M-1}\omega^{rPy}%
\right\vert ^{2}+\frac{1}{N^{2}}\left\vert \sum_{x\notin A}^{{}}\omega
^{xy}\right\vert ^{2}\\
&  =\frac{1}{N^{2}}\left\vert \omega^{sy}\sum_{r=0}^{M-1}\omega^{rPy}%
\right\vert ^{2}+\frac{1}{N^{2}}\left\vert -\omega^{sy}\sum_{r=0}^{M-1}%
\omega^{rPy}+\frac{1}{N}\sum_{x=0}^{N-1}\omega^{xy}\right\vert ^{2}\\
&  =\frac{1}{N^{2}}\left\vert \omega^{sy}\left[  \frac{1-\omega^{MPy}%
}{1-\omega^{Py}}\right]  \right\vert ^{2}+\frac{1}{N^{2}}\left\vert
-\omega^{sy}\left[  \frac{1-\omega^{MPy}}{1-\omega^{Py}}\right]  \right\vert
^{2}\\
&  =\frac{2}{N^{2}}\frac{\sin^{2}(\pi MPy/N)}{\sin^{2}(\pi Py/N)}%
\end{align*}

\noindent\noindent where we have used the fact that
\[
\sum_{x=0}^{N-1}\omega^{xy}=0
\]

\noindent and that%

\[
|1-e^{i\theta}|^{2}=4\sin^{2}(\theta/2)
\]
In the Amplified-QFT case we have $\Pr(y)$ is given by
\[
\Pr(y)=\frac{1}{M^{2}}tan^{2}\theta\sin^{2}2k\theta\frac{\sin^{2}(\pi
MPy/N)}{\sin^{2}(\pi Py/N)}%
\]

We note that
\[
0\leq\frac{\sin^{2}(\pi MPy/N)}{\sin^{2}(\pi Py/N)}\leq M^{2}%
\]

We notice that if in addition $MPy=0\operatorname{mod}N$ then $\Pr(y)=0.$

We have the $\Pr Ratio(y)=\Pr(y)_{Amplified-QFT}/\Pr(y)_{QHS}$, the increase
in the probability due to amplification is given by%

\[
\Pr Ratio(y)=\frac{N^{2}\tan^{2}\theta\sin^{2}2k\theta}{2M^{2}}%
\]

We have the following inequality for the $\Pr Ratio(y)$ where $k=\left\lfloor
\frac{\pi}{4\theta}\right\rfloor $ and making use of $\frac{M}{N}(\frac
{N}{N-M})\geq\tan^{2}\theta\sin^{2}2k\theta\geq\frac{M}{N}(\frac{N}%
{N-M})(\frac{N-2M}{N})^{2}$%

\begin{align*}
\frac{N}{2M}(\frac{N}{N-M})  &  \geq\Pr Ratio(y)\geq\frac{N}{2M}(\frac{N}%
{N-M})(1-\frac{2M}{N})^{2}\\
&  \Longrightarrow\Pr Ratio(y)\approx\frac{N}{2M}%
\end{align*}

\subsection{QHS Summary}

We summarize the results for the QHS case. The $\Pr(y)$ is given exactly by:%

\[
\left\{
\begin{tabular}
[c]{lll}%
$1-\frac{2M(N-M)}{N^{2}}$ & if & $y=0$\\
&  & \\
$\frac{2M^{2}}{N^{2}}$ & if & $Py=0\operatorname{mod}N,y\neq0$\\
&  & \\
$\frac{2}{N^{2}}\frac{\sin^{2}(\pi MPy/N)}{\sin^{2}(\pi Py/N)}$ & if &
$Py\neq0\operatorname{mod}N\text{and }MPy\neq0\operatorname{mod}N$\\
&  & \\
$0$ & if & $Py\neq0\operatorname{mod}N\text{ and }MPy=0\operatorname{mod}N$%
\end{tabular}
\ \ \ \ \ \ \ \right\}
\]

\section{Recovering the Period P and the Offset s}

As in Shor's algorithm, we use the continued fraction expansion of $y/N$ to
find the period $P,$where $y$ is a measured value such that $y/N$ is close to
$d/P$ and $(d,P)=1$ . See ref[2] and ref[3]for details which we provide below.

Let$\{a\}_{N}$ be the residue of $a\operatorname{mod}N$ of smallest magnitude
such that $-N/2<\{a\}_{N}<N/2.$ Let $S_{N}=\{0,1,...,N-1\}$, $S_{P}=\{d\in
S_{N}:0\leq d<P\}$ and $Y=\{y\in S_{N}:|\{Py\}_{N}|\leq P/2\}$. Then the map
$Y\rightarrow S_{P}$ given by $y\rightarrow d=d(y)=round(Py/N)$ with inverse
$y=y(d)=round(Nd/P)$ is a bijection and $\{Py\}_{N}=Py-Nd(y)$. In addition the
following two sets are in 1-1 correspondence $\{y/N:y\in Y\}$ and $\{d/P:0\leq
d<P\}.$

We make use of the following theorem from the theory of continued fractions
ref[5] (Theorem 184 p.153):

\begin{theorem}
Let $x$ be a real number and let $a$ and $b$ be integers with $b>0$. If
$|x-\frac{a}{b}|\leq\frac{1}{2b^{2}}$ then the rational $a/b$ is a convergent
of the continued fraction expansion of $x$.
\end{theorem}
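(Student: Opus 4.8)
The plan is to prove this by the classical Legendre argument, exploiting the non-uniqueness of the finite continued fraction representation of a rational to control a parity (sign) choice. Throughout I would use the standard machinery of convergents developed in ref[5]: writing $p_i/q_i = [a_0; a_1,\dots,a_i]$ for the $i$-th convergent of a continued fraction, one has the determinant relation $p_i q_{i-1} - p_{i-1} q_i = (-1)^{i-1}$, the denominators satisfy $q_{-1}=0 < q_0 = 1 \le q_1 < q_2 < \cdots$, and for any real $\omega > 1$ the quantity $\frac{\omega p_i + p_{i-1}}{\omega q_i + q_{i-1}}$ has $p_i/q_i$ as its $i$-th convergent.

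First I would reduce to the coprime case. If $a/b = a'/b'$ in lowest terms then $b' \le b$ and $\left|x - \frac{a'}{b'}\right| = \left|x - \frac{a}{b}\right| \le \frac{1}{2b^2} \le \frac{1}{2b'^2}$, so it suffices to treat the reduced fraction, which represents the same rational number. Writing this reduced $a/b$ as a finite continued fraction $[a_0; a_1, \dots, a_n]$ with final convergent $p_n/q_n = a/b$ and $q_n = b$, I would then invoke the identity $[a_0;\dots,a_n] = [a_0;\dots,a_n-1,1]$ (and its converse) to make the length $n$ have whichever parity I need. This parity freedom is the crux of the whole argument.

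Next I would introduce the (so far formal) complete quotient $\omega$ via $x = \frac{\omega p_n + p_{n-1}}{\omega q_n + q_{n-1}}$ and compute, using the determinant relation,
\[
x - \frac{p_n}{q_n} = \frac{(-1)^n}{q_n(\omega q_n + q_{n-1})}.
\]
Choosing the parity of $n$ so that $(-1)^n$ agrees with the sign of $x - p_n/q_n$ forces $\omega q_n + q_{n-1} > 0$, and then the hypothesis $\left|x - \frac{p_n}{q_n}\right| \le \frac{1}{2q_n^2}$ rearranges to $\omega q_n + q_{n-1} \ge 2 q_n$, i.e. $\omega \ge 2 - q_{n-1}/q_n$, which exceeds $1$ because $q_{n-1} < q_n$ outside the degenerate cases noted below. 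With $\omega > 1$ in hand, expanding $\omega$ into its own regular continued fraction exhibits $x = [a_0; a_1, \dots, a_n, \dots]$, so $p_n/q_n = a/b$ is genuinely a convergent of $x$, as required.

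The main obstacle I expect is the careful handling of the degenerate and boundary cases rather than the central inequality. I must confirm that the parity freedom is truly available for every reduced $a/b$ (including integers, and the case $n=1$ with $a_1 = 1$ where $q_{n-1} = q_n$ can occur), and I must examine the equality $\left|x - \frac{a}{b}\right| = \frac{1}{2b^2}$, where the bound yields only $\omega \ge 1$. In that borderline situation $\omega = 1$ simply produces the alternative representation $[a_0;\dots,a_n+1]$ of the same convergent, so the conclusion still holds, but making this rigorous requires a short separate verification.
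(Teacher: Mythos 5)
The paper never proves this statement: it is imported verbatim from ref[5] (Hardy and Wright, Theorem 184, p.~153) and used as a black box, the only proof supplied in the paper being that of the corollary which applies it to $y/N$ and $d(y)/P$. So there is no paper proof for yours to match or diverge from --- your proposal supplies exactly the argument the paper outsources. On its merits, what you outline is the classical Legendre proof and it is correct: the reduction to lowest terms, the parity adjustment via $[a_0;\dots,a_n]=[a_0;\dots,a_n-1,1]$, the identity $x-\frac{p_n}{q_n}=\frac{(-1)^n}{q_n(\omega q_n+q_{n-1})}$, the deduction $\omega\geq 2-q_{n-1}/q_n>1$, and the concatenation of the expansion of $\omega$ onto $[a_0;\dots,a_n]$ are precisely the right steps. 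Two points should be nailed down in a final write-up. First, if $x=a/b$ the complete quotient $\omega$ is undefined; the case is trivial ($a/b$ is the last convergent of its own expansion) but must be dispatched separately. Second, your caution about the equality case is not pedantry but is forced: the paper states the theorem with $\leq$, whereas the classical form uses strict inequality, and at the boundary the canonical expansion can genuinely fail. For $x=\tfrac{1}{2}$ and $a/b=1/1$ one has $|x-a/b|=\tfrac{1}{2}=\tfrac{1}{2b^{2}}$, yet $1$ is a convergent of the non-canonical expansion $[0;1,1]$ but not of the canonical $[0;2]$, whose convergents are only $0$ and $\tfrac12$. So with $\leq$ the conclusion holds only under the reading that a convergent may come from either of the two finite expansions of a rational $x$; your $\omega=1$ analysis is exactly this phenomenon, and the degenerate situation $q_{n-1}=q_n$ (i.e.\ $b=1$, representation $[a_0;1]$) is the only place it occurs, so your flagged ``short separate verification'' is both necessary and sufficient to close the proof.
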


\begin{corollary}
If $P^{2}\leq N$ and $|\{Py\}_{N}|\leq\frac{P}{2}$ then $d(y)/P$ is a
convergent of the continued fraction expansion of $y/N$.
\end{corollary}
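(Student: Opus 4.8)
The plan is to apply the preceding continued-fraction theorem (Theorem 184) directly, with the identifications $x = y/N$, $a = d(y)$, and $b = P$. The theorem requires $b > 0$, which holds since $P$ is a positive period, together with the approximation bound $|x - a/b| \le 1/(2b^2)$, i.e. $|y/N - d(y)/P| \le 1/(2P^2)$. The entire task therefore reduces to verifying this single inequality from the two standing hypotheses, after which the conclusion that $d(y)/P$ is a convergent of $y/N$ is immediate.

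First I would rewrite the approximation error over a common denominator and invoke the definition of $\{Py\}_N$ recorded just above the theorem, namely $\{Py\}_N = Py - N\,d(y)$:
\[
\left| \frac{y}{N} - \frac{d(y)}{P} \right| = \frac{|Py - N\,d(y)|}{NP} = \frac{|\{Py\}_N|}{NP}.
\]
Then I would feed in the two hypotheses in turn. The assumption $|\{Py\}_N| \le P/2$ gives
\[
\left| \frac{y}{N} - \frac{d(y)}{P} \right| = \frac{|\{Py\}_N|}{NP} \le \frac{P/2}{NP} = \frac{1}{2N},
\]
and the assumption $P^2 \le N$ gives $1/(2N) \le 1/(2P^2)$. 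Chaining these two bounds yields $|y/N - d(y)/P| \le 1/(2P^2)$, which is exactly the hypothesis of the theorem; applying it then delivers the corollary.

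There is no genuine obstacle here, since the result is a clean specialization of the general theorem. The only points requiring care are bookkeeping ones: correctly matching $a = d(y)$ and $b = P$, and recognizing that the relation $\{Py\}_N = Py - N\,d(y)$ is precisely what converts the rounding definition of $d(y)$ into the approximation error $|y/N - d(y)/P|$. The condition $P^2 \le N$ — which is exactly the standing hypothesis $P \le \sqrt{N}$ from the Local Period Problem — is what upgrades the easy bound $1/(2N)$ into the sharper $1/(2P^2)$ that the theorem demands, so this hypothesis is not merely technical but essential to the argument.
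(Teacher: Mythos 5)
Your proof is correct and follows essentially the same route as the paper's: dividing $|\{Py\}_N| = |Py - N\,d(y)| \leq P/2$ by $NP$ to get $|y/N - d(y)/P| \leq 1/(2N) \leq 1/(2P^2)$, and then invoking the continued-fraction theorem. The only difference is expository -- you spell out the identification of $x$, $a$, $b$ and the role of $P^2 \leq N$ more explicitly than the paper does.
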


\begin{proof}
Since $\{Py\}_{N}=Py-Nd(y)$ we have $|Py-Nd(y)|\leq\frac{P}{2}$ or $|\frac
{y}{N}-\frac{d(y)}{P}|\leq\frac{1}{2N}\leq\frac{1}{2P^{2}}$ and we can apply
Theorem 1 so that $d/P$ is a convergent of the continued fraction expansion of
$y/N$.
\end{proof}

Since we know $y$ and $N$ we can find the continued fraction expansion of
$y/N$. However we also need that $(d,P)=1$ in order that $d/P$ is a convergent
and enabling us to read off $P$ directly. The probability that $(d,P)=1$ is
$\varphi(P)/P$ where $\varphi(P)$ is Euler's totient function. If $P$ is prime
we get $(d,P)=1$ trivially.

By making use of the following Theorem it can be shown that%
\[
\frac{\varphi(P)}{P}\geq\frac{e^{-\gamma}-\epsilon(P)}{\ln2}\frac{1}{\ln\ln N}%
\]
where $\epsilon(P)$ is a monotone decreasing sequence converging to zero.

\begin{theorem}
$\lim\inf\frac{\varphi(N)}{N/\ln\ln N}=e^{-\gamma}$
\end{theorem}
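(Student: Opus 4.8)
The plan is to locate the integers that make $\varphi(N)/N$ extremal and evaluate the ratio along them, then argue that no other integers can beat this asymptotically. The structural starting point is the Euler-product identity $\varphi(N)/N=\prod_{p\mid N}(1-1/p)$, which makes clear that $\varphi(N)/N$ is driven small precisely by forcing $N$ to be divisible by many small primes. This singles out the primorials $N_k=\prod_{i=1}^{k}p_i$ (the product of the first $k$ primes) as the natural candidates realizing the liminf, and the whole argument is organized around comparing an arbitrary $N$ against the primorial with the same number of prime divisors.

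First I would establish the upper bound $\liminf\le e^{-\gamma}$ by testing the sequence of primorials. For $N_k$ one has $\varphi(N_k)/N_k=\prod_{p\le p_k}(1-1/p)$, and Mertens' third theorem gives $\prod_{p\le x}(1-1/p)\sim e^{-\gamma}/\ln x$. To convert this into a statement involving $\ln\ln N_k$ rather than $\ln p_k$, I would invoke the Chebyshev/prime-number-theorem estimate $\ln N_k=\sum_{p\le p_k}\ln p=\theta(p_k)\sim p_k$, whence $\ln\ln N_k\sim\ln p_k$. Multiplying the two asymptotics yields $\tfrac{\varphi(N_k)}{N_k}\ln\ln N_k\sim\tfrac{e^{-\gamma}}{\ln p_k}\ln p_k=e^{-\gamma}$, so the liminf over all $N$ is at most $e^{-\gamma}$.

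Second, for the matching lower bound I would exploit the extremality of primorials directly. If $N$ has exactly $r$ distinct prime factors $q_1<\cdots<q_r$, then $q_j\ge p_j$, and since $1-1/x$ is increasing, $\varphi(N)/N=\prod_{j}(1-1/q_j)\ge\prod_{j}(1-1/p_j)=\varphi(N_r)/N_r$; at the same time $N\ge\prod_{j}q_j\ge N_r$ forces $\ln\ln N\ge\ln\ln N_r$. Hence $\tfrac{\varphi(N)}{N}\ln\ln N\ge\tfrac{\varphi(N_r)}{N_r}\ln\ln N_r$. Along any sequence $N\to\infty$ realizing the liminf, either $\omega(N)=r$ stays bounded, in which case $\varphi(N)/N$ is bounded below by the positive constant $\prod_{j\le R}(1-1/p_j)$ while $\ln\ln N\to\infty$, so the product diverges and cannot be the liminf; or $r\to\infty$, and then the right-hand side tends to $e^{-\gamma}$ by the first part. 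Either way $\liminf\ge e^{-\gamma}$, and combining the two bounds gives equality.

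The main obstacle is Mertens' third theorem $\prod_{p\le x}(1-1/p)\sim e^{-\gamma}/\ln x$: this is the single step where the precise constant and the Euler--Mascheroni $\gamma$ enter, everything else being monotonicity bookkeeping plus the estimate $\theta(x)\sim x$. I would either cite Mertens' theorem outright (it is elementary, following from $\sum_{p\le x}(\log p)/p=\log x+O(1)$ together with partial summation) or derive the one-sided bounds actually needed, taking care that the $o(1)$ slack in both directions, which is exactly the quantity $\epsilon(P)\to0$ quoted just before the statement, is the error term inherited from Mertens' asymptotic.
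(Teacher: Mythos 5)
Your proof is correct, but there is essentially nothing in the paper to compare it against: the thesis states this theorem without any proof, quoting it as a known result from the number-theory literature (it is Landau's theorem, Theorem 328 of Hardy and Wright, the paper's ref.~[5]), and uses it only to derive the lower bound on $\varphi(P)/P$ that precedes it. Your argument is the standard proof of that classical result, and it is sound: the identity $\varphi(N)/N=\prod_{p\mid N}(1-1/p)$ shows the primorials $N_k=p_1p_2\cdots p_k$ are extremal; Mertens' third theorem $\prod_{p\le x}(1-1/p)\sim e^{-\gamma}/\ln x$ supplies the constant along that sequence; and for the conversion $\ln\ln N_k\sim\ln p_k$ a Chebyshev-type bound $\theta(p_k)\asymp p_k$ already suffices, since it gives $\ln\ln N_k=\ln p_k+O(1)$ (the full prime number theorem is not needed). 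The comparison step $\frac{\varphi(N)}{N}\ln\ln N\ge\frac{\varphi(N_r)}{N_r}\ln\ln N_r$ with $r=\omega(N)$ is legitimate, with the trivial caveat that one should restrict to $r\ge 2$ so that $\ln\ln N_r>0$ and the two one-sided inequalities (on $\varphi(N)/N$ and on $\ln\ln N$) can be multiplied; and your dichotomy for a sequence realizing the liminf---bounded $\omega(N)$ forces divergence, unbounded $\omega(N)$ reduces to the primorial case---closes the lower bound correctly, after passing to a subsequence on which $\omega(N)$ is either bounded or tends to infinity. So your proposal supplies a proof the paper omits; what the paper's citation-only treatment buys is brevity, while yours makes explicit that the quantity $\epsilon(P)\to 0$ appearing in the paper's application just before the theorem is precisely the $o(1)$ error inherited from Mertens' asymptotic.
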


where $\gamma=0.57721566$ is Euler's constant and where $e^{-\gamma
}=0.5614594836$.

This may cause us to repeat the experiment $\Omega(\frac{1}{\ln\ln N})$ times
in order to get $(d,P)=1$.

We note that we needed to add a condition on the period $P$ that $P^{2}\leq N
$ \ or $P\leq\sqrt{N}$ in order for the proof of the corollary to work.

\subsection{Testing if $P_{1}=P$ when $s$ is known or is $0$}

We can easily test if $s=0$ by checking to see if $f(0)=1.$

Now given a putative value of the period $P_{1}$ and a known offset or shift
$s$, how can we test whether $P_{1}=P$ ?

Assuming we have access to the Oracle to test individual values, we can
confirm $f(s)=1$ since $s$ is known. We will show that if $f(s+P_{1})=1$ and
$f(s+(M-1)P_{1})=1$ then $P_{1}=P.$

Case 1: If $P_{1}>P$ then $s+(M-1)P_{1}>s+(M-1)P.$ But $s+(M-1)P$ is the
largest index $x$ such that $f(x)=1.$ Therefore if $P_{1}>P$ we must have
$f(s+(M-1)P_{1})=0.$

Case 2: If $0<P_{1}<P$ then $s<s+P_{1}<s+P$ but between $s$ and $P$ there are
no other values $x$ such that $f(x)=1.$Therefore if $0<P_{1}<P$ we must have
$f(s+P_{1})=0.$

Therefore if $f(s)=1,f(s+P_{1})=1$ and $f(s+(M-1)P_{1})=1$ we must have
$P_{1}=P.$

\subsection{Testing if $(s_{1},P_{1})=(s,P)$ when $s$ is from a small known
set and $s\neq0$}

If we assume $s$ is unknown and $s\neq0$ but is from a small known set of
possible values such that we can exhaust over this set on a classical computer
and we are given a putative value of the period $P_{1}$, how can we test
whether a pair of values $(s_{1},P_{1})$ is the correct pair $(s,P)$ ?

We need only test whether $f(s_{1})=1$, $f(s_{1}+P_{1})=1$ and $f(s_{1}%
+(M-1)P_{1})=1$ where M is assumed known.

Case 1: If $s_{1}<s$ then $f(s_{1})=0$ since $\ s$ is the smallest index $x$
with $\ f(x)=1.$

Case 2: If $s_{1}>s$ and $f(s_{1})=1$ then $s_{1}=s+rP$ with $r>0$ . If
$f(s_{1}+P_{1})=1$ then $s_{1}+P_{1}=s+tP=s_{1}+(t-r)P$ with $t>r>0.$ Hence
$P_{1}=(t-r)P>0.$ If $f(s_{1}+(M-1)P_{1})=1$ then $s_{1}+(M-1)P_{1}%
=s+rP+(M-1)(t-r)P>s+(M-1)P$ which is the largest index $x$ with $f(x)=1.$
Therefore $f(s_{1}+(M-1)P_{1})=0.$

Hence if $f(s_{1})=1$, $f(s_{1}+P_{1})=1$ and $f(s_{1}+(M-1)P_{1})=1$ we must
have $s_{1}=s$ and then by following the case when $s$ is known we must also
have $P_{1}=P.$

Therefore if one or more of the values $f(s_{1}),$ $f(s_{1}+P_{1}),$
$f(s_{1}+(M-1)P_{1})$ is zero, either $s_{1}$ or $P_{1}$ is wrong. For a given
$P_{1\text{ }}$we must exhaust over all possible values of $s$ before we can
be sure that $P_{1}\neq P.$ For in the case that $P_{1}\neq P,$ we will have
for every possible $s_{1}$ that at least one of the values $f(s_{1}),$
$f(s_{1}+P_{1}),$ $f(s_{1}+(M-1)P_{1})$ is zero. In such a case we must try
another putative $P_{1}.$

\subsection{Finding $s\neq0$ using a Quantum Computer}

We can assume $s\neq0$ as the case $s=0$ is trivial and was considered above.
Let $s=\alpha+\beta P$ where $\alpha=s\operatorname{mod}P$ so that
$0\leq\alpha\leq P-1$ and $0\leq\alpha+\beta P+(M-1)P\leq N-1.$

We assume we are given the correct value of $P_{.}$ If $P$ is wrong, it will
be detected in the algorithm.

Step 1:

We create an initial superposition on $N$ values%

\[
|\psi_{1}>=\frac{1}{\sqrt{N}}\sum_{x=0}^{N-1}|x>
\]

and apply the Oracle $f$ and put this into the amplitude. We then apply Grover
without measurement to amplify the amplitudes and we have the following state%

\[
|\psi_{1}>=a_{k}\sum_{\times\in A}|x>+b_{k}\sum_{x\notin A}|x>
\]

where%

\[
a_{k}=\frac{1}{\sqrt{M}}\sin(2k+1)\theta,b_{k}=\frac{1}{\sqrt{N-M}}%
\cos(2k+1)\theta
\]

are the appropriate amplitudes of the states and where
\[
\sin\theta=\sqrt{M/N},\cos\theta=\sqrt{1-M/N}%
\]

Next we measure the register and with probability exceeding $1-M/N$ we will
measure a value $x_{1}\in A$ where $x_{1}=s+r_{1}P$ with $0\leq r_{1}\leq
M-1.$ Note that the total probability of the set A is given by
\begin{align*}
\Pr(x  &  \in A)=M(\frac{1}{\sqrt{M}}\sin(2k+1)\theta)^{2}=\sin^{2}%
(2k+1)\theta=1-\cos^{2}(2k+1)\theta\\
&  \Longrightarrow\Pr(x\in A)=\sin^{2}(2k+1)\theta\geq1-\frac{M}{N}%
\end{align*}

Now using our measured value $x_{1}=s+r_{1}P$ with $0\leq r_{1}\leq M-1$ we
check that $f(x_{1})=1$ and $f(x_{1}-P)=1.$ If $f(x_{1}-P)=0$ then either the
value of $P$ we are using is wrong or we have $r_{1}=0$ and $x_{1}=s.$ If we
test $f(s)=1$, $f(s+P)=1$ and $f(s+(M-1)P)=1$ then we have the correct $P$ and
$s$ otherwise $P$ is wrong. So assuming $f(x_{1}-P)=1$ we must have either the
correct $P$ or a multiple of $P$. We can use the procedure in Step 2 or Step
2' to find $s.$ The method in Step 2 uses the Exact Quantum Counting algorithm
to find $s$ (See ref[11] for details). The method in Step 2' uses a method of
decreasing sequence of measurements to find $s.$

Step 2 (using the Exact Quantum Counting algorithm):

Let $T$ be such that $T\geq M$ is the smallest power of $2$ greater than $M$.
We form a superposition%

\[
|\varphi_{1}>=\frac{1}{\sqrt{T}}\sum_{x=0}^{T-1}|x>|0>
\]

\noindent and apply the function $g(x)=Max(0,x_{1}-(x+1)P)$ where
$x_{1}=s+r_{1}P$ is our measured value, with $0\leq r_{1}\leq M-1$and put the
values of $g(x)$ into the second register to get%

\[
|\varphi_{2}>=\frac{1}{\sqrt{T}}\sum_{x=0}^{T-1}|x>|g(x)>
\]

\noindent Notice that as $x$ increases from $0$, $g(x)$ is a decreasing
sequence $s+rP$ with $r=(r_{1}-x-1).$ When $g(x)$ dips below $0$ we set
$g(x)=0$ to ensure $g(x)\geq0.$ Now we apply $f$ to $g(x)$ and put the results
into the amplitude to get%

\[
|\varphi_{3}>=\frac{1}{\sqrt{T}}\sum_{x=0}^{T-1}(-1)^{f(g(x))}|x>|g(x)>
\]

Notice that $f(g(x))=1$ when $s\leq g(x)<s+r_{1}P$ and is $0$ elsewhere. We
apply the exact quantum counting algorithm which determines how many values
$f(g(x))=1.$Let this total be $R.$ If $P$ is correct we expect $R=r_{1}$ and
we can determine $s=x_{1}-RP=s+r_{1}P-RP.$ We can then test if we have the
correct pair of values $s,P$ by testing whether $f(s)=1$, $f(s+P)=1$ and
$f(s+(M-1)P)=1.$ If this test fails then $P$ must be an incorrect value and we
must repeat the period finding algorithm.

We use Theorem 8.3.4 of ref[11]: The Exact Quantum Counting algorithm requires
an expected number of applications of $U_{f}$ in $O(\sqrt{(R+1)(T-R+1)}$ and
outputs the correct value $R$ with probability at least $2/3.$

Step 2' (decreasing sequence of measurements method):

Let $T$ be such that $T\geq M$ is the smallest power of $2$ greater than $M$.
We form a superposition%

\[
|\varphi_{1}>=\frac{1}{\sqrt{T}}\sum_{x=0}^{T-1}|x>|0>
\]

and apply the function $g(x)=Max(0,x_{1}-(x+1)P)$ where $x_{1}=s+r_{1}P$ with
$0\leq r_{1}\leq M-1$and put these values into the second register to get%

\[
|\varphi_{2}>=\frac{1}{\sqrt{T}}\sum_{x=0}^{T-1}|x>|g(x)>
\]

Notice that as $x$ increases from $0$, $g(x)$ is a decreasing sequence $s+rP$
with $r=(r_{1}-x-1).$ When $g(x)$ dips below $0$ we set $g(x)=0$ to ensure
$g(x)\geq0.$ Now we apply $f$ to $g(x)$ and put the results into the third
register and then into the amplitude.%

\[
|\varphi_{3}>=\frac{1}{\sqrt{T}}\sum_{x=0}^{T-1}(-1)^{f(g(x))}|x>|g(x)>
\]

Notice that $f(g(x))=1$ when $s\leq g(x)<s+r_{1}P$ and is $0$ elsewhere.

We then run Grover without measurement to amplify the amplitudes and measure
the second register containing $g(x).$

With probability close to 1 we will measure a new value $x_{2}=s+r_{2}P$ with
$0\leq r_{2}<r_{1}.$ We test the values $f(x_{2})=1$ and $f(x_{2}-P)=1.$ If
$f(x_{2}-P)=0$ then either the value of $P$ we are using is wrong or we have
$r_{2}=0$ and $x_{2}=s.$ If we test $f(s)=1$, $f(s+P)=1$ and $f(s+(M-1)P)=1$
then we have the correct $P$ and $s$ otherwise $P$ is wrong. So assuming
$f(x_{2}-P)=1$ we must have either the correct $P$ or a multiple of $P$. We
repeat this algorithm and go to Step 2' replacing the value $x_{1} $ in the
function $g(x)$ with $x_{2}$ etc. As we repeat the algorithm we will measure a
decreasing sequence of values $x_{1},x_{2}...$ that converges to $s.$ This
procedure will eventually terminate with the correct pair of values $P$ and
$s$ or we will determine that we have been using an incorrect value of $P$ and
we must repeat the quantum algorithm for finding putative $P $ and repeat the process.

How many times do we expect to repeat Step 2'? When we make our first
measurement we expect $r_{1}=M/2.$ For our second measurement we expect
$r_{2}=r_{1}/2$ etc. Therefore we expect to repeat this algorithm $O($
$\ln_{2}(M))$ times.

\section{Replacing the QFT With a General Unitary Transform U}

In general, if we had any Oracle $f$ which is $1$ on a set of labels $A$ and
$0$ elsewhere and we replaced the QFT\ in the Amplified-QFT algorithm with any
unitary transform $U$ which performs the following%

\[
|z>\rightarrow\frac{1}{\sqrt{N}}\sum_{y=0}^{N-1}\alpha(y,z)|y>
\]

\noindent we can compute the $AmpRatio(y)\ =\frac{Amplified-Amplitude(U)}%
{Amplitude(U)}$as follows.

\noindent As before, we have the following state after applying $U_{f}$:%

\[
|\psi>=\frac{1}{\sqrt{N}}\left[  (-2)\sum_{z\in A}|z>+\sum_{z=0}%
^{N-1}|z>\right]
\]

\noindent Next we apply the general unitary transform $U$ to obtain the state%

\[
U|\psi>=\sum_{y=0}^{N-1}\left[  \frac{(-2)}{N}\sum_{z\in A}\alpha
(y,z)+\frac{1}{N}\sum_{z=0}^{N-1}\alpha(y,z)\right]  |y>
\]

\noindent In the Amplified-U case we apply Grover without measurement followed
by $U$ we obtain the state%

\[
|\phi_{k}>=\sum_{y=0}^{N-1}\left[  \frac{(a_{k}-b_{k})}{\sqrt{N}}\sum_{z\in
A}\alpha(y,z)+\frac{b_{k}}{\sqrt{N}}\sum_{z=0}^{N-1}\alpha(y,z)\right]  |y>
\]

\noindent If $\sum_{z=0}^{N-1}\alpha(y,z)=0$ and $\sum_{z\in A}\alpha
(y,z)\neq0$ we get the same $AmpRatio(y)$ formula that we obtained when
$U=QFT$%

\begin{align*}
AmpRatio(y)  &  =\frac{\ \frac{(a_{k}-b_{k})}{\sqrt{N}}\sum_{z\in A}%
\alpha(y,z)+\frac{b_{k}}{\sqrt{N}}\sum_{z=0}^{N-1}\alpha(y,z)}{\frac{(-2)}%
{N}\sum_{z\in A}\alpha(y,z)+\frac{1}{N}\sum_{z=0}^{N-1}\alpha(y,z)}\\
&  =\frac{\ \frac{(a_{k}-b_{k})}{\sqrt{N}}\sum_{z\in A}\alpha(y,z)}%
{\frac{(-2)}{N}\sum_{z\in A}\alpha(y,z)}\\
&  =\frac{\ \frac{(a_{k}-b_{k})}{\sqrt{N}}}{\frac{(-2)}{N}}\\
&  =\frac{(a_{k}-b_{k})}{-2}\sqrt{N}\\
&  =\frac{1}{-2}\left[  \sqrt{\frac{N}{M}}\sin(2k+1)\theta-\sqrt{\frac{N}%
{N-M}}\cos(2k+1)\theta\right] \\
&  =\frac{N}{-2M}\tan\theta\sin2k\theta
\end{align*}

\noindent This gives
\[
\Pr Ratio(y)=\frac{N^{2}}{4M^{2}}\tan^{2}\theta\sin^{2}2k\theta
\]

\bigskip As in the case when U=QFT, we have the following inequality for the
$\Pr Ratio(y)$ for a general U, the increase in the probability due to
amplification when $k=\left\lfloor \frac{\pi}{4\theta}\right\rfloor $ and
making use of $\frac{M}{N}(\frac{N}{N-M})\geq\tan^{2}\theta\sin^{2}%
2k\theta\geq\frac{M}{N}(\frac{N}{N-M})(\frac{N-2M}{N})^{2}$%

\begin{align*}
\frac{N}{4M}(\frac{N}{N-M})  &  \geq\Pr Ratio(y)\geq\frac{N}{4M}(\frac{N}%
{N-M})(1-\frac{2M}{N})^{2}\\
&  \Longrightarrow\Pr Ratio(y)\approx\frac{N}{4M}%
\end{align*}

\section{\bigskip General Amplification Procedure With General Oracle}

In this section we consider the case of a general amplification procedure with
a general oracle followed by a QFT. We produce a general upper bound on the
probability of measuring an observed value $y.$

Let $f$ be a general oracle which is $1$ on a set of labels $A$ and $0$
elsewhere. We assume there is a general amplification procedure which is
unknown, which produces the following general state:%

\[
|\psi>=\sum_{z=0}^{N-1}\sqrt{p_{z}}|z>
\]

where $p_{z}$ is a probability distribution produced by a general
amplification procedure. In addition we assume that%

\[
p(A)=\sum_{z\in A}p_{z}=\alpha\simeq1
\]

and%

\[
p(\overline{A})=\sum_{z\notin A}p_{z}=1-\alpha\simeq0
\]

Next we apply the QFT to the state $|\psi>$ by performing the following transformation%

\[
|z>\rightarrow\frac{1}{\sqrt{N}}\sum_{y=0}^{N-1}\omega^{zy}|y>
\]

which gives the state%

\[
|\varphi>=\sum_{y=0}^{N-1}\frac{1}{\sqrt{N}}\left[  \sum_{z\in A}\sqrt{p_{z}%
}\omega^{zy}+\sum_{z\notin A}\sqrt{p_{z}}\omega^{zy}\right]  |y>
\]

Next we wish to compute an upper bound on $P(y).$%

\begin{align*}
\Pr(y)  &  =\left\vert \frac{1}{\sqrt{N}}\sum_{z\in A}\sqrt{p_{z}}\omega
^{zy}+\frac{1}{\sqrt{N}}\sum_{z\notin A}\sqrt{p_{z}}\omega^{zy}\right\vert
^{2}\\
&  \leq\left\vert \frac{1}{\sqrt{N}}\sqrt{\left(  \sum_{z\in A}p_{z}\right)
\left(  \sum_{z\in A}\left\vert \omega^{zy}\right\vert ^{2}\right)  }+\frac
{1}{\sqrt{N}}\sqrt{\left(  \sum_{z\notin A}p_{z}\right)  \left(  \sum_{z\notin
A}\left\vert \omega^{zy}\right\vert ^{2}\right)  }\right\vert ^{2}\\
&  \text{ by the Cauchy-Schwarz inequality}\\
&  =\left(  \sqrt{\alpha\frac{M}{N}}+\sqrt{(1-\alpha)(1-\frac{M}{N})}\right)
^{2}%
\end{align*}
We see that in the specific case where the amplification procedure is perfect
and $\alpha=1$ the upper bound on $\Pr(y)$ is $\frac{M}{N}.$and otherwise, the
upper bound on $\Pr(y)$ is close to $\frac{M}{N}.$

Next we generalize this further to the case where we have a general unitary
transform $U$ with entries $\frac{1}{\sqrt{N}}\alpha(y,z)$ such that
$|\alpha(y,z)|^{2}=1$ for every $y$ and $z.$

Let $f$ be a general oracle which is $1$ on a set of labels $A$ and $0$
elsewhere. As before, we assume there is a general amplification procedure
which is unknown, which produces the following general state:%

\[
|\psi>=\sum_{x=0}^{N-1}\sqrt{p_{z}}|z>
\]

where $p_{z}$ is a probability distribution produced by a general
amplification procedure. In addition we assume that%

\[
p(A)=\sum_{z\in A}p_{z}=\alpha\simeq1
\]

and%

\[
p(\overline{A})=\sum_{z\notin A}p_{z}=1-\alpha\simeq0
\]
Next we apply $U$ to the state $|\psi>$ by performing the following transformation%

\[
|z>\rightarrow\frac{1}{\sqrt{N}}\sum_{y=0}^{N-1}\alpha(y,z)|y>
\]

which gives the state%

\[
|\varphi>=\sum_{y=0}^{N-1}\frac{1}{\sqrt{N}}\left[  \sum_{z\in A}\sqrt{p_{z}%
}\alpha(y,z)+\sum_{z\notin A}\sqrt{p_{z}}\alpha(y,z)\right]  |y>
\]

Next we wish to compute an upper bound on $P(y).$%

\begin{align*}
\Pr(y)  &  =\left\vert \frac{1}{\sqrt{N}}\sum_{z\in A}\sqrt{p_{z}}%
\alpha(y,z)+\frac{1}{\sqrt{N}}\sum_{z\notin A}\sqrt{p_{z}}\alpha
(y,z)\right\vert ^{2}\\
&  \leq\left\vert \frac{1}{\sqrt{N}}\sqrt{\left(  \sum_{z\in A}p_{z}\right)
\left(  \sum_{z\in A}\left\vert \alpha(y,z)\right\vert ^{2}\right)  }%
+\ \frac{1}{\sqrt{N}}\sqrt{\left(  \sum_{z\notin A}p_{z}\right)  \left(
\sum_{z\notin A}\left\vert \alpha(y,z)\right\vert ^{2}\right)  }\right\vert
^{2}\\
&  \text{ by the Cauchy-Schwarz inequality}\\
&  =\left(  \sqrt{\alpha\frac{M}{N}}+\sqrt{(1-\alpha)(1-\frac{M}{N})}\right)
^{2}%
\end{align*}

\newpage\renewcommand{\thechapter}{3}

\chapter{The Amplified Quantum Fourier Transform - With Error Stream}

\section{\textbf{Introduction}}

\bigskip In this paper, we generalize the results of the previous chapter
ref[14] and show how to use the A\textbf{mplified-QFT algorithm} to solve the
following problem:

\noindent\textbf{The Local Period Finding Problem, with Error Stream:} \ Let
$\mathcal{L}=\{0,1,...,N-1\}$ be a set of $N$ labels, and let $A$ be a
periodic subset of $M$ labels of period $P$, i.e., a subset of the form
\[
A=\left\{  j:j=s+rP,r=0,1,\ldots,M-1\right\}  \text{ ,}%
\]
where $P\leq\sqrt{N}$ and $M<<N$. Given a binary oracle
\[
h:\mathcal{L}\longrightarrow\left\{  0,1\right\}
\]
such that $h(x)=f(x)+g(x)$ where $+$ is the XOR operation and
\[
f,g:\mathcal{L}\longrightarrow\left\{  0,1\right\}
\]
and where $f(x)=1$ on $A$ and $0$ elsewhere and $g(x)$ is an Error Stream
which outputs a $1$ with Bernoulli probability $p$ and outputs a $0$ with
probability $q=1-p$. Let $G=\{x|g(x)=1\}$ with $|G|=L$ and let $C=A\cup G$ and
let $T=L+M$, with $\left\vert C\right\vert =T$. We assume $T$ is known because
if it is unknown, we can find it using the quantum counting algorithm.We
further assume that $A\cap G=\varnothing$ and note that $E[L]=Np$ and
$Var[L]=Npq$.

The Amplified-QFT algorithm which solves this problem consists of three steps.
\ \textbf{Step 1:} Apply Grover's algorithm without measurement to amplify the
amplitudes of the $T$ labels of the set $C$. \ \textbf{Step 2:} \ Apply the
QFT to the resulting state. \ \textbf{Step 3:} Measurement.

We compare the probabilities of success of three algorithms that can be used
to recover the period $P$: (1) Amplified-QFT (2) QFT and (3) QHS algorithms.
Let the set $S_{ALG}=\{y:|\frac{y}{N}-\frac{d}{P}|\leq\frac{1}{2P^{2}%
},(d,P)=1\}$ be the set of "successful" $y$'s. That is $S_{ALG\text{ }}%
$consists of those $y$'s which can be measured after applying one of the three
algorithms denoted by $ALG$ and from which the period P can be recovered by
the method of continued fractions. We show
\[
\frac{N}{4T}(\frac{N}{N-T})\geq\frac{\Pr(S_{Amplified-QFT})}{\Pr(S_{QFT})}%
\geq\frac{N}{4T}(\frac{N}{N-T})(1-\frac{2T}{N})^{2}%
\]
and%

\[
\frac{N}{2T}(\frac{N}{N-T})\geq\frac{\Pr(S_{Amplified-QFT})}{\Pr(S_{QHS})}%
\geq\frac{N}{2T}(\frac{N}{N-T})(1-\frac{2T}{N})^{2}%
\]

In the tables below, we summarize our results, comparing the probability of
measuring a $y$ in the final state arrived at after applying one of the three
algorithms- Amplified-QFT, QFT and QHS, where $\sin\theta=\sqrt{T/N}$ and
$k=\left\lfloor \frac{\pi}{4\theta}\right\rfloor $:

We compare each of the algorithms under the following four conditions on the
observation $y:$%

\begin{align*}
Case\text{ A }  &  \text{: }y=0\\
Case\text{ B }  &  \text{: }Py=0\operatorname{mod}N,y\neq0\\
Case\text{ C }  &  \text{: }Py\neq0\operatorname{mod}N\text{and }%
MPy\neq0\operatorname{mod}N\\
Case\text{ D }  &  \text{: }Py\neq0\operatorname{mod}N\text{ and
}MPy=0\operatorname{mod}N
\end{align*}

Case 1 (Amplified-QFT):

The probability $\Pr(y)$ is given exactly by

\begin{center}%
\[
\left\{
\begin{tabular}
[c]{l}%
Case A: $\cos^{2}2k\theta$\\
\\
Case B: $\tan^{2}\theta\sin^{2}2k\theta\left\vert \frac{M}{T}\omega^{sy}%
+\frac{1}{T}\sum_{z\in G}\omega^{zy}\right\vert ^{2}$\\
\\
Case C: $\tan^{2}\theta\sin^{2}2k\theta\left\vert \frac{1}{T}\omega
^{sy}\left[  \frac{1-\omega^{MPy}}{1-\omega^{Py}}\right]  +\frac{1}{T}%
\sum_{z\in G}\omega^{zy}\right\vert ^{2}$\\
\\
Case D: $\tan^{2}\theta\sin^{2}2k\theta\left\vert \frac{1}{T}\sum_{z\in
G}\omega^{zy}\right\vert ^{2}$%
\end{tabular}
\ \ \ \ \ \ \right\}
\]

\end{center}

Case 2 (QFT):

\bigskip The probability $\Pr(y)$ is given exactly by%

\[
\left\{
\begin{tabular}
[c]{l}%
Case A: $\left(  1-\frac{2T}{N}\right)  ^{2}$\\
\\
Case B: $\frac{4}{N^{2}}\left\vert \omega^{sy}M+\sum_{z\in G}\omega
^{zy}\right\vert ^{2}$\\
\\
Case C: $\frac{4}{N^{2}}\left\vert w^{sy}\left[  \frac{1-\omega^{MPy}%
}{1-\omega^{Py}}\right]  +\sum_{z\in G}\omega^{zy}\right\vert ^{2}$\\
\\
Case D: $\frac{4}{N^{2}}\left\vert \sum_{z\in G}\omega^{zy}\right\vert ^{2}$%
\end{tabular}
\ \ \ \ \ \ \ \right\}
\]

Let $y$ be fixed such that either Case B or Case C\ holds and define $\Pr
Ratio(y)=\Pr(y)_{Amplified-QFT}/\Pr(y)_{QFT}$ \ then we have the following%

\begin{align*}
\frac{N}{4T}(\frac{N}{N-T})  &  \geq\Pr Ratio(y)\geq\frac{N}{4T}(\frac{N}%
{N-T})(1-\frac{2T}{N})^{2}\\
&  \Longrightarrow\Pr Ratio(y)\approx\frac{N}{4T}%
\end{align*}

Case 3 (QHS):

\bigskip The probability $\Pr(y)$ is given exactly by%

\[
\left\{
\begin{tabular}
[c]{l}%
Case A: $1-\frac{2T(N-T)}{N^{2}}$\\
\\
Case B: $\frac{2}{N^{2}}\left\vert \omega^{sy}M+\sum_{z\in G}\omega
^{zy}\right\vert ^{2}$\\
\\
Case C: $\frac{2}{N^{2}}\left\vert w^{sy}\left[  \frac{1-\omega^{MPy}%
}{1-\omega^{Py}}\right]  +\sum_{z\in G}\omega^{zy}\right\vert ^{2}$\\
\\
Case D: $\frac{2}{N^{2}}\left\vert \sum_{z\in G}\omega^{zy}\right\vert ^{2}$%
\end{tabular}
\ \ \ \ \ \ \ \ \ \right\}
\]

Let $y$ be fixed such that either Case B or Case C\ holds and define $\Pr
Ratio(y)=\Pr(y)_{Amplified-QFT}/\Pr(y)_{QHS}$ \ then we have the following%

\begin{align*}
\frac{N}{2T}(\frac{N}{N-T})  &  \geq\Pr Ratio(y)\geq\frac{N}{2T}(\frac{N}%
{N-T})(1-\frac{2T}{N})^{2}\\
&  \Longrightarrow\Pr Ratio(y)\approx\frac{N}{2T}%
\end{align*}

Let $S_{ALG}=\{y:|\frac{y}{N}-\frac{d}{P}|\leq\frac{1}{2P^{2}},(d,P)=1\}$ be
the set of "successful" $y$'s. That is $S_{ALG\text{ }}$consists of those
$y$'s which can be measured after applying one of the three algorithms denoted
by $ALG$ and from which the period $P$ can be recovered by the method of
continued fractions. Note that the set $S_{ALG}$ is the same for each
algorithm. However the probability of this set varies with each algorithm. We
can see from the following that given $y1$ and $y2$, whose probability ratios
satisfy the same inequality, we can add their probabilities to get a new ratio
that satisfies the same inequality. In this way we can add probabilities over
a set on the numerator and denominator and maintain the inequality:
\begin{align*}
A  &  >\frac{P(y1)}{Q(y1)}>B\text{ }and\text{ }A>\frac{P(y2)}{Q(y2)}>B\\
&  \Longrightarrow A>\frac{P(y1)+P(y2)}{Q(y1)+Q(y2)}>B
\end{align*}

\noindent We see from the cases given above that\bigskip%
\[
\frac{N}{4T}(\frac{N}{N-T})\geq\frac{\Pr(S_{Amplified-QFT})}{\Pr(S_{QFT})}%
\geq\frac{N}{4T}(\frac{N}{N-T})(1-\frac{2T}{N})^{2}%
\]

\noindent where the difference between the upper bound and lower bound is
exactly 1 and that%

\[
\frac{N}{2T}(\frac{N}{N-T})\geq\frac{\Pr(S_{Amplified-QFT})}{\Pr(S_{QHS})}%
\geq\frac{N}{2T}(\frac{N}{N-T})(1-\frac{2T}{N})^{2}%
\]

\noindent where the difference between the upper bound and lower bound is
exactly 2.

\bigskip

This shows that the Amplified-QFT is approximately$\frac{N}{4T}$ times more
successful than the QFT and $\frac{N}{2T}$ times more successful than the QHS
when $T<<N$. In addition it also shows that the QFT\ is $2$ times more
successful than the QHS in this problem. However, the success of the
Amplified-QFT algorithms comes at an increase in work factor of $O(\sqrt
{\frac{N}{T}})$. We note that in the case that P is a prime number that
$(d,P)=1$ is met trivially. However when P is composite the algorithms may
need to be rerun several times until $(d,P)=1$ is satisfied.

\section{Comparison of Results Between L=0 and L%
$>$%
0}

In this section we compare the probabilities of making a measurement between
the two cases a)\ where there is no error stream and $L=0$ and b)\ where the
error stream is present and $L>0.$

We compare each of the algorithms under the following four conditions on the
observation $y:$%

\begin{align*}
Case\text{ A }  &  \text{: }y=0\\
Case\text{ B }  &  \text{: }Py=0\operatorname{mod}N,y\neq0\\
Case\text{ C }  &  \text{: }Py\neq0\operatorname{mod}N\text{and }%
MPy\neq0\operatorname{mod}N\\
Case\text{ D }  &  \text{: }Py\neq0\operatorname{mod}N\text{ and
}MPy=0\operatorname{mod}N
\end{align*}
\bigskip

1) Amplified-QFT\ case

The following are the probabilities of making a measurement $y$ when $L=0$:

Here $k_{1}=\left\lfloor \frac{\pi}{4\sin^{-1}(\sqrt{M/N})}\right\rfloor $ and
$\sin\theta_{1}=\sqrt{M/N}$

\begin{center}%
\[
\left\{
\begin{tabular}
[c]{l}%
Case A: $\cos^{2}2k_{1}\theta_{1}$\\
\\
Case B: $\tan^{2}\theta_{1}\sin^{2}2k_{1}\theta_{1}$\\
\\
Case C: $\frac{1}{M}\tan^{2}\theta_{1}\sin^{2}2k_{1}\theta_{1}\frac{\sin
^{2}(\pi MPy/N)}{\sin^{2}(\pi Py/N)}$\\
\\
Case D: $0$%
\end{tabular}
\ \ \ \ \ \ \ \right\}
\]

\end{center}

The following are the probabilities of making a measurement $y$ when $L>0$

Here $k_{2}=\left\lfloor \frac{\pi}{4\sin^{-1}(\sqrt{T/N})}\right\rfloor $ and
$\sin\theta_{2}=\sqrt{T/N}$%

\[
\left\{
\begin{tabular}
[c]{l}%
Case A: $\cos^{2}2k_{2}\theta_{2}$\\
\\
Case B: $\tan^{2}\theta_{2}\sin^{2}2k_{2}\theta_{2}\left\vert \frac{M}%
{T}\omega^{sy}+\frac{1}{T}\sum_{z\in G}\omega^{zy}\right\vert ^{2}$\\
\\
Case C: $\tan^{2}\theta_{2}\sin^{2}2k_{2}\theta_{2}\left\vert \frac{1}%
{T}\omega^{sy}\left[  \frac{1-\omega^{MPy}}{1-\omega^{Py}}\right]  +\frac
{1}{T}\sum_{z\in G}\omega^{zy}\right\vert ^{2}$\\
\\
Case D: $\tan^{2}\theta_{2}\sin^{2}2k_{2}\theta_{2}\left\vert \frac{1}{T}%
\sum_{z\in G}\omega^{zy}\right\vert ^{2}$%
\end{tabular}
\ \ \ \ \ \ \ \ \ \right\}
\]

We notice that in cases B, C and D with $L>0$ the probability of measuring $y$
now depends on $y$ whereas when $L=0$ it does not. In addition in cases B, C
and D with $L>0$ \ the probability depends upon a sum over a random set $G$:
$\frac{1}{T}\sum_{z\in G}\omega^{zy}.$

Notice that $T=L+M>M$ and $\sin\theta_{2}=\sqrt{T/N}>\sin\theta_{1}=\sqrt
{M/N}$ and $\theta_{2}>\theta_{1}.$

In the following theorems and corollaries we calculate the expected value and
variance of

Case B:%
\[
\left\vert \frac{M}{T}\omega^{sy}+\frac{1}{T}\sum_{z\in G}\omega
^{zy}\right\vert ^{2}%
\]

Case C:%
\[
\left\vert \frac{1}{T}\omega^{sy}\left[  \frac{1-\omega^{MPy}}{1-\omega^{Py}%
}\right]  +\frac{1}{T}\sum_{z\in G}\omega^{zy}\right\vert ^{2}%
\]

Case D:%
\[
\left\vert \frac{1}{T}\sum_{z\in G}\omega^{zy}\right\vert ^{2}%
\]

The results are summarized in the tables below. Let $\varphi=2\pi MPy/N$ and
let $\theta=2\pi Py/N.$

Then the expected values are:%

\[
\left\{
\begin{tabular}
[c]{l}%
Case B: $(M^{2}+L)/(L+M)^{2}$\\
\\
Case C: $\left(  \frac{\sin^{2}\frac{\varphi}{2}}{\sin^{2}\frac{\theta}{2}%
}+L\right)  /(L+M)^{2}$\\
\\
Case D: $L/(L+M)^{2}$%
\end{tabular}
\ \ \ \ \ \ \ \ \ \ \right\}
\]

and the variances are:%

\[
\left\{
\begin{tabular}
[c]{l}%
Case B: $(L^{2}-L+2M^{2}L)/(L+M)^{4}$\\
\\
Case C: $\left(  L^{2}-L+2\frac{\sin^{2}\frac{\varphi}{2}}{\sin^{2}%
\frac{\theta}{2}}L\right)  /(L+M)^{4}$\\
\\
Case D: $(L^{2}-L)/(L+M)^{4}$%
\end{tabular}
\ \ \ \ \ \ \ \ \ \ \right\}
\]

First we show%
\[
E\left\vert \sum_{z\in G}\omega^{zy}\right\vert ^{2}=L
\]
where the sum is taken over $L$ uniform random variables which take values in
$\{0,1,...,N-1\}$ and the expectation is computed over all such possible sums.

\begin{theorem}
Let $G=\{z_{1},z_{2},...,z_{L}\}$ be a set of $L$ uniform random variables
which take values in $\{0,1,...,N-1\}$. Consider the expected value of
$\left\vert \sum_{z_{j}\in G}\omega^{z_{j}y}\right\vert ^{2}$where the
expectation is taken over all $N^{L}$ sums then%
\[
E_{z}\left\vert \sum_{z_{j}\in G}\omega^{z_{j}y}\right\vert ^{2}=\frac
{1}{N^{L}}\sum_{G=\{z_{1},z_{2}...z_{L}\}}\left\vert \sum_{z_{j}\in G}%
\omega^{z_{j}y}\right\vert ^{2}=L
\]

\begin{proof}
Let $\omega^{zy}=\cos(\theta_{z})+i\sin(\theta_{z})$ where $\theta_{z}=2\pi
zy/N$ then%
\begin{align*}
&  E_{z}\left\vert \sum_{z\in G}\omega^{zy}\right\vert ^{2}\\
&  =\frac{1}{N^{L}}\sum_{G=\{z_{1},z_{2}...z_{L}\}}\left\vert \sum_{z_{j}\in
G}\omega^{z_{j}y}\right\vert ^{2}\\
&  =\frac{1}{N^{L}}\sum_{G=\{z_{1},z_{2}...z_{L}\}}\left\vert \sum_{z_{j}\in
G}\cos(\theta_{z_{j}})+i\sin(\theta_{z_{j}})\right\vert ^{2}\\
&  =\frac{1}{N^{L}}\sum_{G=\{z_{1},z_{2}...z_{L}\}}((\sum_{z_{j}\in G}%
\cos(\theta_{z_{j}}))^{2}+(\sum_{z_{j}\in G}\sin(\theta_{z_{j}}))^{2})\\
&  =\frac{1}{N^{L}}\sum_{G=\{z_{1},z_{2}...z_{L}\}}(\sum_{x_{j}\in G}%
\sum_{y_{j}\in G}\cos(\theta_{x_{j}})\cos(\theta_{y_{j}})+\sum_{x_{j}\in
G}\sum_{y_{j}\in G}\sin(\theta_{x_{j}})\sin(\theta_{y_{j}}))\\
&  =\frac{1}{N^{L}}\sum_{G=\{z_{1},z_{2}...z_{L}\}}\sum_{z_{j}\in G}\cos
^{2}(\theta_{z_{j}})+\sin^{2}(\theta_{z_{j}})+other\text{ }terms\\
&  =\frac{1}{N^{L}}\sum_{G=\{z_{1},z_{2}...z_{L}\}}\sum_{z_{j}\in G}1\\
&  =\frac{1}{N^{L}}\sum_{G=\{z_{1},z_{2}...z_{L}\}}L\\
&  =L
\end{align*}
where we have used the fact that
\[
\frac{1}{N^{L}}\sum_{G=\{z_{1},z_{2}...z_{L}\}}\sum_{x_{j}\in G}\sum_{y_{j}\in
G}\cos(\theta_{x_{j}})\cos(\theta_{y_{j}})=0\text{ }%
\]
where we sum over the variables independently for both cosines. In addition%
\[
\frac{1}{N^{L}}\sum_{G=\{z_{1},z_{2}...z_{L}\}}\sum_{x_{j}\in G}\sum_{y_{j}\in
G}\sin(\theta_{x_{j}})\sin(\theta_{y_{j}})=0\text{ }%
\]
where we sum over the variables independently for each sin. Another way to see
this is to consider the integral approximations of these sums:%
\begin{align*}
&  \frac{1}{2\pi}\int_{\theta1=0}^{\theta1=2\pi}\frac{1}{2\pi}\int_{\theta
2=0}^{\theta2=2\pi}\cos(\theta1)\cos(\theta2)d\theta2d\theta1\\
&  =\frac{1}{2\pi}\int_{\theta1=0}^{\theta1=2\pi}\cos(\theta1)\frac{1}{2\pi
}\int_{\theta2=0}^{\theta2=2\pi}\cos(\theta2)d\theta2d\theta1\\
&  =\frac{1}{2\pi}\int_{\theta1=0}^{\theta1=2\pi}\cos(\theta1)0d\theta1=0
\end{align*}
and%
\begin{align*}
\frac{1}{2\pi}  &  \int_{\theta1=0}^{\theta1=2\pi}\frac{1}{2\pi}\int
_{\theta2=0}^{\theta2=2\pi}\sin(\theta1)\sin(\theta2)d\theta2d\theta1\\
&  =\frac{1}{2\pi}\int_{\theta1=0}^{\theta1=2\pi}\sin(\theta1)\frac{1}{2\pi
}\int_{\theta2=0}^{\theta2=2\pi}\sin(\theta2)d\theta2d\theta1\\
&  =\frac{1}{2\pi}\int_{\theta1=0}^{\theta1=2\pi}\sin(\theta1)0d\theta1=0
\end{align*}

\end{proof}
\end{theorem}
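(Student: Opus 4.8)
The plan is to work directly with the complex exponentials rather than separating $\omega^{zy}$ into real and imaginary parts as the displayed computation does. First I would use $|w|^2=w\overline{w}$ together with $|\omega|=1$ (so that $\overline{\omega^{z_k y}}=\omega^{-z_k y}$) to write
\[
\left|\sum_{z_j\in G}\omega^{z_j y}\right|^2=\left(\sum_{j=1}^L\omega^{z_j y}\right)\left(\sum_{k=1}^L\omega^{-z_k y}\right)=\sum_{j=1}^L\sum_{k=1}^L\omega^{(z_j-z_k)y}.
\]
This recasts the quantity to be averaged as a double sum of single complex exponentials, which is far easier to average over the $N^L$ tuples $(z_1,\dots,z_L)$ than a product of two real trigonometric sums.

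Next I would split the double sum into its diagonal part ($j=k$) and its off-diagonal part ($j\neq k$) and average term by term. Each diagonal term is $\omega^{(z_j-z_j)y}=\omega^0=1$ independently of the value of $z_j$, so the $L$ diagonal terms contribute exactly $L$ to the average. For a fixed off-diagonal pair $j\neq k$ the summand $\omega^{(z_j-z_k)y}$ depends only on the two coordinates $z_j$ and $z_k$; summing over the remaining $L-2$ coordinates produces a factor $N^{L-2}$ that cancels against the normalization, leaving
\[
\frac{1}{N^L}\sum_{(z_1,\dots,z_L)}\omega^{(z_j-z_k)y}=\left(\frac{1}{N}\sum_{z=0}^{N-1}\omega^{zy}\right)\left(\frac{1}{N}\sum_{z=0}^{N-1}\omega^{-zy}\right).
\]
Each factor is a finite geometric series in the $N$th root of unity $\omega^{\pm y}$, and each vanishes precisely when $\omega^{\pm y}\neq 1$, i.e. when $y\not\equiv 0\pmod N$. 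Since every measured value of interest satisfies $1\le y\le N-1$, this hypothesis holds, so all $L(L-1)$ off-diagonal terms are zero and the total average is $L$.

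The step I expect to require the most care is the off-diagonal vanishing at the level of the explicit average $\frac{1}{N^L}\sum_G$: one must verify that the summand genuinely factors across coordinates for each fixed pair $(j,k)$, so that the sum over the irrelevant $L-2$ indices decouples and reduces to a product of two single-variable root-of-unity sums. This is also where the hypothesis $y\not\equiv 0\pmod N$ must be made explicit, since it is silently assumed in the statement: if instead $y\equiv 0\pmod N$ then every term equals $1$, the inner sum is identically $L$, and the average is $L^2$ rather than $L$.
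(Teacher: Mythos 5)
Your proposal is correct, and it takes a genuinely cleaner route than the paper's. The paper splits $\omega^{z_j y}$ into $\cos(\theta_{z_j})+i\sin(\theta_{z_j})$, expands $\bigl(\sum\cos\bigr)^2+\bigl(\sum\sin\bigr)^2$ into double sums, extracts the diagonal terms $\cos^2+\sin^2=1$ (giving $L$), and then argues that the off-diagonal cross terms vanish; but its justification of that last step is loose, asserting that the cross sums are zero when one ``sums over the variables independently'' and backing this with an integral approximation $\frac{1}{2\pi}\int_0^{2\pi}\cos\theta\,d\theta=0$, which is only a heuristic stand-in for the exact fact being used. You instead multiply by the conjugate to get $\sum_{j,k}\omega^{(z_j-z_k)y}$, so the diagonal again contributes exactly $L$, while each off-diagonal expectation factors (by independence of the coordinates, after the sum over the irrelevant $L-2$ indices cancels the normalization) into a product of two exact root-of-unity geometric sums $\frac{1}{N}\sum_{z=0}^{N-1}\omega^{\pm zy}$, each identically zero when $y\not\equiv 0\pmod N$. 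This buys two things: the vanishing of the cross terms is exact rather than approximated, and you make explicit the hypothesis $y\not\equiv 0\pmod N$ that the theorem statement silently assumes --- indeed, if $y\equiv 0\pmod N$ every summand equals $1$ and the expectation is $L^2$, not $L$. In the paper's application this hypothesis always holds, since the theorem is invoked only in the cases where $y\neq 0$ (Cases B, C, D of Chapter 3), with $y=0$ treated separately; but flagging it is a genuine improvement over both the statement and the paper's proof.
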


\begin{theorem}
Let $G=\{z_{1},z_{2},...,z_{L}\}$ be a set of $L$ uniform random variables
which take values in $\{0,1,...,N-1\}$. Consider the expected value of
\[
\left\vert a\omega^{sy}+b\sum_{z_{j}\in G}\omega^{z_{j}y}\right\vert ^{2}%
\]
where the expectation is taken over all $N^{L}$ sums then%
\begin{align*}
&  E_{z}\left\vert a\omega^{sy}+b\sum_{z_{j}\in G}\omega^{z_{j}y}\right\vert
^{2}\\
&  =\frac{1}{N^{L}}\sum_{G=\{z_{1},z_{2}...z_{L}\}}\left\vert a\omega
^{sy}+b\sum_{z_{j}\in G}\omega^{z_{j}y}\right\vert ^{2}\\
&  =a^{2}+b^{2}L
\end{align*}

\begin{proof}
Let $\omega^{zy}=\cos(\theta_{z})+i\sin(\theta_{z})$ where $\theta_{z}=2\pi
zy/N$ and $\omega^{sy}=\cos(\varphi)+i\sin(\varphi)$ where $\varphi=2\pi sy/N$
then%
\begin{align*}
&  E_{z}\left\vert a\omega^{sy}+b\sum_{z_{j}\in G}\omega^{z_{j}y}\right\vert
^{2}\\
&  =\frac{1}{N^{L}}\sum_{G=\{z_{1},z_{2}...z_{L}\}}\left\vert a\omega
^{sy}+b\sum_{z_{j}\in G}\omega^{z_{j}y}\right\vert ^{2}\\
&  =\frac{1}{N^{L}}\sum_{G=\{z_{1},z_{2}...z_{L}\}}\left\vert a\cos
(\varphi)+ai\sin(\varphi)+b\sum_{z_{j}\in G}\cos(\theta_{z_{j}})+i\sin
(\theta_{z_{j}})\right\vert ^{2}\\
&  =\frac{1}{N^{L}}\sum_{G=\{z_{1},z_{2}...z_{L}\}}(a\cos(\varphi
)+b\sum_{z_{j}\in G}\cos(\theta_{z_{j}}))^{2}+(a\sin(\varphi)+b\sum_{z_{j}\in
G}\sin(\theta_{z_{j}}))^{2}\\
&  =\frac{1}{N^{L}}\sum_{G=\{z_{1},z_{2}...z_{L}\}}(a\cos(\varphi)+\sum
_{x_{j}\in G}b\cos(\theta_{x_{j}}))(a\cos(\varphi)+\sum_{y_{j}\in G}%
b\cos(\theta_{y_{j}}))\\
&  +(a\sin(\varphi)+b\sum_{x_{j}\in G}\sin(\theta_{x_{j}}))(a\sin
(\varphi)+b\sum_{y_{j}\in G}\sin(\theta_{y_{j}}))\\
&  =\frac{1}{N^{L}}\sum_{G=\{z_{1},z_{2}...z_{L}\}}(a^{2}\cos^{2}%
(\varphi)+b^{2}\sum_{x_{j}\in G}\sum_{y_{j}\in G}\cos(\theta_{x_{j}}%
)\cos(\theta_{y_{j}})+\\
&  2ab\cos(\varphi)\sum_{y_{j}\in G}\cos(\theta_{y_{j}}))\\
&  +(a^{2}\sin^{2}(\varphi)+b^{2}\sum_{x_{j}\in G}\sum_{y_{j}\in G}\sin
(\theta_{x_{j}})\sin(\theta_{y_{j}})+\\
&  2ab\sin(\varphi)\sum_{y_{j}\in G}\sin(\theta_{y_{j}}))\\
&  =\frac{1}{N^{L}}\sum_{G=\{z_{1},z_{2}...z_{L}\}}(a^{2}\cos^{2}%
(\varphi)+a^{2}\sin^{2}(\varphi)+\\
&  b^{2}\sum_{x_{j}\in G}\sum_{y_{j}\in G}\cos(\theta_{x_{j}})\cos
(\theta_{y_{j}})+b^{2}\sum_{x_{j}\in G}\sum_{y_{j}\in G}\sin(\theta_{x_{j}%
})\sin(\theta_{y_{j}}))\\
&  =\frac{1}{N^{L}}\sum_{G=\{z_{1},z_{2}...z_{L}\}}(a^{2}+b^{2}\sum_{x_{j}\in
G}\sum_{y_{j}\in G}(\cos(\theta_{x_{j}})\cos(\theta_{y_{j}})+\sin
(\theta_{x_{j}})\sin(\theta_{y_{j}})))\\
&  =a^{2}+b^{2}L\\
&  \text{ from the previous theorem.}%
\end{align*}

\end{proof}
\end{theorem}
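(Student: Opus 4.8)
The plan is to reduce the claim to the previous theorem by expanding the squared modulus in Cartesian form and checking that the only surviving contributions are a deterministic self-term and the term already evaluated. First I would write $\omega^{z_j y}=\cos\theta_{z_j}+i\sin\theta_{z_j}$ with $\theta_{z_j}=2\pi z_j y/N$, and $\omega^{sy}=\cos\varphi+i\sin\varphi$ with $\varphi=2\pi sy/N$. Then the complex number $a\omega^{sy}+b\sum_{z_j\in G}\omega^{z_j y}$ has real part $a\cos\varphi+b\sum_j\cos\theta_{z_j}$ and imaginary part $a\sin\varphi+b\sum_j\sin\theta_{z_j}$, so its squared modulus is the sum of the squares of these two real quantities.

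Next I would expand each square and sort the terms into three groups. The pure-$a$ group is $a^2\cos^2\varphi+a^2\sin^2\varphi=a^2$, which is constant and therefore passes through the expectation $E_z$ unchanged. The pure-$b$ group is $b^2\big(\sum_j\cos\theta_{z_j}\big)^2+b^2\big(\sum_j\sin\theta_{z_j}\big)^2$, which is precisely $b^2\big|\sum_{z_j\in G}\omega^{z_j y}\big|^2$; by the previous theorem its expectation equals $b^2 L$. The remaining cross group is $2ab\cos\varphi\sum_j\cos\theta_{z_j}+2ab\sin\varphi\sum_j\sin\theta_{z_j}$, and the whole argument hinges on showing this averages to zero.

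To handle the cross group I would use that the $z_j$ are independent and uniform on $\{0,1,\dots,N-1\}$, so averaging over the $N^L$ sums factors each term into the single-variable average $\tfrac{1}{N}\sum_{z=0}^{N-1}\cos(2\pi z y/N)=\operatorname{Re}\tfrac{1}{N}\sum_{z=0}^{N-1}\omega^{zy}=0$ for $y\neq0\bmod N$ (a geometric sum of nontrivial $N$th roots of unity), and likewise $\tfrac{1}{N}\sum_{z=0}^{N-1}\sin(2\pi z y/N)=0$. Hence $E_z\big[\sum_j\cos\theta_{z_j}\big]=E_z\big[\sum_j\sin\theta_{z_j}\big]=0$, and since $\cos\varphi$ and $\sin\varphi$ are constants, linearity of expectation kills the entire cross term. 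Adding the three groups gives $a^2+b^2L$.

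The main obstacle is the cross-term vanishing: it is the only place where the randomness and independence of $G$ are genuinely used, and it requires the hypothesis $y\neq0\bmod N$ (the same regime underlying the previous theorem, since for $y\equiv0$ every $\omega^{zy}=1$ and the first moments do not vanish). Once the first-moment sums $\tfrac1N\sum_z\cos(2\pi zy/N)=\tfrac1N\sum_z\sin(2\pi zy/N)=0$ are in hand, the rest is a routine reorganization of terms together with one appeal to the previous theorem, so I would keep those steps brief.
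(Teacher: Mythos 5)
Your proposal is correct and follows essentially the same route as the paper's proof: expand in Cartesian form, separate the constant group $a^2\cos^2\varphi+a^2\sin^2\varphi=a^2$, identify the pure-$b$ group with $b^2\left\vert \sum_{z_j\in G}\omega^{z_j y}\right\vert^2$ and invoke the previous theorem to get $b^2L$, and argue the $2ab$ cross terms average to zero. If anything, your treatment of the cross terms is the more careful one, since you justify their vanishing by the exact geometric-sum identity $\frac{1}{N}\sum_{z=0}^{N-1}\omega^{zy}=0$ for $y\not\equiv 0 \pmod N$ (and flag that hypothesis explicitly), whereas the paper drops these terms between lines and elsewhere relies on an integral approximation for the analogous cancellation.
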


\begin{corollary}
Let $G=\{z_{1},z_{2},...,z_{L}\}$ be a set of $L$ uniform random variables
which take values in $\{0,1,...,N-1\}$. Consider the expected value of
$\left\vert \frac{1}{T}\sum_{z_{j}\in G}\omega^{z_{j}y}\right\vert ^{2}$ and
$\left\vert \frac{M}{T}\omega^{sy}+\frac{1}{T}\sum_{z_{j}\in G}\omega^{z_{j}%
y}\right\vert ^{2}$where the expectation is taken over all $N^{L}$ sums then%
\begin{align*}
&  E_{z}\left\vert \frac{1}{T}\sum_{z_{j}\in G}\omega^{z_{j}y}\right\vert
^{2}\\
&  =L/T^{2}\\
&  =L/(L+M)^{2}\\
&  \rightarrow0\text{ as }L\rightarrow\infty
\end{align*}

\end{corollary}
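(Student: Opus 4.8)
The plan is to derive both expected values directly from the two preceding theorems, since the quantities in the statement are simply rescaled versions of the sums already analyzed; the only work is to factor the deterministic constants out of the modulus, apply the appropriate theorem, and take the limit in $L$.

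First I would treat
\[
E_{z}\left\vert \frac{1}{T}\sum_{z_{j}\in G}\omega^{z_{j}y}\right\vert ^{2}.
\]
Because $1/T$ is a fixed (non-random) constant, it factors out of the squared modulus as $1/T^{2}$, giving
\[
E_{z}\left\vert \frac{1}{T}\sum_{z_{j}\in G}\omega^{z_{j}y}\right\vert ^{2}
=\frac{1}{T^{2}}\,E_{z}\left\vert \sum_{z_{j}\in G}\omega^{z_{j}y}\right\vert ^{2}.
\]
The remaining expectation is exactly the quantity shown to equal $L$ in the preceding theorem computing $E_{z}\left\vert \sum_{z_{j}\in G}\omega^{z_{j}y}\right\vert^{2}$, so the whole expression equals $L/T^{2}$, and substituting $T=L+M$ yields $L/(L+M)^{2}$. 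For the second quantity I would invoke the theorem computing $E_{z}\left\vert a\omega^{sy}+b\sum_{z_{j}\in G}\omega^{z_{j}y}\right\vert^{2}=a^{2}+b^{2}L$ with the identification $a=M/T$ and $b=1/T$; this gives $M^{2}/T^{2}+L/T^{2}=(M^{2}+L)/(L+M)^{2}$, matching the Case~B entry of the expected-value table.

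Finally, for the limit I would observe that in $L/(L+M)^{2}$ the numerator grows linearly in $L$ while the denominator grows quadratically, so the ratio tends to $0$; quantitatively $0\leq L/(L+M)^{2}\leq L/L^{2}=1/L\rightarrow 0$ as $L\rightarrow\infty$. There is no genuine obstacle here: all the analytic content lives in the two preceding theorems, and the corollary is pure bookkeeping. The only point requiring care is that the prefactors $M/T$ and $1/T$ are deterministic, so they commute with the expectation $E_{z}$ and come out of the squared modulus as their squares, rather than being absorbed into the random sum.
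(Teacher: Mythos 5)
Your proposal is correct and matches the paper's approach: the paper states this corollary without a separate proof precisely because it follows, as you show, by factoring the deterministic constants $1/T$ and $M/T$ out of the squared modulus and invoking the two preceding theorems ($E_{z}\left\vert \sum_{z_{j}\in G}\omega^{z_{j}y}\right\vert ^{2}=L$ and $E_{z}\left\vert a\omega^{sy}+b\sum_{z_{j}\in G}\omega^{z_{j}y}\right\vert ^{2}=a^{2}+b^{2}L$ with $a=M/T$, $b=1/T$). Your explicit bound $0\leq L/(L+M)^{2}\leq 1/L$ for the limit is a small addition the paper leaves implicit, but the substance is identical.
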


and%
\begin{align*}
&  E_{z}\left\vert \frac{M}{T}\omega^{sy}+\frac{1}{T}\sum_{z_{j}\in G}%
\omega^{z_{j}y}\right\vert ^{2}\\
&  =(M^{2}+L)/T^{2}\\
&  =(M^{2}+L)/(L+M)^{2}\\
&  \rightarrow0\text{ as }L\rightarrow\infty
\end{align*}

\begin{theorem}
Let $G=\{z_{1},z_{2},...,z_{L}\}$ be a set of $L$ uniform random variables
which take values in $\{0,1,...,N-1\}$. Consider the expected value of
\[
\left\vert a+ic+b\sum_{z_{j}\in G}\omega^{z_{j}y}\right\vert ^{2}%
\]
where the expectation is taken over all $N^{L}$ sums then%
\begin{align*}
&  E_{z}\left\vert a+ic+b\sum_{z_{j}\in G}\omega^{z_{j}y}\right\vert ^{2}\\
&  =\frac{1}{N^{L}}\sum_{G=\{z_{1},z_{2}...z_{L}\}}\left\vert a+ic+b\sum
_{z_{j}\in G}\omega^{z_{j}y}\right\vert ^{2}\\
&  =a^{2}+c^{2}+b^{2}L
\end{align*}

\end{theorem}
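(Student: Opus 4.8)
The plan is to mirror the proof of the immediately preceding theorem almost verbatim, the only new ingredient being that the constant term is now a general complex number $a+ic$ in place of $a\omega^{sy}$. Note first that $a\omega^{sy}=a\cos\varphi+ia\sin\varphi$ contributes squared magnitude $a^{2}\cos^{2}\varphi+a^{2}\sin^{2}\varphi=a^{2}$ in that earlier argument, so replacing it by $a+ic$ (of squared magnitude $a^{2}+c^{2}$) should simply swap $a^{2}$ for $a^{2}+c^{2}$ in the final answer, leaving the $b^{2}L$ term untouched.

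Concretely, I would begin by writing $\omega^{z_{j}y}=\cos(\theta_{z_{j}})+i\sin(\theta_{z_{j}})$ with $\theta_{z_{j}}=2\pi z_{j}y/N$, and splitting the modulus squared into its real and imaginary parts:
\[
\left\vert a+ic+b\sum_{z_{j}\in G}\omega^{z_{j}y}\right\vert ^{2}
=\left(a+b\sum_{z_{j}\in G}\cos\theta_{z_{j}}\right)^{2}
+\left(c+b\sum_{z_{j}\in G}\sin\theta_{z_{j}}\right)^{2}.
\]
Expanding both squares produces three kinds of terms: the pure constant piece $a^{2}+c^{2}$; the cross terms $2ab\sum_{z_{j}\in G}\cos\theta_{z_{j}}$ and $2cb\sum_{z_{j}\in G}\sin\theta_{z_{j}}$; and the quadratic piece $b^{2}\bigl[(\sum\cos\theta_{z_{j}})^{2}+(\sum\sin\theta_{z_{j}})^{2}\bigr]=b^{2}\bigl\vert\sum_{z_{j}\in G}\omega^{z_{j}y}\bigr\vert^{2}$.

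Next I would apply the averaging operator $\frac{1}{N^{L}}\sum_{G=\{z_{1},\dots,z_{L}\}}$ term by term. The constant $a^{2}+c^{2}$ is unaffected. The two cross terms vanish under averaging because $\frac{1}{N}\sum_{z=0}^{N-1}\cos(2\pi zy/N)=\frac{1}{N}\sum_{z=0}^{N-1}\sin(2\pi zy/N)=0$ for $y\neq0$, which is exactly the vanishing already established in the first theorem of this section (equivalently $E_{z}[\omega^{z_{j}y}]=0$, applied to each independent $z_{j}$). Finally, the quadratic term averages to $b^{2}L$ directly by that same first theorem. Summing the three contributions yields $a^{2}+c^{2}+b^{2}L$, as claimed.

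There is no serious obstacle here: the computation is routine and essentially a recycling of Theorem~\ref{} above. The only point requiring care is to keep the real part $a$ and the imaginary part $c$ tracked separately through the expansion and to confirm that the two cross-term families still average to zero; both follow immediately from the orthogonality/vanishing facts proved earlier, so I would simply cite those rather than re-deriving them. As in the preceding theorems, the argument tacitly assumes $y\neq0$ (so that $\sum_{z=0}^{N-1}\omega^{zy}=0$); for $y=0$ the quadratic term would instead contribute $b^{2}L^{2}$.
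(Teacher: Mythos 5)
Your proposal is correct and follows essentially the same route as the paper's own proof: decompose $\omega^{z_j y}$ into real and imaginary parts, expand the modulus squared into the constant piece $a^{2}+c^{2}$, the two cross-term families, and the quadratic piece, then kill the cross terms by the vanishing of the averaged sines and cosines and evaluate the quadratic piece as $b^{2}L$ via the first theorem of the section. Your explicit remark that the argument tacitly requires $y\neq 0$ (equivalently $\sum_{z=0}^{N-1}\omega^{zy}=0$) is a caveat the paper leaves implicit, but it does not change the argument.
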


\begin{proof}%
\begin{align*}
&  E_{z}\left\vert a+ic+b\sum_{z_{j}\in G}\omega^{z_{j}y}\right\vert ^{2}\\
&  =\frac{1}{N^{L}}\sum_{G=\{z_{1},z_{2}...z_{L}\}}\left\vert a+ic+b\sum
_{z_{j}\in G}\omega^{z_{j}y}\right\vert ^{2}\\
&  =\frac{1}{N^{L}}\sum_{G=\{z_{1},z_{2}...z_{L}\}}\left\vert a+ic+b\sum
_{z_{j}\in G}\cos(\theta_{z_{j}})+i\sin(\theta_{z_{j}})\right\vert ^{2}\\
&  =\frac{1}{N^{L}}\sum_{G=\{z_{1},z_{2}...z_{L}\}}(a+b\sum_{z_{j}\in G}%
\cos(\theta_{z_{j}}))^{2}+(c+b\sum_{z_{j}\in G}\sin(\theta_{z_{j}}))^{2}\\
&  =\frac{1}{N^{L}}\sum_{G=\{z_{1},z_{2}...z_{L}\}}(a+\sum_{x_{j}\in G}%
b\cos(\theta_{x_{j}}))(a+\sum_{y_{j}\in G}b\cos(\theta_{y_{j}}))\\
&  +(c+b\sum_{x_{j}\in G}\sin(\theta_{x_{j}}))(c+b\sum_{y_{j}\in G}\sin
(\theta_{y_{j}}))\\
&  =\frac{1}{N^{L}}\sum_{G=\{z_{1},z_{2}...z_{L}\}}(a^{2}+b^{2}\sum_{x_{j}\in
G}\sum_{y_{j}\in G}\cos(\theta_{x_{j}})\cos(\theta_{y_{j}})+\\
&  2ab\sum_{y_{j}\in G}\cos(\theta_{y_{j}}))\\
&  +(c^{2}+b^{2}\sum_{x_{j}\in G}\sum_{y_{j}\in G}\sin(\theta_{x_{j}}%
)\sin(\theta_{y_{j}})+\\
&  2cb\sum_{y_{j}\in G}\sin(\theta_{y_{j}}))\\
&  =\frac{1}{N^{L}}\sum_{G=\{z_{1},z_{2}...z_{L}\}}(a^{2}+c^{2}+\\
&  b^{2}\sum_{x_{j}\in G}\sum_{y_{j}\in G}\cos(\theta_{x_{j}})\cos
(\theta_{y_{j}})+b^{2}\sum_{x_{j}\in G}\sum_{y_{j}\in G}\sin(\theta_{x_{j}%
})\sin(\theta_{y_{j}}))\\
&  =\frac{1}{N^{L}}\sum_{G=\{z_{1},z_{2}...z_{L}\}}(a^{2}+b^{2}\sum_{x_{j}\in
G}\sum_{y_{j}\in G}(\cos(\theta_{x_{j}})\cos(\theta_{y_{j}})+\sin
(\theta_{x_{j}})\sin(\theta_{y_{j}})))\\
&  =a^{2}+c^{2}+b^{2}L\\
&  \text{ from the previous theorem.}%
\end{align*}

\end{proof}

\begin{corollary}
Let $\varphi=2\pi MPy/N$ and let $\theta=2\pi Py/N$ then%
\begin{align*}
&  E\left\vert \frac{1}{T}\omega^{sy}\left[  \frac{1-\omega^{MPy}}%
{1-\omega^{Py}}\right]  +\frac{1}{T}\sum_{z\in G}\omega^{zy}\right\vert ^{2}\\
&  =\frac{1}{T^{2}}\frac{\sin^{2}\frac{\varphi}{2}}{\sin^{2}\frac{\theta}{2}%
}+\frac{1}{T^{2}}L\\
&  \leq\frac{M^{2}+L}{T^{2}}\\
&  =\frac{M^{2}+L}{(M+L)^{2}}\\
&  \rightarrow0\text{ as }L\rightarrow\infty
\end{align*}

\end{corollary}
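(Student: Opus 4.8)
The plan is to recognize this corollary as a direct specialization of the immediately preceding theorem, which computes $E_z\left\vert a+ic+b\sum_{z_j\in G}\omega^{z_jy}\right\vert^2 = a^2+c^2+b^2L$. First I would set $b=\frac{1}{T}$ and identify the deterministic constant
\[
a+ic = \frac{1}{T}\omega^{sy}\left[\frac{1-\omega^{MPy}}{1-\omega^{Py}}\right],
\]
where $a$ and $c$ denote its real and imaginary parts. Since this quantity does not depend on the random set $G$, it plays exactly the role of the fixed complex number in the hypothesis of that theorem, and applying the theorem verbatim gives
\[
E\left\vert \frac{1}{T}\omega^{sy}\left[\frac{1-\omega^{MPy}}{1-\omega^{Py}}\right]+\frac{1}{T}\sum_{z\in G}\omega^{zy}\right\vert^2 = a^2+c^2+\frac{1}{T^2}L.
\]

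Second, I would evaluate $a^2+c^2=\vert a+ic\vert^2$. Using $\vert\omega^{sy}\vert=1$ together with the identity $\vert 1-e^{i\alpha}\vert^2=4\sin^2(\alpha/2)$ already established earlier in the chapter, I obtain
\[
\vert a+ic\vert^2 = \frac{1}{T^2}\frac{\vert 1-\omega^{MPy}\vert^2}{\vert 1-\omega^{Py}\vert^2} = \frac{1}{T^2}\frac{4\sin^2(\pi MPy/N)}{4\sin^2(\pi Py/N)} = \frac{1}{T^2}\frac{\sin^2\frac{\varphi}{2}}{\sin^2\frac{\theta}{2}},
\]
recalling that $\varphi=2\pi MPy/N$ and $\theta=2\pi Py/N$. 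Substituting this back into the expression above yields the claimed exact expected value $\frac{1}{T^2}\frac{\sin^2(\varphi/2)}{\sin^2(\theta/2)}+\frac{1}{T^2}L$.

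Third, for the upper bound I would invoke the geometric-series estimate $\left\vert\frac{1-\omega^{MPy}}{1-\omega^{Py}}\right\vert = \left\vert\sum_{r=0}^{M-1}\omega^{rPy}\right\vert\le M$, which follows from the triangle inequality and is equivalent to the bound $0\le \frac{\sin^2(\pi MPy/N)}{\sin^2(\pi Py/N)}\le M^2$ noted earlier in the excerpt. Hence $\frac{\sin^2(\varphi/2)}{\sin^2(\theta/2)}\le M^2$, giving $\frac{1}{T^2}\frac{\sin^2(\varphi/2)}{\sin^2(\theta/2)}+\frac{1}{T^2}L\le \frac{M^2+L}{T^2}=\frac{M^2+L}{(M+L)^2}$, where the last equality is just $T=M+L$. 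The final limit then follows because the numerator grows linearly in $L$ while the denominator grows quadratically, so the ratio behaves like $1/L$ and tends to $0$ as $L\rightarrow\infty$.

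There is no substantive obstacle in this argument; the only point requiring a moment of care is confirming that the deterministic term is genuinely constant with respect to the averaging over $G$, so that the hypothesis of the preceding theorem applies directly with $a^2+c^2$ read as $\vert a+ic\vert^2$. Everything else is a single substitution followed by the standard Dirichlet-kernel magnitude bound, both of which are already available in the text.
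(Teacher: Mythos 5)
Your proposal is correct, and it follows the same overall strategy as the paper: specialize the preceding theorem (with $b=\tfrac{1}{T}$ and the deterministic term playing the role of $a+ic$) and then evaluate $a^{2}+c^{2}$. Where you differ is in how that evaluation is carried out. The paper's proof spends nearly all of its length computing $a$ and $c$ explicitly: it expands $\frac{1-\omega^{MPy}}{1-\omega^{Py}}$ into real and imaginary parts via half-angle identities, arrives at the polar form $\frac{\sin\frac{\varphi}{2}}{\sin\frac{\theta}{2}}\bigl[\cos(\tfrac{\varphi}{2}-\tfrac{\theta}{2})+i\sin(\tfrac{\varphi}{2}-\tfrac{\theta}{2})\bigr]$, multiplies by $\frac{1}{T}(\cos\lambda+i\sin\lambda)$ with $\lambda=2\pi sy/N$, and only then sums the squares, using $\cos^{2}\lambda+\sin^{2}\lambda=1$ to collapse the answer. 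You instead observe that $a^{2}+c^{2}=\vert a+ic\vert^{2}$ and invoke multiplicativity of the modulus together with the identity $\vert 1-e^{i\alpha}\vert^{2}=4\sin^{2}(\alpha/2)$ (already available as a lemma in the chapter), which yields $\frac{1}{T^{2}}\frac{\sin^{2}(\varphi/2)}{\sin^{2}(\theta/2)}$ in one line since $\vert\omega^{sy}\vert=1$. Your route is shorter and less error-prone; the paper's route additionally produces the phase of the Dirichlet-kernel factor, but that extra information is never used in the corollary. Your justification of the bound $\frac{\sin^{2}(\varphi/2)}{\sin^{2}(\theta/2)}\leq M^{2}$ via the geometric series and the triangle inequality, and the final $1/L$ decay argument, match what the paper asserts (the bound appears there as a separate lemma), so the proof is complete as proposed.
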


\begin{proof}
Let $\varphi=2\pi MPy/N$ and let $\theta=2\pi Py/N$ then%
\begin{align*}
\frac{1-\omega^{MPy}}{1-\omega^{Py}}  &  =\frac{1-\cos\varphi-i\sin\varphi
}{1-\cos\theta-i\sin\theta}\\
&  =\frac{[(1-\cos\varphi)-i\sin\varphi][(1-\cos\theta)+i\sin\theta]}%
{[(1-\cos\theta)-i\sin\theta][(1-\cos\theta)+i\sin\theta]}\\
&  =\frac{[2\sin^{2}\frac{\varphi}{2}-i\sin\varphi][2\sin^{2}\frac{\theta}%
{2}+i\sin\theta]}{2-2\cos\theta}\\
&  =\frac{4\sin^{2}\frac{\varphi}{2}\sin^{2}\frac{\theta}{2}-i2\sin\varphi
\sin^{2}\frac{\theta}{2}+i2\sin\theta\sin^{2}\frac{\varphi}{2}+\sin\varphi
\sin\theta}{4\sin^{2}\frac{\theta}{2}}\\
&  =\frac{%
\begin{array}
[c]{c}%
4\sin^{2}\frac{\varphi}{2}\sin^{2}\frac{\theta}{2}-i4\sin\frac{\varphi}{2}%
\cos\frac{\varphi}{2}\sin^{2}\frac{\theta}{2}+\\
i4\sin\frac{\theta}{2}\cos\frac{\theta}{2}\sin^{2}\frac{\varphi}{2}+4\sin
\frac{\theta}{2}\cos\frac{\theta}{2}\sin\frac{\varphi}{2}\cos\frac{\varphi}{2}%
\end{array}
}{4\sin^{2}\frac{\theta}{2}}\\
&  =\frac{\sin\frac{\varphi}{2}\sin\frac{\theta}{2}\cos(\frac{\varphi}%
{2}-\frac{\theta}{2})+i\sin\frac{\varphi}{2}\sin\frac{\theta}{2}\sin
(\frac{\varphi}{2}-\frac{\theta}{2})}{\sin^{2}\frac{\theta}{2}}\\
&  =\frac{\sin\frac{\varphi}{2}[\cos(\frac{\varphi}{2}-\frac{\theta}{2}%
)+i\sin(\frac{\varphi}{2}-\frac{\theta}{2})]}{\sin\frac{\theta}{2}}%
\end{align*}
Let $\lambda=2\pi sy/N$ then%
\begin{align*}
&  \frac{1}{T}(\cos\lambda+i\sin\lambda)\frac{1-\omega^{MPy}}{1-\omega^{Py}}\\
&  =\frac{1}{T}(\cos\lambda+i\sin\lambda)\frac{\sin\frac{\varphi}{2}%
[\cos(\frac{\varphi}{2}-\frac{\theta}{2})+i\sin(\frac{\varphi}{2}-\frac
{\theta}{2})]}{\sin\frac{\theta}{2}}\\
&  =\frac{1}{T}\frac{\sin\frac{\varphi}{2}[\cos\lambda\cos(\frac{\varphi}%
{2}-\frac{\theta}{2})-\sin\lambda\sin(\frac{\varphi}{2}-\frac{\theta}{2}%
)+}{\sin\frac{\theta}{2}}\\
&  \frac{i[\cos\lambda\sin(\frac{\varphi}{2}-\frac{\theta}{2})+\sin\lambda
\cos(\frac{\varphi}{2}-\frac{\theta}{2})]]}{\sin\frac{\theta}{2}}\\
&  =a+ic
\end{align*}
then
\begin{align*}
&  a^{2}+c^{2}\\
&  =\frac{1}{T^{2}}\frac{\sin^{2}\frac{\varphi}{2}}{\sin^{2}\frac{\theta}{2}%
}.[\cos^{2}\lambda+\sin^{2}\lambda]\\
&  =\frac{1}{T^{2}}\frac{\sin^{2}\frac{\varphi}{2}}{\sin^{2}\frac{\theta}{2}}%
\end{align*}

\end{proof}

\begin{theorem}
Let $G=\{z_{1},z_{2},...,z_{L}\}$ be a set of $L$ uniform random variables
which take values in $\{0,1,...,N-1\}$. Consider the variance of $\left\vert
\sum_{z_{j}\in G}\omega^{z_{j}y}\right\vert ^{2}$then%
\begin{align*}
Var\left\vert \sum_{z_{j}\in G}\omega^{z_{j}y}\right\vert ^{2}  &  =\frac
{1}{N^{L}}\sum_{G=\{z_{1},z_{2}...z_{L}\}}\left\vert \sum_{z_{j}\in G}%
\omega^{z_{j}y}\right\vert ^{4}-L^{2}\\
&  =2L^{2}-L-L^{2}\\
&  =L^{2}-L
\end{align*}

\end{theorem}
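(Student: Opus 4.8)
The plan is to compute the variance straight from its definition, $\mathrm{Var}(X)=E[X^{2}]-(E[X])^{2}$ with $X=\bigl|\sum_{z_{j}\in G}\omega^{z_{j}y}\bigr|^{2}$. By the preceding theorem we already have $E[X]=L$, so $(E[X])^{2}=L^{2}$, and the whole problem reduces to showing that the fourth moment $E\bigl|\sum_{z_{j}\in G}\omega^{z_{j}y}\bigr|^{4}=2L^{2}-L$; subtracting $L^{2}$ then yields the claimed $L^{2}-L$.

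First I would write $S=\sum_{j=1}^{L}\omega^{z_{j}y}$ and use $|S|^{2}=S\bar S=\sum_{j,k}\omega^{(z_{j}-z_{k})y}$ (the same expansion that produced $E|S|^{2}=L$, the diagonal $j=k$ contributing $L$ ones and the off-diagonal terms averaging to zero). Squaring, $|S|^{4}=\sum_{j,k,l,m}\omega^{(z_{j}-z_{k}+z_{l}-z_{m})y}$. The next step is to take the expectation over the independent uniform variables $z_{1},\dots ,z_{L}$ term by term. The basic averaging fact is $\frac{1}{N}\sum_{z=0}^{N-1}\omega^{cz}=0$ for $c\not\equiv 0\ (\mathrm{mod}\ N)$ and $=1$ for $c\equiv 0$; together with independence this makes $E\bigl[\omega^{(z_{j}-z_{k}+z_{l}-z_{m})y}\bigr]$ factor over the distinct index values, and, using $\omega^{z_{j}y}\overline{\omega^{z_{j}y}}=1$, a term survives and contributes $1$ exactly when every index carries net exponent zero --- that is, when the multiset of ``plus'' indices $\{j,l\}$ equals the multiset of ``minus'' indices $\{k,m\}$.

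The heart of the argument is then a short combinatorial count of the surviving $4$-tuples $(j,k,l,m)$. When $j=l$ the condition forces $k=m=j$, giving the $L$ fully diagonal tuples; when $j\neq l$ the two-element set $\{j,l\}$ must be matched by $\{k,m\}$ in one of the two orders $(k,m)=(j,l)$ or $(k,m)=(l,j)$, giving $2L(L-1)$ tuples. Hence $E|S|^{4}=L+2L(L-1)=2L^{2}-L$, and $\mathrm{Var}(X)=2L^{2}-L-L^{2}=L^{2}-L$. If one prefers to stay inside the real--imaginary bookkeeping of the earlier theorems, the same number drops out of $|S|^{4}=C^{4}+2C^{2}S_{\sin}^{2}+S_{\sin}^{4}$, with $C=\sum_{j}\cos\theta_{z_{j}}$ and $S_{\sin}=\sum_{j}\sin\theta_{z_{j}}$, once one feeds in the single-angle averages $E[\cos^{2}]=E[\sin^{2}]=\tfrac12$, $E[\cos^{4}]=E[\sin^{4}]=\tfrac38$, $E[\cos^{2}\sin^{2}]=\tfrac18$ and the vanishing odd moments.

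The step I expect to be the main obstacle is justifying the \emph{exact} vanishing of the cross terms rather than merely their approximate vanishing. The factored averages $E[\omega^{nz y}]$ are genuinely zero only when $ny\not\equiv 0\ (\mathrm{mod}\ N)$; the dangerous contributions are those in which a single index appears twice on the same side (net exponent $\pm 2$), which vanish only if $2y\not\equiv 0\ (\mathrm{mod}\ N)$. I would therefore either restrict to $y$ avoiding this degenerate value, so that the identity $2L^{2}-L$ holds on the nose, or, exactly as in the proof of the first theorem of this section, replace the discrete averages by the integral averages $\frac{1}{2\pi}\int_{0}^{2\pi}$, under which $E[\omega^{ny}]=0$ for every $n\neq 0$ and the clean closed form is immediate. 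All remaining steps are routine bookkeeping once the survival criterion is in place.
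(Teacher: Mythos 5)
Your proof is correct, and it reaches the key count $E\left\vert S\right\vert ^{4}=L+2L(L-1)=2L^{2}-L$ by a genuinely different and cleaner route than the paper. The paper also starts from $\mathrm{Var}(X)=E(X^{2})-(E(X))^{2}$, but it works entirely in the real--imaginary decomposition: it squares $(\sum_{j}\cos\theta_{z_{j}})^{2}+(\sum_{j}\sin\theta_{z_{j}})^{2}$, sorts the resulting four-fold sums into $\cos^{4}$, $\cos^{2}\cos^{2}$, $\cos^{2}\sin^{2}$ and $\sin^{4}$ blocks, and then evaluates the single-angle averages by \emph{integral approximation} ($\tfrac{1}{2\pi}\int_{0}^{2\pi}\cos^{4}\theta\,d\theta=\tfrac{3}{8}$, $\tfrac{1}{2\pi}\int_{0}^{2\pi}\cos^{2}\theta\sin^{2}\theta\,d\theta=\tfrac{1}{8}$, etc.), asserting that the remaining cross terms vanish; its tallies $3L/8+2L/8+3L/8=L$ and $(\tfrac{3}{4}+\tfrac{2}{4}+\tfrac{3}{4})L(L-1)=2L(L-1)$ then give $2L^{2}-L$. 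Your expansion $\left\vert S\right\vert ^{4}=\sum_{j,k,l,m}\omega^{(z_{j}-z_{k}+z_{l}-z_{m})y}$, combined with independence and the exact orthogonality $\frac{1}{N}\sum_{z}\omega^{czy}=0$ whenever $czy\neq0\operatorname{mod}N$, replaces those approximate trigonometric moments by an exact count of the multiset-matched tuples, which is both shorter and fully rigorous. Moreover, the caveat you flag is real, not cosmetic: exactness requires $y\neq0$ and $2y\neq0\operatorname{mod}N$. Since $N=2^{n}$, the excluded value $y=N/2$ does occur; there $\omega^{zy}=(-1)^{z}$, a tuple survives iff every index appears an even number of times, so $E\left\vert S\right\vert ^{4}=L+3L(L-1)$ and the variance is $2L^{2}-2L$, twice the stated value (and at $y=0$ the variance is $0$, not $L^{2}-L$). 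So the theorem as literally stated holds only for generic $y$; the paper's phrase about approximating averages by integrals silently assumes exactly the genericity condition you make explicit, and your version of the argument is the one that tells you precisely when the formula is valid.
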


\begin{proof}
Note that
\[
Var(X)=E(X^{2})-(E(X))^{2}%
\]
Let $\omega^{zy}=\cos(\theta_{z})+i\sin(\theta_{z})$ where $\theta_{z}=2\pi
zy/N$ and consider%
\begin{align*}
&  \frac{1}{N^{L}}\sum_{G=\{z_{1},z_{2}...z_{L}\}}\left\vert \sum_{z_{j}\in
G}\omega^{z_{j}y}\right\vert ^{4}\\
&  =\frac{1}{N^{L}}\sum_{G=\{z_{1},z_{2}...z_{L}\}}\left\vert \sum_{z_{j}\in
G}\cos(\theta_{z_{j}})+i\sin(\theta_{z_{j}})\right\vert ^{4}\\
&  =\frac{1}{N^{L}}\sum_{G=\{z_{1},z_{2}...z_{L}\}}((\sum_{z_{j}\in G}%
\cos(\theta_{z_{j}}))^{2}+(\sum_{z_{j}\in G}\sin(\theta_{z_{j}}))^{2})^{2}\\
&  =\frac{1}{N^{L}}\sum_{G=\{z_{1},z_{2}...z_{L}\}}(\sum_{x_{j}\in G}%
\sum_{y_{j}\in G}\cos(\theta_{x_{j}})\cos(\theta_{y_{j}})+\sum_{x_{j}\in
G}\sum_{y_{j}\in G}\sin(\theta_{x_{j}})\sin(\theta_{y_{j}}))^{2}\\
&  =\frac{1}{N^{L}}\sum_{G=\{z_{1},z_{2}...z_{L}\}}(\sum_{u_{j}\in G}%
\sum_{v_{j}\in G}\sum_{x_{j}\in G}\sum_{y_{j}\in G}\cos(\theta_{u_{j}}%
)\cos(\theta_{v_{j}})\cos(\theta_{x_{j}})\cos(\theta_{y_{j}})+\\
&  2\sum_{u_{j}\in G}\sum_{v_{j}\in G}\sum_{x_{j}\in G}\sum_{y_{j}\in G}%
\cos(\theta_{u_{j}})\cos(\theta_{v_{j}})\sin(\theta_{x_{j}})\sin(\theta
_{y_{j}})+\\
&  \sum_{u_{j}\in G}\sum_{v_{j}\in G}\sum_{x_{j}\in G}\sum_{y_{j}\in G}%
\sin(\theta_{u_{j}})\sin(\theta_{v_{j}})\sin(\theta_{x_{j}})\sin(\theta
_{y_{j}}))\\
&  =\frac{1}{N^{L}}\sum_{G=\{z_{1},z_{2}...z_{L}\}}(\sum_{z_{j}\in G}\cos
^{4}(\theta_{z_{j}})+\\
&  \frac{1}{2}\left(
\begin{array}
[c]{c}%
4\\
2
\end{array}
\right)  \sum_{x_{j}\in G}\sum_{\substack{y_{j}\in G\\x_{j}\neq y_{j}}%
}\cos^{2}(\theta_{x_{j}})\cos^{2}(\theta_{y_{j}})+\\
&  2\sum_{x_{j}\in G}\sum_{\substack{y_{j}\in G\\x_{j}\neq y_{j}}}\cos
^{2}(\theta_{x_{j}})\sin^{2}(\theta_{y_{j}})+\\
&  2\sum_{z_{j}\in G}\cos^{2}(\theta_{z_{j}})\sin^{2}(\theta_{z_{j}})+\\
&  \sum_{z_{j}\in G}\sin^{4}(\theta_{z_{j}})+\\
&  \frac{1}{2}\left(
\begin{array}
[c]{c}%
4\\
2
\end{array}
\right)  \sum_{x_{j}\in G}\sum_{\substack{y_{j}\in G\\x_{j}\neq y_{j}}%
}\sin^{2}(\theta_{x_{j}})\sin^{2}(\theta_{y_{j}})\\
&  +otherterms)\\
&  =3L/8+\frac{6}{2}L(L-1)/4+2L(L-1)/4+2L/8+3L/8+\frac{6}{2}L(L-1)/4\\
&  =L+2L(L-1)=2L^{2}-L
\end{align*}
where we have used the fact that the sum over the $otherterms$ is $0,$ and we
have approximated the averages by integrals%
\[
\frac{1}{2\pi}\int_{\theta=0}^{\theta=2\pi}\cos^{2}(\theta)d\theta=\frac
{1}{2\pi}\int_{\theta=0}^{\theta=2\pi}\sin^{2}(\theta)d\theta=1/2
\]
$^{{}}$%
\[
\frac{1}{2\pi}\int_{\theta=0}^{\theta=2\pi}\cos^{4}(\theta)d\theta=\frac
{1}{2\pi}\int_{\theta=0}^{\theta=2\pi}\sin^{4}(\theta)d\theta=3/8
\]%
\[
\frac{1}{2\pi}\int_{\theta=0}^{\theta=2\pi}\cos^{2}(\theta)\sin^{2}%
(\theta)d\theta=1/8
\]

\end{proof}

\begin{theorem}
Let $G=\{z_{1},z_{2},...,z_{L}\}$ be a set of $L$ uniform random variables
which take values in $\{0,1,...,N-1\}$. Consider the variance of $\left\vert
a\omega^{sy}+b\sum_{z_{j}\in G}\omega^{z_{j}y}\right\vert ^{2}$then%
\begin{align*}
&  Var\left\vert a\omega^{sy}+b\sum_{z_{j}\in G}\omega^{z_{j}y}\right\vert
^{2}\\
&  =\frac{1}{N^{L}}\sum_{G=\{z_{1},z_{2}...z_{L}\}}\left\vert a\omega
^{sy}+b\sum_{z_{j}\in G}\omega^{z_{j}y}\right\vert ^{4}-(a^{2}+b^{2}L)^{2}\\
&  =a^{4}+b^{4}(2L^{2}-L)+4a^{2}b^{2}L-(a^{2}+b^{2}L)^{2}\\
&  =a^{4}+b^{4}(2L^{2}-L)+4a^{2}b^{2}L-a^{4}-2a^{2}b^{2}L-b^{4}L^{2}\\
&  =b^{4}(L^{2}-L)+2a^{2}b^{2}L
\end{align*}

\end{theorem}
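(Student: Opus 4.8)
The plan is to use the identity $Var(X)=E(X^{2})-(E(X))^{2}$ with $X=\left\vert a\omega^{sy}+b\sum_{z_{j}\in G}\omega^{z_{j}y}\right\vert^{2}$. The second term is immediate: the earlier theorem already gives $E(X)=a^{2}+b^{2}L$, so $(E(X))^{2}=(a^{2}+b^{2}L)^{2}=a^{4}+2a^{2}b^{2}L+b^{4}L^{2}$. All the genuine work is therefore in evaluating the fourth moment $E(X^{2})=E\left\vert a\omega^{sy}+b\sum_{z_{j}\in G}\omega^{z_{j}y}\right\vert^{4}$, and I would compute it exactly as in the preceding variance theorem (the one giving $E\left\vert \sum\omega^{z_{j}y}\right\vert^{4}=2L^{2}-L$), by splitting into real and imaginary parts and averaging monomials in $\cos\theta_{z_{j}},\sin\theta_{z_{j}}$ against the same integral approximations used there.

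First I would write $\omega^{sy}=\cos\varphi+i\sin\varphi$ with $\varphi=2\pi sy/N$, set $U=\sum_{j}\cos\theta_{z_{j}}$ and $V=\sum_{j}\sin\theta_{z_{j}}$, and note
\[
X=(a\cos\varphi+bU)^{2}+(a\sin\varphi+bV)^{2}=a^{2}+2ab\,W+b^{2}Q,
\]
where $W=\cos\varphi\,U+\sin\varphi\,V$ and $Q=U^{2}+V^{2}=\left\vert\sum_{z_{j}\in G}\omega^{z_{j}y}\right\vert^{2}$. Squaring gives
\[
X^{2}=a^{4}+4a^{3}b\,W+2a^{2}b^{2}Q+4a^{2}b^{2}W^{2}+4ab^{3}WQ+b^{4}Q^{2},
\]
so I only need the expectations $E(W),E(Q),E(W^{2}),E(WQ),E(Q^{2})$ (the constant term is trivial).

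The individual averages fall out cleanly from the same facts used earlier. Since $\langle\cos\theta\rangle=\langle\sin\theta\rangle=0$, every term containing a single random index to an odd power averages to zero; this kills $E(W)=0$ and, because $W$ is linear and $Q$ quadratic in the $z_{j}$, also kills the cubic quantity $E(WQ)=0$. Next $E(Q)=L$ is the first theorem and $E(Q^{2})=E\left\vert\sum\omega^{z_{j}y}\right\vert^{4}=2L^{2}-L$ is the preceding variance theorem. The only new computation is $E(W^{2})$: expanding $W^{2}=\cos^{2}\varphi\,U^{2}+2\cos\varphi\sin\varphi\,UV+\sin^{2}\varphi\,V^{2}$ and using $E(U^{2})=E(V^{2})=L/2$ (the diagonal terms contribute $\langle\cos^{2}\rangle=\langle\sin^{2}\rangle=1/2$, the off-diagonal terms vanish) together with $E(UV)=0$ yields $E(W^{2})=\tfrac{L}{2}(\cos^{2}\varphi+\sin^{2}\varphi)=L/2$.

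Assembling these gives $E(X^{2})=a^{4}+2a^{2}b^{2}L+4a^{2}b^{2}\cdot\tfrac{L}{2}+b^{4}(2L^{2}-L)=a^{4}+4a^{2}b^{2}L+b^{4}(2L^{2}-L)$, and subtracting $(a^{2}+b^{2}L)^{2}$ collapses to $b^{4}(L^{2}-L)+2a^{2}b^{2}L$, as claimed. The main obstacle I anticipate is bookkeeping rather than conceptual: tracking that the coefficient of $a^{2}b^{2}$ receives contributions from two separate pieces ($2a^{2}b^{2}Q$ and $4a^{2}b^{2}W^{2}$), which combine to $4a^{2}b^{2}L$, and confirming that all odd-order cross terms — in particular $E(WQ)$, a third-degree product of the $\cos\theta_{z_{j}},\sin\theta_{z_{j}}$ — really do average to zero under the independent-integral approximation.
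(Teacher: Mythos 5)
Your proposal is correct and follows essentially the same route as the paper: both apply $Var(X)=E(X^{2})-(E(X))^{2}$ with $E(X)=a^{2}+b^{2}L$ from the earlier theorem, expand the fourth moment into real and imaginary parts, and evaluate the surviving averages using $E(Q)=L$, $E(Q^{2})=2L^{2}-L$ from the preceding results together with the integral approximations $\langle\cos^{2}\rangle=\langle\sin^{2}\rangle=1/2$. Your $W,Q$ bookkeeping is just a tidier packaging of the paper's term-by-term expansion --- the paper's ``other terms'' are exactly your vanishing odd moments $E(W)$ and $E(WQ)$, and both computations land on $E(X^{2})=a^{4}+4a^{2}b^{2}L+b^{4}(2L^{2}-L)$ before subtracting $(a^{2}+b^{2}L)^{2}$.
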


\begin{proof}
Note that
\[
Var(X)=E(X^{2})-(E(X))^{2}%
\]
Let $\omega^{zy}=\cos(\theta_{z})+i\sin(\theta_{z})$ where $\theta_{z}=2\pi
zy/N$ and $\omega^{sy}=\cos(\varphi)+i\sin(\varphi)$ where $\varphi=2\pi sy/N$
then%
\begin{align*}
&  E\left\vert a\omega^{sy}+b\sum_{z_{j}\in G}\omega^{z_{j}y}\right\vert
^{4}\\
&  =\frac{1}{N^{L}}\sum_{G=\{z_{1},z_{2}...z_{L}\}}\left\vert a\omega
^{sy}+b\sum_{z_{j}\in G}\omega^{z_{j}y}\right\vert ^{4}\\
&  =\frac{1}{N^{L}}\sum_{G=\{z_{1},z_{2}...z_{L}\}}\left\vert a(\cos
(\varphi)+i\sin(\varphi))+b\sum_{z_{j}\in G}\cos(\theta_{z_{j}})+i\sin
(\theta_{z_{j}})\right\vert ^{4}\\
&  =\frac{1}{N^{L}}\sum_{G=\{z_{1},z_{2}...z_{L}\}}[(a\cos(\varphi
)+b\sum_{z_{j}\in G}\cos(\theta_{z_{j}}))^{2}+(a\sin(\varphi)+b\sum_{z_{j}\in
G}\sin(\theta_{z_{j}}))^{2}]^{2}\\
&  =\frac{1}{N^{L}}\sum_{G=\{z_{1},z_{2}...z_{L}\}}[(a\cos(\varphi
)+b\sum_{x_{j}\in G}\cos(\theta_{x_{j}}))(a\cos(\varphi)+b\sum_{y_{j}\in
G}\cos(\theta_{y_{j}}))+\\
&  (a\sin(\varphi)+b\sum_{x_{j}\in G}\sin(\theta_{x_{j}}))(a\sin
(\varphi)+b\sum_{y_{j}\in G}\sin(\theta_{y_{j}}))]^{2}\\
&  =\frac{1}{N^{L}}\sum_{G=\{z_{1},z_{2}...z_{L}\}}[a^{2}\cos^{2}%
(\varphi)+b^{2}\sum_{x_{j}\in G}\sum_{y_{j}\in G}\cos(\theta_{x_{j}}%
)\cos(\theta_{y_{j}})+\\
&  2ab\cos(\varphi)\sum_{x_{j}\in G}\cos(\theta_{x_{j}})+a^{2}\sin^{2}%
(\varphi)+b^{2}\sum_{x_{j}\in G}\sum_{y_{j}\in G}\sin(\theta_{x_{j}}%
)\sin(\theta_{y_{j}})+\\
&  2ab\sin(\varphi)\sum_{x_{j}\in G}\sin(\theta_{x_{j}})]^{2}\\
&  =\frac{1}{N^{L}}\sum_{G=\{z_{1},z_{2}...z_{L}\}}[a^{2}+b^{2}\sum_{x_{j}\in
G}\sum_{y_{j}\in G}\cos(\theta_{x_{j}})\cos(\theta_{y_{j}})+\\
&  b^{2}\sum_{x_{j}\in G}\sum_{y_{j}\in G}\sin(\theta_{x_{j}})\sin
(\theta_{y_{j}})+2ab\cos(\varphi)\sum_{x_{j}\in G}\cos(\theta_{x_{j}})+\\
&  2ab\sin(\varphi)\sum_{x_{j}\in G}\sin(\theta_{x_{j}})]^{2}\\
&  =\frac{1}{N^{L}}\sum_{G=\{z_{1},z_{2}...z_{L}\}}[a^{4}+b^{4}(\sum_{x_{j}\in
G}\sum_{y_{j}\in G}\cos(\theta_{x_{j}})\cos(\theta_{y_{j}})+\\
&  \sum_{x_{j}\in G}\sum_{y_{j}\in G}\sin(\theta_{x_{j}})\sin(\theta_{y_{j}%
}))^{2}+4a^{2}b^{2}\cos^{2}(\varphi)\sum_{x_{j}\in G}\sum_{y_{j}\in G}%
\cos(\theta_{x_{j}})\cos(\theta_{y_{j}})+\\
&  4a^{2}b^{2}\sin^{2}(\varphi)\sum_{x_{j}\in G}\sum_{y_{j}\in G}\sin
(\theta_{x_{j}})\sin(\theta_{y_{j}})+2a^{2}b^{2}(\sum_{x_{j}\in G}\sum
_{y_{j}\in G}\cos(\theta_{x_{j}})\cos(\theta_{y_{j}})+\\
&  \sum_{x_{j}\in G}\sum_{y_{j}\in G}\sin(\theta_{x_{j}})\sin(\theta_{y_{j}%
}))+otherterms]\\
&  =a^{4}+b^{4}(2L^{2}-L)+4a^{2}b^{2}\cos^{2}(\varphi)L/2+4a^{2}b^{2}\sin
^{2}(\varphi)L/2+2a^{2}b^{2}L\\
&  =a^{4}+b^{4}(2L^{2}-L)+4a^{2}b^{2}L
\end{align*}
Then%
\begin{align*}
&  Var\left\vert a\omega^{sy}+b\sum_{z_{j}\in G}\omega^{z_{j}y}\right\vert
^{2}=E\left\vert a\omega^{sy}+b\sum_{z_{j}\in G}\omega^{z_{j}y}\right\vert
^{4}-(a^{2}+b^{2}L)^{2}\\
&  =a^{4}+b^{4}(2L^{2}-L)+4a^{2}b^{2}L-(a^{2}+b^{2}L)^{2}\\
&  =a^{4}+b^{4}(2L^{2}-L)+4a^{2}b^{2}L-a^{4}-2a^{2}b^{2}L-b^{4}L^{2}\\
&  =b^{4}(L^{2}-L)+2a^{2}b^{2}L
\end{align*}

\end{proof}

\begin{corollary}
Let $G=\{z_{1},z_{2},...,z_{L}\}$ be a set of $L$ uniform random variables
which take values in $\{0,1,...,N-1\}$. Consider the variance of
\[
\left\vert \frac{1}{T}\sum_{z_{j}\in G}\omega^{z_{j}y}\right\vert ^{2}%
\]
and
\[
\left\vert \frac{M}{T}\omega^{sy}+\frac{1}{T}\sum_{z_{j}\in G}\omega^{z_{j}%
y}\right\vert ^{2}%
\]
then
\end{corollary}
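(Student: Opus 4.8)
The plan is to recognize that both quantities named in the corollary are instances of the general form $\left\vert a\omega^{sy}+b\sum_{z_{j}\in G}\omega^{z_{j}y}\right\vert^{2}$ whose variance was just established in the preceding theorem to be
\[
Var\left\vert a\omega^{sy}+b\sum_{z_{j}\in G}\omega^{z_{j}y}\right\vert^{2}=b^{4}(L^{2}-L)+2a^{2}b^{2}L.
\]
So the entire corollary reduces to making the correct substitutions for $a$ and $b$ and simplifying, using $T=L+M$. No new probabilistic computation is required; everything is inherited from the theorem, which in turn rests on Theorem~\ref{} (the $E_z\left\vert\sum\omega^{z_jy}\right\vert^2=L$ result) and the integral approximations for the averages of products of sines and cosines.

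First I would handle $\left\vert \frac{1}{T}\sum_{z_{j}\in G}\omega^{z_{j}y}\right\vert^{2}$ by setting $a=0$ and $b=\frac{1}{T}$. With $a=0$ the term $a\omega^{sy}$ disappears and the general form collapses to exactly $b\sum_{z_{j}\in G}\omega^{z_{j}y}$, as needed. The variance formula then gives $b^{4}(L^{2}-L)=\frac{1}{T^{4}}(L^{2}-L)$, and writing $T=L+M$ yields
\[
Var\left\vert \frac{1}{T}\sum_{z_{j}\in G}\omega^{z_{j}y}\right\vert^{2}=\frac{L^{2}-L}{(L+M)^{4}},
\]
matching the Case D entry in the variance table.

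Next I would treat $\left\vert \frac{M}{T}\omega^{sy}+\frac{1}{T}\sum_{z_{j}\in G}\omega^{z_{j}y}\right\vert^{2}$ by setting $a=\frac{M}{T}$ and $b=\frac{1}{T}$. Substituting into $b^{4}(L^{2}-L)+2a^{2}b^{2}L$ gives $\frac{L^{2}-L}{T^{4}}+2\cdot\frac{M^{2}}{T^{2}}\cdot\frac{1}{T^{2}}\cdot L=\frac{L^{2}-L+2M^{2}L}{T^{4}}$, so that
\[
Var\left\vert \frac{M}{T}\omega^{sy}+\frac{1}{T}\sum_{z_{j}\in G}\omega^{z_{j}y}\right\vert^{2}=\frac{L^{2}-L+2M^{2}L}{(L+M)^{4}},
\]
matching the Case B entry. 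I would close by noting both expressions tend to $0$ as $L\to\infty$, consistent with the expected-value corollary.

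Honestly there is no serious obstacle here, since this is a direct specialization of an already-proved theorem; the only thing to be careful about is the bookkeeping on the substitution $a=M/T$, $b=1/T$ and confirming that the $a=0$ case legitimately recovers the pure-error-sum quantity (it does, because the cross term $2a^{2}b^{2}L$ vanishes exactly when $a=0$). If I wanted to make the corollary self-contained rather than quoting the theorem, the mildly tedious part would be re-deriving $2a^{2}b^{2}L$ from the $4a^{2}b^{2}\cos^{2}\varphi\,L/2+4a^{2}b^{2}\sin^{2}\varphi\,L/2$ collapse, but since the theorem is already available earlier in the excerpt I would simply cite it.
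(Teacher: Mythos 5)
Your proposal is correct and takes essentially the same route as the paper: the paper's corollary is likewise an immediate specialization of the preceding theorem's formula $b^{4}(L^{2}-L)+2a^{2}b^{2}L$, with $(a,b)=(0,\tfrac{1}{T})$ giving $\frac{L^{2}-L}{(L+M)^{4}}$ and $(a,b)=(\tfrac{M}{T},\tfrac{1}{T})$ giving $\frac{L^{2}-L+2M^{2}L}{(L+M)^{4}}$, followed by the observation that both tend to $0$ as $L\rightarrow\infty$. Your bookkeeping on the substitutions agrees with the paper's stated results, so nothing further is required.
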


\begin{align*}
&  Var_{z}\left\vert \frac{1}{T}\sum_{z_{j}\in G}\omega^{z_{j}y}\right\vert
^{2}\\
&  =\frac{(L^{2}-L)}{T^{4}}\\
&  =\frac{(L^{2}-L)}{(L+M)^{4}}\\
&  \rightarrow0\text{ as }L\rightarrow\infty
\end{align*}

and%

\begin{align*}
&  Var_{z}\left\vert \frac{M}{T}\omega^{sy}+\frac{1}{T}\sum_{z_{j}\in G}%
\omega^{z_{j}y}\right\vert ^{2}\\
&  =\frac{(L^{2}-L)}{T^{4}}+\frac{2M^{2}L}{T^{4}}\\
&  =\frac{(L^{2}-L)+2M^{2}L}{(L+M)^{4}}\\
&  \rightarrow0\text{ as }L\rightarrow\infty
\end{align*}

\begin{theorem}
Let $G=\{z_{1},z_{2},...,z_{L}\}$ be a set of $L$ uniform random variables
which take values in $\{0,1,...,N-1\}$. Consider the variance of $\left\vert
a+ic+b\sum_{z_{j}\in G}\omega^{z_{j}y}\right\vert ^{2}$then%
\begin{align*}
&  Var\left\vert a+ic+b\sum_{z_{j}\in G}\omega^{z_{j}y}\right\vert ^{2}\\
&  =\frac{1}{N^{L}}\sum_{G=\{z_{1},z_{2}...z_{L}\}}\left\vert a+ic+b\sum
_{z_{j}\in G}\omega^{z_{j}y}\right\vert ^{4}-(a^{2}+c^{2}+b^{2}L)^{2}\\
&  =a^{4}+c^{4}+b^{4}(2L^{2}-L)+4(a^{2}+c^{2})b^{2}L-(a^{4}+c^{4}+b^{4}%
L^{2}+2(a^{2}+c^{2})b^{2}L)\\
&  =b^{4}(L^{2}-L)+2(a^{2}+c^{2})b^{2}L
\end{align*}

\end{theorem}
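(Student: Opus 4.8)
The plan is to compute the variance through $Var(X)=E[X^{2}]-(E[X])^{2}$ for the real random variable $X=\left\vert a+ic+b\sum_{z_{j}\in G}\omega^{z_{j}y}\right\vert ^{2}$, reusing the two preceding expectation and variance computations rather than re-deriving all fourth moments from scratch. First I would quote the earlier theorem giving $E[X]=a^{2}+c^{2}+b^{2}L$, so that the subtracted term is $(E[X])^{2}=(a^{2}+c^{2})^{2}+2(a^{2}+c^{2})b^{2}L+b^{4}L^{2}$; this disposes of half the work immediately.

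Next, writing $S=\sum_{z_{j}\in G}\omega^{z_{j}y}=U+iV$ with $U=\sum_{z_{j}\in G}\cos\theta_{z_{j}}$, $V=\sum_{z_{j}\in G}\sin\theta_{z_{j}}$ and $\theta_{z}=2\pi zy/N$, I would expand $X=(a^{2}+c^{2})+2b(aU+cV)+b^{2}\left\vert S\right\vert ^{2}$. Since $a^{2}+c^{2}$ is constant it drops out of the variance, leaving only two genuinely random contributions, and by bilinearity of covariance
\[
Var(X)=4b^{2}\,Var(aU+cV)+b^{4}\,Var(\left\vert S\right\vert ^{2})+4b^{3}\,Cov(aU+cV,\left\vert S\right\vert ^{2}).
\]

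I would then evaluate the three pieces using the same integral approximations for the full-period averages of $\cos$ and $\sin$ that were used in the earlier theorems. The middle term is immediate: $Var(\left\vert S\right\vert ^{2})=L^{2}-L$ was already established in the preceding variance theorem and may be quoted directly. For the first term, the second moments $E[U^{2}]=E[V^{2}]=L/2$ together with $E[UV]=0$ give $Var(aU+cV)=(a^{2}+c^{2})L/2$, so $4b^{2}Var(aU+cV)=2(a^{2}+c^{2})b^{2}L$. The covariance term vanishes because, after centering, it reduces to a combination of the third moments $E[U^{3}]$, $E[V^{3}]$, $E[U^{2}V]$, $E[UV^{2}]$, each of which leaves some index carrying a single odd power of $\cos$ or $\sin$ and hence averages to zero over a full period. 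Assembling the pieces yields $Var(X)=b^{4}(L^{2}-L)+2(a^{2}+c^{2})b^{2}L$, as claimed.

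The main obstacle is the bookkeeping that forces $Cov(aU+cV,\left\vert S\right\vert ^{2})=0$ and $E[UV]=0$: one must check that in every surviving index pattern some index contributes an odd power of a single trigonometric factor, which is exactly where the symmetry of $\cos$ and $\sin$ over $[0,2\pi]$ is invoked. I note that if one instead expands $E[X^{2}]$ directly, mirroring the proof of the previous theorem, the correct value is $E[X^{2}]=(a^{2}+c^{2})^{2}+4(a^{2}+c^{2})b^{2}L+b^{4}(2L^{2}-L)$; the cross term $2a^{2}c^{2}$ sits inside $(a^{2}+c^{2})^{2}$ in both $E[X^{2}]$ and $(E[X])^{2}$ and cancels in the difference, so the final formula is unaffected. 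The decomposition above is preferable precisely because it never introduces this cross term and instead leans on the already-proven identity $Var(\left\vert S\right\vert ^{2})=L^{2}-L$.
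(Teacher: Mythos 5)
Your proof is correct, and it takes a genuinely different route from the paper's. The paper attacks the fourth moment $E\left\vert a+ic+bS\right\vert ^{4}$ (where $S=\sum_{z_{j}\in G}\omega^{z_{j}y}$) head-on: it expands the whole expression into sums of products of cosines and sines, identifies the surviving index patterns, evaluates them by the full-period integral approximations, and then subtracts $(a^{2}+c^{2}+b^{2}L)^{2}$ — the same template as the two preceding moment theorems. You instead split $X=(a^{2}+c^{2})+2b(aU+cV)+b^{2}\left\vert S\right\vert ^{2}$ and use bilinearity of variance, so the constant drops out, the piece $b^{4}Var(\left\vert S\right\vert ^{2})=b^{4}(L^{2}-L)$ is quoted from the preceding theorem rather than re-derived, and the only new work is the second-moment computation $Var(aU+cV)=(a^{2}+c^{2})L/2$ (giving $2(a^{2}+c^{2})b^{2}L$) together with the vanishing of $Cov(aU+cV,\left\vert S\right\vert ^{2})$ via odd third moments; your coefficient $4b^{3}$ on the covariance and the moment facts you invoke all check out. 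Your decomposition buys two things: it localizes the new effort to small, independently checkable pieces, and it never produces the cross term $2a^{2}c^{2}$ — which is precisely where the paper's displayed computation is sloppy, since it writes $a^{4}+c^{4}$ in both $E[X^{2}]$ and $(E[X])^{2}$ where $(a^{2}+c^{2})^{2}=a^{4}+2a^{2}c^{2}+c^{4}$ should stand, an omission that cancels in the difference exactly as your closing remark observes. What the paper's direct expansion buys is only uniformity of presentation with the earlier theorems, at the cost of repeating the heavy trigonometric bookkeeping; both arguments ultimately rest on the same full-period averages of products of cosines and sines.
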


\begin{proof}
Let $\omega^{zy}=\cos(\theta_{z})+i\sin(\theta_{z})$ where $\theta_{z}=2\pi
zy/N$ and $\omega^{sy}=\cos(\varphi)+i\sin(\varphi)$ where $\varphi=2\pi sy/N$
then%
\begin{align*}
&  E\left\vert a+ic+b\sum_{z_{j}\in G}\omega^{z_{j}y}\right\vert ^{4}\\
&  =\frac{1}{N^{L}}\sum_{G=\{z_{1},z_{2}...z_{L}\}}\left\vert a+ic+b\sum
_{z_{j}\in G}\omega^{z_{j}y}\right\vert ^{4}\\
&  =\frac{1}{N^{L}}\sum_{G=\{z_{1},z_{2}...z_{L}\}}\left\vert a+ic+b\sum
_{z_{j}\in G}\cos(\theta_{z_{j}})+i\sin(\theta_{z_{j}})\right\vert ^{4}\\
&  =\frac{1}{N^{L}}\sum_{G=\{z_{1},z_{2}...z_{L}\}}[(a+b\sum_{z_{j}\in G}%
\cos(\theta_{z_{j}}))^{2}+(c+b\sum_{z_{j}\in G}\sin(\theta_{z_{j}}))^{2}%
]^{2}\\
&  =\frac{1}{N^{L}}\sum_{G=\{z_{1},z_{2}...z_{L}\}}[(a+b\sum_{x_{j}\in G}%
\cos(\theta_{x_{j}}))(a+b\sum_{y_{j}\in G}\cos(\theta_{y_{j}}))+\\
&  (c+b\sum_{x_{j}\in G}\sin(\theta_{x_{j}}))(c+b\sum_{y_{j}\in G}\sin
(\theta_{y_{j}}))]^{2}\\
&  =\frac{1}{N^{L}}\sum_{G=\{z_{1},z_{2}...z_{L}\}}[a^{2}+b^{2}\sum_{x_{j}\in
G}\sum_{y_{j}\in G}\cos(\theta_{x_{j}})\cos(\theta_{y_{j}})+\\
&  2ab\sum_{x_{j}\in G}\cos(\theta_{x_{j}})+c^{2}+b^{2}\sum_{x_{j}\in G}%
\sum_{y_{j}\in G}\sin(\theta_{x_{j}})\sin(\theta_{y_{j}})+\\
&  2cb\sum_{x_{j}\in G}\sin(\theta_{x_{j}})]^{2}\\
&  =\frac{1}{N^{L}}\sum_{G=\{z_{1},z_{2}...z_{L}\}}[a^{2}+c^{2}+b^{2}%
\sum_{x_{j}\in G}\sum_{y_{j}\in G}\cos(\theta_{x_{j}})\cos(\theta_{y_{j}})+\\
&  b^{2}\sum_{x_{j}\in G}\sum_{y_{j}\in G}\sin(\theta_{x_{j}})\sin
(\theta_{y_{j}})+2ab\sum_{x_{j}\in G}\cos(\theta_{x_{j}})+\\
&  2cb\sum_{x_{j}\in G}\sin(\theta_{x_{j}})]^{2}\\
&  =\frac{1}{N^{L}}\sum_{G=\{z_{1},z_{2}...z_{L}\}}[a^{4}+c^{4}+b^{4}%
(\sum_{x_{j}\in G}\sum_{y_{j}\in G}\cos(\theta_{x_{j}})\cos(\theta_{y_{j}})+\\
&  \sum_{x_{j}\in G}\sum_{y_{j}\in G}\sin(\theta_{x_{j}})\sin(\theta_{y_{j}%
}))^{2}+4a^{2}b^{2}\sum_{x_{j}\in G}\sum_{y_{j}\in G}\cos(\theta_{x_{j}}%
)\cos(\theta_{y_{j}})+\\
&  4c^{2}b^{2}\sum_{x_{j}\in G}\sum_{y_{j}\in G}\sin(\theta_{x_{j}}%
)\sin(\theta_{y_{j}})+2(a^{2}+c^{2})b^{2}(\sum_{x_{j}\in G}\sum_{y_{j}\in
G}\cos(\theta_{x_{j}})\cos(\theta_{y_{j}})+\\
&  \sum_{x_{j}\in G}\sum_{y_{j}\in G}\sin(\theta_{x_{j}})\sin(\theta_{y_{j}%
}))+otherterms]\\
&  =a^{4}+c^{4}+b^{4}(2L^{2}-L)+4a^{2}b^{2}L/2+4c^{2}b^{2}L/2+2(a^{2}%
+c^{2})b^{2}L\\
&  =a^{4}+c^{4}+b^{4}(2L^{2}-L)+4(a^{2}+c^{2})b^{2}L
\end{align*}

\end{proof}

\begin{corollary}
Let $\varphi=2\pi MPy/N$ and let $\theta=2\pi Py/N$
\begin{align*}
&  Var\left\vert \frac{1}{T}\omega^{sy}\left[  \frac{1-\omega^{MPy}}%
{1-\omega^{Py}}\right]  +\frac{1}{T}\sum_{z\in G}\omega^{zy}\right\vert ^{2}\\
&  =\frac{1}{T^{4}}(L^{2}-L)+\frac{2}{T^{4}}\frac{\sin^{2}\frac{\varphi}{2}%
}{\sin^{2}\frac{\theta}{2}}L\\
&  \leq\frac{(L^{2}-L+2M^{2}L)}{(L+M)^{4}}\\
&  \rightarrow0\text{ as }L\rightarrow\infty
\end{align*}

\end{corollary}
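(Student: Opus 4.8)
The plan is to reduce this corollary to the general variance formula established two results earlier, namely that for $a+ic$ a fixed complex number and $b$ a fixed real scalar,
\[
\mathrm{Var}\left\vert a+ic+b\sum_{z_j\in G}\omega^{z_jy}\right\vert^2 = b^4(L^2-L)+2(a^2+c^2)b^2L .
\]
First I would set $b=\tfrac{1}{T}$ and let $a+ic=\tfrac{1}{T}\omega^{sy}\left[\tfrac{1-\omega^{MPy}}{1-\omega^{Py}}\right]$, so that the quantity whose variance we want is exactly of the form covered by that theorem.

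Next I would substitute the value of $a^2+c^2$ computed in the immediately preceding corollary, where it was shown, by expanding $\tfrac{1-\omega^{MPy}}{1-\omega^{Py}}$ into half-angle form and using $\cos^2\lambda+\sin^2\lambda=1$, that
\[
a^2+c^2=\frac{1}{T^2}\frac{\sin^2\frac{\varphi}{2}}{\sin^2\frac{\theta}{2}},
\]
with $\varphi=2\pi MPy/N$ and $\theta=2\pi Py/N$. Plugging $b^4=1/T^4$, $b^2=1/T^2$, and this expression into the variance formula gives
\[
\mathrm{Var}\left\vert \frac{1}{T}\omega^{sy}\left[\frac{1-\omega^{MPy}}{1-\omega^{Py}}\right]+\frac{1}{T}\sum_{z\in G}\omega^{zy}\right\vert^2=\frac{1}{T^4}(L^2-L)+\frac{2}{T^4}\frac{\sin^2\frac{\varphi}{2}}{\sin^2\frac{\theta}{2}}L,
\]
which is the claimed exact value.

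For the upper bound I would invoke the elementary estimate already used repeatedly in Chapter 2, namely
\[
\frac{\sin^2\frac{\varphi}{2}}{\sin^2\frac{\theta}{2}}=\frac{\sin^2(\pi MPy/N)}{\sin^2(\pi Py/N)}\leq M^2 ,
\]
valid whenever $Py\neq 0\operatorname{mod}N$. Replacing the trigonometric ratio by $M^2$ yields
\[
\mathrm{Var}\leq\frac{1}{T^4}(L^2-L)+\frac{2M^2L}{T^4}=\frac{L^2-L+2M^2L}{(L+M)^4},
\]
after recalling $T=L+M$. Finally, since the numerator is a polynomial in $L$ of degree $2$ while the denominator has degree $4$, the whole expression is of order $1/L^2$ and tends to $0$ as $L\rightarrow\infty$.

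I do not expect a serious obstacle here; the content is essentially a substitution into a formula already proved. The only points requiring care are (i) checking that the half-angle expression for $a+ic$ from the preceding corollary is genuinely in the separated real/imaginary form $a+ic$ demanded by the variance theorem, which it is once $\omega^{sy}$ is written as $\cos\lambda+i\sin\lambda$ and the product is expanded, and (ii) confirming the direction of the inequality $\sin^2\frac{\varphi}{2}/\sin^2\frac{\theta}{2}\leq M^2$ so that the exact variance is correctly bounded above rather than below. Both are immediate consequences of the earlier results.
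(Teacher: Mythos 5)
Your proposal is correct and follows essentially the same route as the paper's own proof: the paper likewise specializes the general variance theorem with $b=\tfrac{1}{T}$ and $a+ic=\tfrac{1}{T}\omega^{sy}\bigl[\tfrac{1-\omega^{MPy}}{1-\omega^{Py}}\bigr]$, reuses the value $a^{2}+c^{2}=\tfrac{1}{T^{2}}\tfrac{\sin^{2}(\varphi/2)}{\sin^{2}(\theta/2)}$ from the preceding corollary, and bounds it by $\tfrac{M^{2}}{T^{2}}$ to obtain the stated upper bound and limit. No gaps; your two points of care (the separated real/imaginary form and the direction of the $M^{2}$ bound) are exactly the ingredients the paper relies on.
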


\begin{proof}
Let $\varphi=2\pi MPy/N$ , let $\theta=2\pi Py/N$ and let $\lambda=2\pi sy/N$
then $b=\frac{1}{T}$ and from and earlier result
\begin{align*}
&  a^{2}+c^{2}\\
&  =\frac{1}{T^{2}}\frac{\sin^{2}\frac{\varphi}{2}}{\sin^{2}\frac{\theta}{2}%
}.[\cos^{2}\lambda+\sin^{2}\lambda]\\
&  =\frac{1}{T^{2}}\frac{\sin^{2}\frac{\varphi}{2}}{\sin^{2}\frac{\theta}{2}%
}\\
&  \leq\frac{M^{2}}{T^{2}}%
\end{align*}

\end{proof}

Next we show that as $L$ increases the upperbound of the expected value of the
probability in Cases B and C decreases to a minimum value.

\begin{lemma}
Let $k=\left\lfloor \frac{\pi}{4\sin^{-1}(\sqrt{T/N})}\right\rfloor $ and
$\sin\theta=\sqrt{T/N}$ where $T=L+M$ then the upperbound of the expected
value of the probability in Cases B and C decreases to a minimum value at
$MinL$ given by%
\[
MinL=-M^{2}+\sqrt{M(M-1)(M(M-1)+N)}%
\]

\end{lemma}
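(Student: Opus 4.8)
The plan is to combine two ingredients already in hand: the per-$y$ amplification factor $\tan^2\theta\sin^2 2k\theta$, now with $\sin\theta=\sqrt{T/N}$ and $T=L+M$, and the expected values of the random-set sums computed in the theorems and corollaries above. First I would write the expected probability in Case B as $\tan^2\theta\sin^2 2k\theta\cdot\frac{M^2+L}{(L+M)^2}$ and in Case C as $\tan^2\theta\sin^2 2k\theta\cdot\frac{1}{(L+M)^2}\left(\frac{\sin^2(\pi MPy/N)}{\sin^2(\pi Py/N)}+L\right)$. Applying the bound $\frac{\sin^2(\pi MPy/N)}{\sin^2(\pi Py/N)}\le M^2$ recorded earlier, the Case C expectation is dominated by the Case B expression, so both cases share a common upper bound. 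Using $\tan^2\theta\sin^2 2k\theta\le\sin^2\theta/\cos^2\theta=T/(N-T)$ exactly as in the no-error analysis, this upper bound is
\[
UB(L)=\frac{T}{N-T}\cdot\frac{M^2+L}{(L+M)^2}=\frac{M^2+L}{(L+M)(N-L-M)} .
\]

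Next I would minimise $UB(L)$ over $L$ by ordinary calculus. The substitution $w=L+M$ turns the denominator into $w(N-w)$ and the numerator into $M^2-M+w$, giving $UB=\frac{M^2-M+w}{w(N-w)}$. Differentiating in $w$ and clearing the positive factor $[w(N-w)]^2$, the stationarity condition reduces to $w(N-w)=(M^2-M+w)(N-2w)$. Expanding and cancelling the $wN$ and one $w^2$ term yields the quadratic
\[
w^2+2M(M-1)\,w-M(M-1)N=0 .
\]
The product of its roots is $-M(M-1)N<0$, so there is exactly one positive root, $w=-M(M-1)+\sqrt{M(M-1)\big(M(M-1)+N\big)}$. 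Back-substituting $L=w-M$ and using $-M(M-1)-M=-M^2$ gives precisely $MinL=-M^2+\sqrt{M(M-1)(M(M-1)+N)}$.

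The step requiring the most care is confirming that this critical point really is the minimum and that $UB$ decreases on $[0,MinL]$, rather than being a maximum or an inflection. I would settle this by a sign analysis of $UB'$ at the endpoints instead of a second-derivative computation: at $L=0$ the numerator of $UB'$ equals $M\big[N(1-M)+M(2M-1)\big]$, which is negative once $M\ge 2$ and $M\ll N$, so $UB$ is decreasing at the left endpoint; and as $L\to N-M$ (i.e. $w\to N^-$) the denominator $w(N-w)\to 0^+$ with positive numerator, so $UB\to+\infty$. Since there is only one positive critical point, $UB$ must fall from $L=0$ to its minimum at $L=MinL$ and rise thereafter, which is exactly the monotone-then-increasing behaviour the lemma asserts. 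A final routine check is that $MinL>0$, equivalently $w>M$, i.e. $\sqrt{M(M-1)(M(M-1)+N)}>M^2$; squaring reduces this to $(M-1)N>M(2M-1)$, which holds under the standing hypothesis $M\ll N$.
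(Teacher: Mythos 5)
Your proposal is correct and takes essentially the same route as the paper: both reduce Cases B and C to the common upper bound $\frac{M^{2}+L}{(L+M)(N-L-M)}$ via $\tan^{2}\theta\sin^{2}2k\theta\leq T/(N-T)$ and the domination of Case C by Case B, then minimize by differentiation, and your quadratic $w^{2}+2M(M-1)w-M(M-1)N=0$ in $w=L+M$ is exactly the paper's quadratic $L^{2}+2M^{2}L+M(2M^{2}+N-M-NM)=0$ under the substitution $L=w-M$. Your endpoint sign analysis showing that the unique positive critical point is genuinely a minimum, together with the check that $MinL>0$ requires $M\geq2$ and $(M-1)N>M(2M-1)$, tightens steps the paper treats only informally (it justifies the minimum after the proof by splitting $f(L)$ into an increasing and a decreasing partial fraction).
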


\begin{proof}
The probability in case B is given by%
\[
\tan^{2}\theta\sin^{2}2k\theta\left\vert \frac{M}{T}\omega^{sy}+\frac{1}%
{T}\sum_{z_{j}\in G}\omega^{z_{j}y}\right\vert ^{2}%
\]
This has expected value%
\[
\frac{M^{2}+L}{(L+M)^{2}}\tan^{2}\theta\sin^{2}2k\theta
\]
The probability in case C is given by
\[
\tan^{2}\theta\sin^{2}2k\theta\left\vert \frac{1}{T}\omega^{sy}\left[
\frac{1-\omega^{MPy}}{1-\omega^{Py}}\right]  +\frac{1}{T}\sum_{z\in G}%
\omega^{zy}\right\vert ^{2}%
\]
This has expected value%
\begin{align*}
&  (\frac{1}{T^{2}}\frac{\sin^{2}\frac{\varphi}{2}}{\sin^{2}\frac{\theta}{2}%
}+\frac{1}{T^{2}}L)\tan^{2}\theta\sin^{2}2k\theta\\
&  \leq\frac{M^{2}+L}{(M+L)^{2}}\tan^{2}\theta\sin^{2}2k\theta
\end{align*}
Therefore the expected value of the probability in case B dominates the
expected value in case C. We have by a later lemma that
\[
\frac{T}{N-T}\geq\tan^{2}\theta\sin^{2}2k\theta\geq(\frac{T}{N-T})(1-2T/N)^{2}%
\]
Therefore we have%
\begin{align*}
(\frac{L+M}{N-(L+M)})\frac{M^{2}+L}{(L+M)^{2}}  &  \geq\frac{M^{2}%
+L}{(L+M)^{2}}\tan^{2}\theta\sin^{2}2k\theta\\
\frac{1}{N-(L+M)}\frac{M^{2}+L}{(L+M)}  &  \geq\frac{M^{2}+L}{(L+M)^{2}}%
\tan^{2}\theta\sin^{2}2k\theta
\end{align*}
The upperbound of the expected value of the probability in Case B. decreases
to a minium value of $L.$ If we differentiate it and set it to zero we can
find the value of $L$ where it is a minimum. Consider
\begin{align*}
&  \frac{d}{dL}\frac{1}{N-(L+M)}\frac{M^{2}+L}{(L+M)}\\
&  =\frac{1}{(N-(L+M))^{2}}\frac{M^{2}+L}{(L+M)}+\frac{1}{(N-(L+M))(L+M)}\\
&  -\frac{M^{2}+L}{(N-(L+M))(L+M)^{2}}\\
&  =\frac{L^{2}+2M^{2}L+M(2M^{2}+N-M-NM)}{(N-(L+M))^{2}(L+M)^{2}}\text{ after
some rearranging}%
\end{align*}
Therefore $MinL$ is a solution to the quadratic equation
\[
L^{2}+2M^{2}L+M(2M^{2}+N-M-NM)=0
\]
This has solutions%
\begin{align*}
MinL  &  =\frac{1}{2}(-2M^{2}\pm\sqrt{4M^{4}-4M(2M^{2}+N-M-NM)})\\
&  =-M^{2}\pm\sqrt{M(M-1)(M(M-1)+N)}%
\end{align*}
Therefore the upper bound of the expected value of the probability achieves
its minimum at $MinL$ given by%
\[
MinL=-M^{2}+\sqrt{M(M-1)(M(M-1)+N)}%
\]

\end{proof}

This shows that as $L$ increases on the inteval $[0,MinL]$ and more errors are
introduced, the expected value of the probability of measuring any given $y$
decreases, making it harder to recover the period $P.$ Note that MinL need not
be an integer.

We can rewrite the function in the proof as%
\begin{align*}
f(L)  &  =\frac{L+M^{2}}{(N-(L+M))(L+M)}\\
&  =\frac{A}{N-(L+M)}+\frac{B}{(L+M)}%
\end{align*}

then we have
\begin{align*}
A(L+M)+B(N-(L+M))  &  =L+M^{2}\text{ then}\\
A-B  &  =1\text{ and}\\
AM+B(N-M)  &  =M^{2}%
\end{align*}

Solving these equations yields%

\begin{align*}
A  &  =1+\frac{M(M-1)}{N}\\
B  &  =\frac{M(M-1)}{N}%
\end{align*}

The first term in $f(L)$%
\[
\frac{A}{N-(L+M)}%
\]
is an increasing function of $L$ whereas the second term%
\[
\frac{B}{(L+M)}%
\]
is a decreasing function of $\dot{L}$. Adding these together produces a
function that has a minimum value.

\section{\textbf{The Three Step Amplified-QFT algorithm}}

In this section we provide the calculations for the probabilities of success
for the Amplified-QFT algorithm.

\emph{Problem:} We are given a binary valued Oracle $h(x)$ on $N$ labels
$\{0,1,...,N-1\}$, where $N=2^{n}$ , which takes the value $1$ on $C=A\cup G$
where $A$ is a periodic set of $M$ labels and $G$ is the set where the Error
Stream $g(x)=1$. We wish to determine the period P with the smallest number of
queries of the Oracle.

The Amplified-QFT algorithm is defined by the following three step procedure.

\emph{Step 1:} Apply all of Grover's algorithm in its entirety except for the
last measurement step to the starting state $|0>$. The resulting state is
given by $|\psi_{k}>$ (ref[4], ref[7],ref[1]) where $k=\left\lfloor \frac{\pi
}{4\sin^{-1}(\sqrt{T/N})}\right\rfloor $:%
\[
|\psi_{k}>=a_{k}\sum_{z\in C}|z>+b_{k}\sum_{z\notin C}|z>
\]

\noindent where%

\[
a_{k}=\frac{1}{\sqrt{T}}\sin(2k+1)\theta,b_{k}=\frac{1}{\sqrt{N-T}}%
\cos(2k+1)\theta
\]

\noindent are the appropriate amplitudes of the states and where
\[
\sin\theta=\sqrt{T/N},\cos\theta=\sqrt{1-T/N}%
\]

\noindent\ 

\emph{Step 2:} The QFT performs the following action%

\[
|z>\rightarrow\frac{1}{\sqrt{N}}\sum_{y=0}^{N-1}e^{-2\pi izy/N}|y>
\]

After the application of the QFT to the state $|\psi_{k}>$ , letting
$\omega=e^{-2\pi i/N}$ , we have%
\[
|\phi_{k}>=\frac{a_{k}}{\sqrt{N}}\sum_{z\in C}\sum_{y=0}^{N-1}\omega
^{zy}|y>+\frac{b_{k}}{\sqrt{N}}\sum_{z\notin C}\sum_{y=0}^{N-1}\omega^{zy}|y>
\]

After interchanging the order of summation, we have%
\[
|\phi_{k}>=\sum_{y=0}^{N-1}\left[  \frac{a_{k}}{\sqrt{N}}\sum_{z\in C}%
\omega^{zy}+\frac{b_{k}}{\sqrt{N}}\sum_{z\notin C}\omega^{zy}\right]  |y>
\]

.

\emph{Step 3:} Measure with respect to the standard basis to yield a integer
$y\in\{0,1,...,N-1\}$ from which we can determine the period P using the
continued fraction method.

\section{\textbf{Analysis of the Amplified-QFT Algorithm}}

We calculate the $\Pr(y)$ for the following cases:

\qquad a) $y=0$

\qquad b) $Py=0\operatorname{mod}N$ and $y\neq0$

\qquad c) $Py\neq0\operatorname{mod}N$

\bigskip The amplitude $Amp(y)$ of $|y>$ is given by
\begin{align*}
Amp(y)  &  =\frac{a_{k}}{\sqrt{N}}\sum_{z\in C}\omega^{zy}+\frac{b_{k}}%
{\sqrt{N}}\sum_{z\notin C}\omega^{zy}\\
&  =\frac{a_{k}}{\sqrt{N}}\sum_{z\in A}\omega^{zy}+\frac{a_{k}}{\sqrt{N}}%
\sum_{z\in G}\omega^{zy}+\frac{b_{k}}{\sqrt{N}}\sum_{z\notin C}\omega^{zy}\\
&  =\frac{(a_{k}-b_{k})}{\sqrt{N}}\left[  \sum_{z\in A}\omega^{zy}+\sum_{z\in
G}\omega^{zy}\right]  +\frac{b_{k}}{\sqrt{N}}\sum_{z=0}^{N-1}\omega^{zy}\\
&  =\frac{(a_{k}-b_{k})}{\sqrt{N}}\left[  \sum_{r=0}^{M-1}\omega
^{(s+rP)y}+\sum_{z\in G}\omega^{zy}\right]  +\frac{b_{k}}{\sqrt{N}}\sum
_{z=0}^{N-1}\omega^{zy}\\
&  =\frac{(a_{k}-b_{k})}{\sqrt{N}}\left[  \omega^{sy}\sum_{r=0}^{M-1}%
\omega^{rPy}+\sum_{z\in G}\omega^{zy}\right]  +\frac{b_{k}}{\sqrt{N}}%
\sum_{z=0}^{N-1}\omega^{zy}%
\end{align*}

\bigskip In the following we use the following four lemmas:

\begin{lemma}%
\[
\sum_{z=0}^{N-1}\omega^{zy}=\frac{1-\omega^{Ny}}{1-\omega^{y}}=0,w^{y}\neq1
\]

\end{lemma}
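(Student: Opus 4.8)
The plan is to recognize the sum $\sum_{z=0}^{N-1}\omega^{zy}$ as a finite geometric series in the summation index $z$, with first term $1$ (corresponding to $z=0$) and common ratio $\omega^{y}$, and then to evaluate it in closed form using the fact that $\omega=e^{-2\pi i/N}$ is an $N$th root of unity.

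First I would set $r=\omega^{y}$ and observe that $\sum_{z=0}^{N-1}\omega^{zy}=\sum_{z=0}^{N-1}r^{z}$. Under the stated hypothesis $\omega^{y}\neq 1$, that is $r\neq 1$, the denominator $1-r$ is nonzero, so the elementary geometric-series identity $\sum_{z=0}^{N-1}r^{z}=\frac{1-r^{N}}{1-r}$ applies and produces the middle expression $\frac{1-\omega^{Ny}}{1-\omega^{y}}$.

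Next I would evaluate the numerator. Since $\omega=e^{-2\pi i/N}$, we have $\omega^{Ny}=e^{-2\pi i y}$, and because $y$ is an integer this equals $1$. Hence $1-\omega^{Ny}=0$, so the entire quotient vanishes and $\sum_{z=0}^{N-1}\omega^{zy}=0$, as claimed.

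I do not anticipate any genuine obstacle in this argument, as it is a direct application of the geometric-series formula. The only point requiring care is the hypothesis $\omega^{y}\neq 1$, which is precisely the condition excluding the degenerate case $y\equiv 0\pmod N$. In that excluded case every summand equals $1$ and the sum is $N$ rather than $0$, so the restriction is essential both for the validity of the closed form (to avoid a $0/0$ indeterminacy in the denominator) and for the conclusion itself.
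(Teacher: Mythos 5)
Your proof is correct and is exactly the argument the paper intends: the paper states this lemma without proof, treating it as the standard geometric-series identity, and your derivation (common ratio $\omega^{y}\neq 1$, closed form $\frac{1-\omega^{Ny}}{1-\omega^{y}}$, numerator vanishing since $\omega^{Ny}=e^{-2\pi iNy/N}=1$ for integer $y$) supplies precisely that justification. Your remark that the excluded case $y\equiv 0 \pmod N$ yields the sum $N$ instead of $0$ is also correct and explains why the hypothesis $\omega^{y}\neq 1$ is essential.
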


\begin{lemma}%
\[
\frac{T}{\sqrt{N}}(a_{k}-b_{k})=\tan\theta\sin2k\theta
\]

\end{lemma}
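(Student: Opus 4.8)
The plan is to prove the identity by direct substitution of the definitions of $a_k$ and $b_k$ followed by a collapse into a single sine via the angle-subtraction formula. First I would write out
\[
\frac{T}{\sqrt{N}}(a_k-b_k)=\frac{T}{\sqrt{N}}\left[\frac{1}{\sqrt{T}}\sin(2k+1)\theta-\frac{1}{\sqrt{N-T}}\cos(2k+1)\theta\right]
\]
and distribute, so that the two terms become $\sqrt{T/N}\,\sin(2k+1)\theta$ and $\dfrac{T}{\sqrt{N(N-T)}}\cos(2k+1)\theta$.

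The second step is to rewrite both coefficients in terms of $\theta$ using $\sin\theta=\sqrt{T/N}$ and $\cos\theta=\sqrt{1-T/N}=\sqrt{(N-T)/N}$. The first coefficient is immediately $\sin\theta$. For the second, I would observe that
\[
\frac{T}{\sqrt{N(N-T)}}=\frac{T/N}{\sqrt{(N-T)/N}}=\frac{\sin^2\theta}{\cos\theta},
\]
which is the one manipulation worth spelling out carefully, since getting the normalization $1/\sqrt{N-T}$ to combine correctly with the $T/\sqrt{N}$ prefactor is where an arithmetic slip is most likely.

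With those substitutions the expression reads $\sin\theta\,\sin(2k+1)\theta-\dfrac{\sin^2\theta}{\cos\theta}\cos(2k+1)\theta$. The third step is to factor out $\tan\theta=\sin\theta/\cos\theta$, leaving
\[
\tan\theta\left[\cos\theta\,\sin(2k+1)\theta-\sin\theta\,\cos(2k+1)\theta\right],
\]
and then apply the identity $\cos\theta\sin A-\sin\theta\cos A=\sin(A-\theta)$ with $A=(2k+1)\theta$ to obtain $\tan\theta\,\sin(2k\theta)$, as claimed.

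I expect no genuine obstacle here: the statement is a purely algebraic/trigonometric identity with no quantifiers or estimates, so the entire argument is a chain of equalities. The only delicate point is bookkeeping with the two different normalizations $\sqrt{T}$ and $\sqrt{N-T}$; the cleanest safeguard is to note that this is exactly the computation already carried out in the no-error analysis (the Case $Py=0\operatorname{mod}N,\,y\neq0$ derivation), with $M$ replaced throughout by $T=L+M$, so the identity follows verbatim from that earlier manipulation.
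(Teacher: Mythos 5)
Your proposal is correct and follows essentially the same route as the paper: substitute the definitions of $a_k$ and $b_k$, rewrite the coefficients $\sqrt{T/N}$ and $T/\sqrt{N(N-T)}$ as $\sin\theta$ and $\sin^2\theta/\cos\theta$, factor out $\tan\theta$, and collapse $\cos\theta\sin(2k+1)\theta-\sin\theta\cos(2k+1)\theta$ to $\sin 2k\theta$ by the angle-subtraction identity. The only difference is cosmetic ordering of the algebra (the paper factors $\sqrt{T/N}$ out first and then converts the inner radical to $\sin\theta/\cos\theta$), so nothing substantive separates the two arguments.
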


\begin{proof}%
\begin{align*}
\frac{T}{\sqrt{N}}(a_{k}-b_{k})  &  =\frac{T}{\sqrt{NT}}\sin(2k+1)\theta
-\frac{T}{\sqrt{N(N-T)}}\cos(2k+1)\theta\\
&  =\sqrt{\frac{T}{N}}(\sin(2k+1)\theta-\sqrt{\frac{T}{(N-T)}}\cos
(2k+1)\theta)\\
&  =\sqrt{\frac{T}{N}}(\sin(2k+1)\theta-\sqrt{\frac{T/N}{(1-T/N)}}%
\cos(2k+1)\theta)\\
&  =\sqrt{\frac{T}{N}}(\sin(2k+1)\theta-\frac{\sin\theta}{\cos\theta}%
\cos(2k+1)\theta)\\
&  =\tan\theta(\cos\theta\sin(2k+1)\theta-\sin\theta\cos(2k+1)\theta)\\
&  =\tan\theta\sin2k\theta
\end{align*}

\end{proof}

\begin{lemma}%
\[
\frac{T}{N}(\frac{N}{N-T})\geq\tan^{2}\theta\sin^{2}2k\theta\geq\frac{T}%
{N}(\frac{N}{N-T})(1-\frac{2T}{N})^{2}%
\]

\end{lemma}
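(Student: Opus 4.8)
The plan is to reduce everything to elementary trigonometric identities, exactly mirroring the argument already carried out in the subsection ``Amplified-QFT Analysis: $Py=0\operatorname{mod}N,y\neq0$'' but with $M$ replaced by $T$. The first step is to rewrite the $\tan^{2}\theta$ factor in closed form using the definitions $\sin\theta=\sqrt{T/N}$ and $\cos\theta=\sqrt{1-T/N}$. This gives
\[
\tan^{2}\theta=\frac{\sin^{2}\theta}{\cos^{2}\theta}=\frac{T/N}{1-T/N}=\frac{T}{N}\left(\frac{N}{N-T}\right),
\]
so that the quantity under study is $\tan^{2}\theta\sin^{2}2k\theta=\frac{T}{N}\left(\frac{N}{N-T}\right)\sin^{2}2k\theta$. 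The entire lemma then amounts to the two-sided bound $ (1-\tfrac{2T}{N})^{2}\le\sin^{2}2k\theta\le 1$.

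The upper bound is immediate, since $\sin^{2}2k\theta\le1$ always. For the lower bound I would start from the floor definition $k=\left\lfloor\frac{\pi}{4\theta}\right\rfloor$, which yields $\frac{\pi}{4\theta}-1\le k\le\frac{\pi}{4\theta}$, and multiply through by $2\theta>0$ to obtain $\frac{\pi}{2}-2\theta\le 2k\theta\le\frac{\pi}{2}$. The key observation is that on this interval $\sin$ is increasing and $2k\theta\le\frac{\pi}{2}$, so $\sin 2k\theta\ge\sin\!\left(\frac{\pi}{2}-2\theta\right)=\cos 2\theta$. Finally, using the double-angle identity $\cos 2\theta=1-2\sin^{2}\theta=1-\frac{2T}{N}$ and squaring gives $\sin^{2}2k\theta\ge\left(1-\frac{2T}{N}\right)^{2}$, which combined with the computed $\tan^{2}\theta$ factor delivers the stated lower bound.

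The one point requiring care, and the main (though mild) obstacle, is justifying that the chain of inequalities for $\sin$ is valid, i.e. that the argument $2k\theta$ stays in a range where $\sin$ is monotone and that $\cos 2\theta\ge0$ so the squaring step preserves the inequality. Both require $\theta\le\frac{\pi}{4}$, equivalently $\sin^{2}\theta=T/N\le\frac{1}{2}$; this is guaranteed by the standing hypothesis $T<<N$, and in particular by $P\le\sqrt{N}$ forcing $T$ to be a small fraction of $N$. I would note this assumption explicitly, after which the two bounds assemble directly into
\[
\frac{T}{N}\left(\frac{N}{N-T}\right)\ge\tan^{2}\theta\sin^{2}2k\theta\ge\frac{T}{N}\left(\frac{N}{N-T}\right)\left(1-\frac{2T}{N}\right)^{2},
\]
completing the proof.
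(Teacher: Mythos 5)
Your proof is correct and follows essentially the same route as the paper's: rewrite $\tan^{2}\theta=\frac{T}{N}\left(\frac{N}{N-T}\right)$, use $k=\left\lfloor\frac{\pi}{4\theta}\right\rfloor$ to place $2k\theta$ in $\left[\frac{\pi}{2}-2\theta,\frac{\pi}{2}\right]$, bound $\sin 2k\theta$ below by $\cos 2\theta=1-\frac{2T}{N}$, and square. Your explicit remark that the squaring step needs $\cos 2\theta\geq0$, i.e. $T/N\leq\frac{1}{2}$, is a point the paper leaves implicit (it is covered by the standing assumption $T<<N$), and is a worthwhile addition.
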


\begin{proof}
Using $k=\left\lfloor \frac{\pi}{4\theta}\right\rfloor $ $\Longrightarrow$
$\frac{\pi}{4\theta}-1\leq k\leq\frac{\pi}{4\theta}$ $\Longrightarrow$
$\frac{\pi}{2}-2\theta\leq2k\theta\leq\frac{\pi}{2}\Longrightarrow\sin
(\frac{\pi}{2}-2\theta)\leq\sin2k\theta\leq1$ we have%
\begin{align*}
\frac{\sin^{2}\theta}{\cos^{2}\theta}  &  \geq\tan^{2}\theta\sin^{2}%
2k\theta\geq\tan^{2}\theta\sin^{2}(\frac{\pi}{2}-2\theta)\\
&  \Longrightarrow\frac{T}{N}\frac{1}{1-\frac{T}{N}}\geq\tan^{2}\theta\sin
^{2}2k\theta\geq\tan^{2}\theta\sin^{2}(\frac{\pi}{2}-2\theta)\\
&  \Longrightarrow\frac{T}{N}(\frac{N}{N-T})\geq\tan^{2}\theta\sin^{2}%
2k\theta\geq\frac{\sin^{2}\theta}{\cos^{2}\theta}\cos^{2}2\theta\\
&  \Longrightarrow\frac{T}{N}(\frac{N}{N-T})\geq\tan^{2}\theta\sin^{2}%
2k\theta\geq\frac{\sin^{2}\theta}{\cos^{2}\theta}(2\cos^{2}\theta-1)^{2}\\
&  \Longrightarrow\frac{T}{N}(\frac{N}{N-T})\geq\tan^{2}\theta\sin^{2}%
2k\theta\geq\frac{T}{N}(\frac{N}{N-T})(1-\frac{2T}{N})^{2}%
\end{align*}

\end{proof}

\begin{lemma}
If $Py\neq0\operatorname{mod}N$ then%
\[
\left\vert \frac{1-\omega^{MPy}}{1-\omega^{Py}}\right\vert ^{2}=\frac{\sin
^{2}(\pi MPy/N)}{\sin^{2}(\pi Py/N)}\leq M^{2}%
\]

\begin{proof}
We use the following result%
\[
|1-e^{i\theta}|^{2}=4\sin^{2}(\theta/2)
\]

\end{proof}
\end{lemma}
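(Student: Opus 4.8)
The plan is to treat the stated equality and the inequality separately, since they rest on different facts. For the equality I would substitute $\omega = e^{-2\pi i/N}$, so that $\omega^{MPy} = e^{-2\pi i(MPy)/N}$ and $\omega^{Py} = e^{-2\pi i(Py)/N}$, and then apply the identity $|1-e^{i\phi}|^2 = 4\sin^2(\phi/2)$ quoted in the statement to the numerator and denominator separately. Taking $\phi = -2\pi MPy/N$ gives $|1-\omega^{MPy}|^2 = 4\sin^2(\pi MPy/N)$, and likewise $|1-\omega^{Py}|^2 = 4\sin^2(\pi Py/N)$, where I use that $\sin^2$ is even so the sign of the exponent is irrelevant. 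Forming the quotient, the factors of $4$ cancel and the equality $\left|\frac{1-\omega^{MPy}}{1-\omega^{Py}}\right|^2 = \frac{\sin^2(\pi MPy/N)}{\sin^2(\pi Py/N)}$ drops out. The denominator is nonzero precisely because the hypothesis $Py\neq 0\operatorname{mod}N$ forces $\omega^{Py}\neq 1$.

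For the inequality, rather than estimating the sine quotient directly, I would exploit the finite geometric sum identity already used throughout the amplitude computations: since $\omega^{Py}\neq 1$, we have $\frac{1-\omega^{MPy}}{1-\omega^{Py}} = \sum_{r=0}^{M-1}\omega^{rPy}$. Applying the triangle inequality and using $|\omega^{rPy}| = 1$ for every $r$ then yields $\left|\sum_{r=0}^{M-1}\omega^{rPy}\right| \leq \sum_{r=0}^{M-1}|\omega^{rPy}| = M$. Squaring both sides and combining with the equality just established gives the claimed bound $\frac{\sin^2(\pi MPy/N)}{\sin^2(\pi Py/N)} \leq M^2$.

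There is no genuine obstacle here; the only point requiring care is that both the geometric-sum identity and the equality itself depend on $\omega^{Py}\neq 1$, which is exactly what the hypothesis $Py\neq 0\operatorname{mod}N$ supplies, so I would state explicitly that this is where that hypothesis is consumed. It is also worth remarking that the bound is sharp in the limit: equality is approached when $Py$ is small modulo $N$, so that the $M$ unit vectors $\omega^{rPy}$ nearly align and their sum has modulus close to $M$, which is precisely the regime driving the sine quotient toward its maximum value $M^2$.
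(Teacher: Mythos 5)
Your proof of the equality is exactly the paper's argument: the paper's entire proof consists of citing the identity $|1-e^{i\theta}|^{2}=4\sin^{2}(\theta/2)$, to be applied to numerator and denominator with the factors of $4$ cancelling, and your careful version (tracking the sign of the exponent and noting that evenness of $\sin^{2}$ makes it harmless) is the same computation spelled out. Where you go beyond the paper is the bound $\leq M^{2}$: the paper's proof never addresses it, simply asserting the inequality here and elsewhere (e.g.\ in Chapter 2 it states $0\leq\frac{\sin^{2}(\pi MPy/N)}{\sin^{2}(\pi Py/N)}\leq M^{2}$ without justification). Your route through the geometric sum
\[
\frac{1-\omega^{MPy}}{1-\omega^{Py}}=\sum_{r=0}^{M-1}\omega^{rPy},
\]
followed by the triangle inequality on $M$ unit vectors, is the natural missing argument, and it is consistent with the paper's own usage since that identity is invoked repeatedly in the amplitude calculations. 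Your closing observations --- that the hypothesis $Py\neq0\operatorname{mod}N$ is consumed exactly where $\omega^{Py}\neq1$ is needed, and that the bound is approached when the $\omega^{rPy}$ nearly align --- are correct and make your write-up strictly more complete than the paper's.
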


\noindent

\subsection{Amplified-QFT Analysis: $y=0$}

\noindent We have%

\begin{align*}
Amp(y)  &  =\frac{a_{k}}{\sqrt{N}}\sum_{z\in C}\omega^{zy}+\frac{b_{k}}%
{\sqrt{N}}\sum_{z\notin C}\omega^{zy}\\
&  =\frac{1}{\sqrt{N}}(Ta_{k}+(N-T)b_{k})\\
&  =\frac{1}{\sqrt{N}}\left[  \frac{T}{\sqrt{T}}\sin(2k+1)\theta+\frac
{N-T}{\sqrt{N-T}}\cos(2k+1)\theta\right] \\
&  =\sqrt{\frac{T}{N}}\sin(2k+1)\theta+\sqrt{1-\frac{T}{N}}\cos(2k+1)\theta\\
&  =\sin\theta\sin(2k+1)\theta+\cos\theta\cos(2k+1)\theta\\
&  =\cos(2k\theta)
\end{align*}

\noindent We have
\[
\Pr(y=0)=\cos^{2}(2k\theta)
\]

\begin{lemma}%
\[
\frac{4T}{N}(1-\frac{T}{N})=\sin^{2}2\theta\geq\Pr(y=0)=\cos^{2}2k\theta\geq0
\]

\end{lemma}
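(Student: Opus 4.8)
The plan is to split the claimed chain into three independent pieces, two of which are trivial. The rightmost inequality $\cos^{2}2k\theta \geq 0$ is immediate, being the square of a real number. For the leftmost equality I would just expand the double angle: since $\sin 2\theta = 2\sin\theta\cos\theta$ with $\sin\theta=\sqrt{T/N}$ and $\cos\theta=\sqrt{1-T/N}$, I get $\sin^{2}2\theta = 4\sin^{2}\theta\cos^{2}\theta = \frac{4T}{N}(1-\frac{T}{N})$. Thus all the actual content is in the middle inequality $\sin^{2}2\theta \geq \cos^{2}2k\theta$.

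To prove that middle inequality I would reuse the bracketing of $2k\theta$ already derived earlier in the chapter. From $k=\lfloor \pi/(4\theta)\rfloor$ one has $\frac{\pi}{4\theta}-1\leq k\leq \frac{\pi}{4\theta}$, and multiplying through by $2\theta>0$ gives $\frac{\pi}{2}-2\theta \leq 2k\theta \leq \frac{\pi}{2}$. The key observation is that on this interval the cosine is nonnegative and decreasing, so $0 \leq \cos 2k\theta \leq \cos(\frac{\pi}{2}-2\theta) = \sin 2\theta$. Squaring this (legitimate precisely because all three quantities are nonnegative) yields $\cos^{2}2k\theta \leq \sin^{2}2\theta$, which is exactly what is wanted.

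The one point that needs care, and the main (mild) obstacle, is justifying that the interval $[\frac{\pi}{2}-2\theta,\frac{\pi}{2}]$ really lies in the region where cosine is nonnegative and monotone, i.e. that $\frac{\pi}{2}-2\theta \geq 0$, equivalently $\theta \leq \pi/4$. This holds exactly when $\sin\theta=\sqrt{T/N}\leq 1/\sqrt{2}$, that is $T\leq N/2$, which is guaranteed by the standing hypothesis $T\ll N$. Once this is in place, the nonnegativity of $\cos 2k\theta$ makes the squaring step valid and the whole chain follows at once. I would note that this is the identical device used moments earlier to bound $\tan^{2}\theta\sin^{2}2k\theta$, so no new machinery is required.
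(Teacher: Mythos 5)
Your proof is correct and follows essentially the same approach as the paper: both derive the bracketing $\frac{\pi}{2}-2\theta\leq 2k\theta\leq\frac{\pi}{2}$ from $k=\left\lfloor\frac{\pi}{4\theta}\right\rfloor$, use the monotonicity and nonnegativity of cosine on that interval to get $\sin 2\theta=\cos(\frac{\pi}{2}-2\theta)\geq\cos 2k\theta\geq 0$, square, and finish with $\sin 2\theta=2\sin\theta\cos\theta=2\sqrt{T/N}\sqrt{1-T/N}$. Your explicit verification that $\theta\leq\pi/4$ (equivalently $T\leq N/2$, ensured by $T\ll N$) is a small point of added rigor that the paper leaves implicit.
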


\begin{proof}
Using $k=\left\lfloor \frac{\pi}{4\theta}\right\rfloor $ $\Longrightarrow$
$\frac{\pi}{4\theta}-1\leq k\leq\frac{\pi}{4\theta}$ $\Longrightarrow$
$\frac{\pi}{2}-2\theta\leq2k\theta\leq\frac{\pi}{2}\Longrightarrow\sin
(2\theta)=\cos(\frac{\pi}{2}-2\theta)\geq\cos2k\theta\geq\cos\frac{\pi}{2}=0$
we have%
\[
\sin2\theta=2\sin\theta\cos\theta=2\sqrt{\frac{T}{N}}\sqrt{1-\frac{T}{N}}%
\]

\end{proof}

\subsection{Amplified-QFT Analysis: $Py=0\operatorname{mod}N,y\neq0$}

\noindent

\bigskip By making use of previous lemmas we have
\begin{align*}
Amp(y)  &  =\frac{(a_{k}-b_{k})}{\sqrt{N}}\left[  \omega^{sy}\sum_{r=0}%
^{M-1}\omega^{rPy}+\sum_{z\in G}\omega^{zy}\right]  +\frac{b_{k}}{\sqrt{N}%
}\sum_{z=0}^{N-1}\omega^{zy}\\
&  =\frac{(a_{k}-b_{k})}{\sqrt{N}}\left[  \omega^{sy}\sum_{r=0}^{M-1}%
\omega^{rPy}+\sum_{z\in G}\omega^{zy}\right] \\
&  =\frac{(a_{k}-b_{k})}{\sqrt{N}}\left[  \omega^{sy}M+\sum_{z\in G}%
\omega^{zy}\right] \\
&  =\frac{T(a_{k}-b_{k})}{\sqrt{N}}\left[  \frac{M}{T}\omega^{sy}+\frac{1}%
{T}\sum_{z\in G}\omega^{zy}\right] \\
&  =\tan\theta\sin2k\theta\left[  \frac{M}{T}\omega^{sy}+\frac{1}{T}\sum_{z\in
G}\omega^{zy}\right]
\end{align*}

Therefore by the following lemma we have this result for the $\Pr(y):$%

\begin{align*}
\Pr(y)  &  =\tan^{2}\theta\sin^{2}2k\theta\left\vert \frac{M}{T}\omega
^{sy}+\frac{1}{T}\sum_{z\in G}\omega^{zy}\right\vert ^{2}\\
&  \leq\tan^{2}\theta\sin^{2}2k\theta
\end{align*}

\begin{lemma}
$\left\vert \frac{M}{T}\omega^{sy}+\frac{1}{T}\sum_{z\in G}\omega
^{zy}\right\vert ^{2}\leq1$
\end{lemma}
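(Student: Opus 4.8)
The plan is to bound the modulus directly by the triangle inequality, using the fact that $\omega=e^{-2\pi i/N}$ has modulus $1$, so that every power $\omega^{sy}$ and $\omega^{zy}$ satisfies $|\omega^{sy}|=|\omega^{zy}|=1$. Since $G$ has exactly $L$ elements and $T=L+M$, the weights $\tfrac{M}{T}$ and $\tfrac{1}{T}$ are arranged precisely so that the total weight is $1$, which is what forces the bound.

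Concretely, first I would apply the triangle inequality to the two groups of terms:
\[
\left\vert \frac{M}{T}\omega^{sy}+\frac{1}{T}\sum_{z\in G}\omega^{zy}\right\vert
\leq \frac{M}{T}\left\vert \omega^{sy}\right\vert +\frac{1}{T}\sum_{z\in G}\left\vert \omega^{zy}\right\vert .
\]
Next I would substitute $|\omega^{sy}|=1$ and $|\omega^{zy}|=1$, and use $|G|=L$ to collapse the sum, giving $\tfrac{M}{T}+\tfrac{L}{T}=\tfrac{M+L}{T}$. Since $T=L+M$ by definition, this equals $1$. Finally, squaring both sides (legitimate because both are nonnegative) yields $\left\vert \frac{M}{T}\omega^{sy}+\frac{1}{T}\sum_{z\in G}\omega^{zy}\right\vert ^{2}\leq1$, as claimed.

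There is essentially no obstacle here: the estimate is a one-line consequence of the triangle inequality together with the bookkeeping identity $T=L+M$. The only point worth a remark is that the bound is sharp, with equality attained exactly when all the phases $\omega^{sy}$ and $\omega^{zy}$ coincide (for instance at $y=0$, where every term equals $1$); this confirms that $1$ is the correct constant and cannot be improved without further hypotheses on $G$, $s$, and $y$.
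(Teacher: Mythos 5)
Your proof is correct and is essentially the same argument as the paper's: both bound each unit-modulus phase by $1$ and use $|G|=L$ together with $T=L+M$ to get the total weight $1$. The only cosmetic difference is that you apply the triangle inequality to the modulus and then square, whereas the paper expands the squared modulus as $\left\vert a\right\vert ^{2}+2\left\vert a\right\vert \left\vert b\right\vert +\left\vert b\right\vert ^{2}=\frac{M^{2}+2ML+L^{2}}{T^{2}}=1$, which is the same inequality in squared form.
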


\begin{proof}%
\begin{align*}
&  \left\vert \frac{M}{T}\omega^{sy}+\frac{1}{T}\sum_{z\in G}\omega
^{zy}\right\vert ^{2}\\
&  \leq\left\vert \frac{M}{T}\omega^{sy}\right\vert ^{2}+\left\vert 2\frac
{M}{T}\omega^{sy}\frac{1}{T}\sum_{z\in G}\omega^{zy}\right\vert +\left\vert
\frac{1}{T}\sum_{z\in G}\omega^{zy}\right\vert ^{2}\\
&  \leq\frac{M^{2}}{T^{2}}+\frac{2ML}{T^{2}}+\frac{L^{2}}{T^{2}}\\
&  =1
\end{align*}

\end{proof}

\subsection{Amplified-QFT Analysis: $Py\neq0\operatorname{mod}N$}

\noindent Making use of the previous lemmas we have
\begin{align*}
Amp(y)  &  =\frac{(a_{k}-b_{k})}{\sqrt{N}}\left[  \omega^{sy}\sum_{r=0}%
^{M-1}\omega^{rPy}+\sum_{z\in G}\omega^{zy}\right]  +\frac{b_{k}}{\sqrt{N}%
}\sum_{z=0}^{N-1}\omega^{zy}\\
&  =\frac{(a_{k}-b_{k})}{\sqrt{N}}\left[  \omega^{sy}\sum_{r=0}^{M-1}%
\omega^{rPy}+\sum_{z\in G}\omega^{zy}\right] \\
&  =\frac{(a_{k}-b_{k})}{\sqrt{N}}\left[  \omega^{sy}\left[  \frac
{1-\omega^{MPy}}{1-\omega^{Py}}\right]  +\sum_{z\in G}\omega^{zy}\right] \\
&  =\frac{T(a_{k}-b_{k})}{\sqrt{N}}\left[  \frac{1}{T}\omega^{sy}\left[
\frac{1-\omega^{MPy}}{1-\omega^{Py}}\right]  +\frac{1}{T}\sum_{z\in G}%
\omega^{zy}\right] \\
&  =\tan\theta\sin2k\theta\left[  \frac{1}{T}\omega^{sy}\left[  \frac
{1-\omega^{MPy}}{1-\omega^{Py}}\right]  +\frac{1}{T}\sum_{z\in G}\omega
^{zy}\right]
\end{align*}

\bigskip

\noindent Therefore by the following lemma we have this result for the
$\Pr(y):$
\begin{align*}
\Pr(y)  &  =\tan^{2}\theta\sin^{2}2k\theta\left\vert \frac{1}{T}\omega
^{sy}\left[  \frac{1-\omega^{MPy}}{1-\omega^{Py}}\right]  +\frac{1}{T}%
\sum_{z\in G}\omega^{zy}\right\vert ^{2}\\
&  \leq\tan^{2}\theta\sin^{2}2k\theta
\end{align*}

\begin{lemma}
\noindent$\left\vert \frac{1}{T}\omega^{sy}\left[  \frac{1-\omega^{MPy}%
}{1-\omega^{Py}}\right]  +\frac{1}{T}\sum_{z\in G}\omega^{zy}\right\vert
^{2}\leq1$

\begin{proof}%
\begin{align*}
&  \left\vert \frac{1}{T}\omega^{sy}\left[  \frac{1-\omega^{MPy}}%
{1-\omega^{Py}}\right]  +\frac{1}{T}\sum_{z\in G}\omega^{zy}\right\vert ^{2}\\
&  \leq\left\vert \frac{1}{T}\omega^{sy}\left[  \frac{1-\omega^{MPy}}%
{1-\omega^{Py}}\right]  \right\vert ^{2}+2\left\vert \frac{1}{T}\omega
^{sy}\left[  \frac{1-\omega^{MPy}}{1-\omega^{Py}}\right]  \right\vert
\left\vert \frac{1}{T}\sum_{z\in G}\omega^{zy}\right\vert +\left\vert \frac
{1}{T}\sum_{z\in G}\omega^{zy}\right\vert ^{2}\\
&  \leq\frac{1}{T^{2}}\left\vert \frac{\sin(\pi MPy/N)}{\sin(\pi
Py/N)}\right\vert ^{2}+\frac{2L}{T^{2}}\left\vert \frac{\sin(\pi MPy/N)}%
{\sin(\pi Py/N)}\right\vert +\frac{L^{2}}{T^{2}}\\
&  \leq\frac{M^{2}}{T^{2}}+\frac{2LM}{T^{2}}+\frac{L^{2}}{T^{2}}\\
&  =1
\end{align*}

\end{proof}
\end{lemma}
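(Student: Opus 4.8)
The plan is to reduce the claim to the triangle inequality together with the already-established bound on the Dirichlet-type ratio. First I would observe that the two summands inside the modulus are each normalized sums of unit-modulus terms: in the regime $Py\neq0\bmod N$ the geometric series gives $\omega^{sy}\left[\frac{1-\omega^{MPy}}{1-\omega^{Py}}\right]=\omega^{sy}\sum_{r=0}^{M-1}\omega^{rPy}=\sum_{r=0}^{M-1}\omega^{(s+rP)y}=\sum_{z\in A}\omega^{zy}$, so the whole bracketed expression is simply $\frac{1}{T}\sum_{z\in C}\omega^{zy}$ with $C=A\cup G$ and $|C|=T$. From this viewpoint the bound is immediate: $\left|\frac{1}{T}\sum_{z\in C}\omega^{zy}\right|\le\frac{1}{T}\sum_{z\in C}|\omega^{zy}|=\frac{T}{T}=1$, since each $|\omega^{zy}|=1$, and squaring preserves the inequality.

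If instead I keep the two pieces separate, to stay parallel with the earlier Case B lemma, I would apply the triangle inequality to expand $\left|\frac{1}{T}\omega^{sy}[\cdots]+\frac{1}{T}\sum_{z\in G}\omega^{zy}\right|^2$ into the three terms $\left|\frac{1}{T}\omega^{sy}[\cdots]\right|^2$, the cross term $2\left|\frac{1}{T}\omega^{sy}[\cdots]\right|\left|\frac{1}{T}\sum_{z\in G}\omega^{zy}\right|$, and $\left|\frac{1}{T}\sum_{z\in G}\omega^{zy}\right|^2$. The first factor is controlled by the preceding lemma, which gives $\left|\frac{1-\omega^{MPy}}{1-\omega^{Py}}\right|^2=\frac{\sin^2(\pi MPy/N)}{\sin^2(\pi Py/N)}\le M^2$, hence $\left|\frac{1}{T}\omega^{sy}[\cdots]\right|\le\frac{M}{T}$; and the error-stream sum is a sum of $L$ unit-modulus terms, so $\left|\frac{1}{T}\sum_{z\in G}\omega^{zy}\right|\le\frac{L}{T}$. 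Substituting these bounds yields at most $\frac{M^2}{T^2}+\frac{2LM}{T^2}+\frac{L^2}{T^2}=\frac{(M+L)^2}{T^2}$, which equals $1$ because $T=M+L$.

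There is essentially no hard step here; the only point requiring care is invoking the correct auxiliary bound $\left|\frac{1-\omega^{MPy}}{1-\omega^{Py}}\right|\le M$, which is legitimate precisely in the regime $Py\neq0\bmod N$ (so that $1-\omega^{Py}\neq0$ and the ratio, hence the geometric series, is well defined). Everything else is the triangle inequality and the identity $|\omega^{zy}|=1$. I would write up the separated-terms version to match the Case B lemma, but I would record the one-line observation that the bracketed quantity is just $\frac{1}{T}\sum_{z\in C}\omega^{zy}$, since that both makes the inequality transparent and shows that it is tight, with equality exactly when all the phases $\omega^{zy}$ for $z\in C$ coincide (for instance at $y=0$).
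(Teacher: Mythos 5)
Your proposal is correct, and your second (separated-terms) write-up coincides with the paper's own proof: the paper expands $|a+b|^{2}\le |a|^{2}+2|a||b|+|b|^{2}$, bounds the Dirichlet-type factor by $\left\vert \frac{1-\omega^{MPy}}{1-\omega^{Py}}\right\vert =\left\vert \frac{\sin(\pi MPy/N)}{\sin(\pi Py/N)}\right\vert \le M$ using the earlier lemma, bounds $\left\vert \sum_{z\in G}\omega^{zy}\right\vert \le L$, and sums to $(M+L)^{2}/T^{2}=1$. Your leading observation, however, is a genuinely cleaner route that the paper does not take: since $Py\neq 0\bmod N$, the geometric series identity collapses the bracket to $\sum_{z\in A}\omega^{zy}$, so the whole quantity is $\frac{1}{T}\sum_{z\in C}\omega^{zy}$ with $C=A\cup G$ and $|C|=T$ (this uses the paper's standing assumption $A\cap G=\varnothing$, which you should cite explicitly since $|C|=T$ fails without it), and the bound follows from one application of the triangle inequality to $T$ unit-modulus terms. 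That version buys transparency: it shows the constant $1$ is just the normalization $|C|/T$, it proves the Case B and Case D lemmas by the identical argument, and it exhibits when the bound is tight; the paper's version has the mild advantage of staying in the same notation that its probability tables are written in. One small caveat on your tightness remark: $y=0$ lies outside the regime of this lemma (there $1-\omega^{Py}=0$ and the bracketed ratio is undefined), so equality should be illustrated via the sum form $\sum_{z\in C}\omega^{zy}$ with coinciding phases rather than by $y=0$ itself; this does not affect the proof.
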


\noindent We notice that if in addition $MPy=0\operatorname{mod}N$ then%

\begin{align*}
\Pr(y)  &  =\tan^{2}\theta\sin^{2}2k\theta\left\vert \frac{1}{T}\sum_{z\in
G}\omega^{zy}\right\vert ^{2}\\
&  \leq\frac{L^{2}}{T^{2}}\tan^{2}\theta\sin^{2}2k\theta
\end{align*}

\subsection{Amplified-QFT Summary}

The probability $\Pr(y)$ is given exactly by

\begin{center}%
\[
\left\{
\begin{tabular}
[c]{l}%
Case A: $\cos^{2}2k\theta$\\
\\
Case B: $\tan^{2}\theta\sin^{2}2k\theta\left\vert \frac{M}{T}\omega^{sy}%
+\frac{1}{T}\sum_{z\in G}\omega^{zy}\right\vert ^{2}$\\
\\
Case C: $\tan^{2}\theta\sin^{2}2k\theta\left\vert \frac{1}{T}\omega
^{sy}\left[  \frac{1-\omega^{MPy}}{1-\omega^{Py}}\right]  +\frac{1}{T}%
\sum_{z\in G}\omega^{zy}\right\vert ^{2}$\\
\\
Case D: $\tan^{2}\theta\sin^{2}2k\theta\left\vert \frac{1}{T}\sum_{z\in
G}\omega^{zy}\right\vert ^{2}$%
\end{tabular}
\ \ \ \ \ \ \right\}
\]

\end{center}

\section{\textbf{Applying the QFT to the Oracle.}}

In this section we just apply the QFT to the binary Oracle $h$, which is $1$
on $C$ and $0$ elsewhere.

We begin with the following state%

\[
|\xi>=\frac{1}{\sqrt{N}}\sum_{z=0}^{N-1}|z>\otimes\frac{1}{\sqrt{2}}(|0>-|1>)
\]
\bigskip

\noindent and apply the unitary transform for $h$, $U_{h}$ ,\ to this state
which performs the following action:

\bigskip%
\[
U_{h}|z>|c>=|z>|c\oplus h(z)>
\]

\noindent to get the state $|\psi>$%

\begin{align*}
|\psi &  >=U_{h}\frac{1}{\sqrt{N}}\sum_{z=0}^{N-1}|z>\frac{1}{\sqrt{2}%
}(|0>-|1>)\\
&  =\frac{1}{\sqrt{N}}\left[  (-1)\sum_{z\in C}|z>+\sum_{z\notin C}|z>\right]
\frac{1}{\sqrt{2}}(|0>-|1>)\\
&  =\frac{1}{\sqrt{N}}\left[  (-2)\sum_{z\in C}|z>+\sum_{z=0}^{N-1}|z>\right]
\frac{1}{\sqrt{2}}(|0>-|1>)\\
&  =\frac{1}{\sqrt{N}}\left[  (-2)\sum_{z\in A}|z>-2\sum_{z\in G}%
|z>+\sum_{z=0}^{N-1}|z>\right]  \frac{1}{\sqrt{2}}(|0>-|1>)
\end{align*}

\noindent Next we apply the QFT to try to find the period P, dropping
$\frac{1}{\sqrt{2}}(|0>-|1>)$.

\noindent The QFT applies the following action:%
\[
|z>\rightarrow\frac{1}{\sqrt{N}}\sum_{y=0}^{N-1}\omega^{zy}|y>
\]

\noindent to get%
\begin{align*}
|\phi &  >=\sum_{y=0}^{N-1}\left[  \frac{-2}{N}\sum_{z\in C}\omega^{zy}%
+\frac{1}{N}\sum_{z=0}^{N-1}\omega^{zy}\right]  |y>\\
&  =\sum_{y=0}^{N-1}\left[  \frac{-2}{N}\sum_{z\in A}\omega^{zy}-\frac{2}%
{N}\sum_{z\in G}\omega^{zy}+\frac{1}{N}\sum_{z=0}^{N-1}\omega^{zy}\right]
|y>\\
&  =\sum_{y=0}^{N-1}\left[  \frac{-2}{N}\omega^{sy}\sum_{r=0}^{M-1}%
\omega^{rPy}-\frac{2}{N}\sum_{z\in G}\omega^{zy}+\frac{1}{N}\sum_{z=0}%
^{N-1}\omega^{zy}\right]  |y>
\end{align*}

\subsection{QFT Analysis: $y=0$}

We have%

\begin{align*}
Amp(y)  &  =\frac{(-2)}{N}\sum_{z\in C}\omega^{zy}+\frac{1}{N}\sum_{z=0}%
^{N-1}\omega^{zy}\\
&  =\frac{-2T}{N}+\frac{N}{N}\\
&  =1-\frac{2T}{N}%
\end{align*}

\noindent Therefore, in the QFT case, we have $\Pr(y=0)$ is very close to $1$
and is given by
\[
\Pr(y=0)=\left(  1-\frac{2T}{N}\right)  ^{2}%
\]

\noindent whereas in the Amplified-QFT case we have $\Pr(y=0)$ is given by
\[
\Pr(y=0)=\cos^{2}2k\theta
\]

\subsection{QFT Analysis: $Py=0\operatorname{mod}N,y\neq0$}

Using previous lemmas we have
\begin{align*}
Amp(y)  &  =\ \frac{-2}{N}\omega^{sy}\sum_{r=0}^{M-1}\omega^{rPy}-\frac{2}%
{N}\sum_{z\in G}\omega^{zy}+\frac{1}{N}\sum_{z=0}^{N-1}\omega^{zy}\\
&  =\frac{-2}{N}\omega^{sy}\sum_{r=0}^{M-1}\omega^{rPy}-\frac{2}{N}\sum_{z\in
G}\omega^{zy}\\
&  =\frac{-2}{N}\left[  \omega^{sy}M+\sum_{z\in G}\omega^{zy}\right]
\end{align*}

\noindent Therefore in the QFT$\ $case we have $\Pr(y)$ is given by
\begin{align*}
\Pr(y)  &  =\frac{4}{N^{2}}\left\vert \omega^{sy}M+\sum_{z\in G}\omega
^{zy}\right\vert ^{2}\\
&  \leq\frac{4T^{2}}{N^{2}}%
\end{align*}

\noindent whereas in the Amplified-QFT case we have $\Pr(y)$ is given by \
\[
\Pr(y)=\tan^{2}\theta\sin^{2}2k\theta\left\vert \frac{M}{T}\omega^{sy}%
+\frac{1}{T}\sum_{z\in G}\omega^{zy}\right\vert ^{2}%
\]

\noindent We can determine how the increase in amplitude varies with the
number of iterations $k$ of the Grover step in the Amplified-QFT by examining
the ratio of the amplitudes of the Amplified-QFT case and QFT\ case. This
ratio is given exactly by
\begin{align*}
AmpRatio(y)  &  =\frac{\tan\theta\sin2k\theta\left[  \frac{M}{T}\omega
^{sy}+\frac{1}{T}\sum_{z\in G}\omega^{zy}\right]  }{\frac{-2}{N}\left[
\omega^{sy}M+\sum_{z\in G}\omega^{zy}\right]  }\\
&  =\frac{N}{-2T}\tan\theta\sin2k\theta
\end{align*}

\noindent We also have the following inequality for the $\Pr Ratio(y)$, the
increase in the probability due to amplification:%

\begin{align*}
\frac{N}{4T}(\frac{N}{N-T})  &  \geq\Pr Ratio(y)\geq\frac{N}{4T}(\frac{N}%
{N-T})(1-\frac{2T}{N})^{2}\\
&  \Longrightarrow\Pr Ratio(y)\approx\frac{N}{4T}%
\end{align*}

\subsection{QFT Analysis: $Py\neq0\operatorname{mod}N$}

We have
\begin{align*}
Amp(y)  &  =\ \frac{-2}{N}\omega^{sy}\sum_{r=0}^{M-1}\omega^{rPy}-\frac{2}%
{N}\sum_{z\in G}\omega^{zy}+\frac{1}{N}\sum_{z=0}^{N-1}\omega^{zy}\\
&  =\frac{-2}{N}\omega^{sy}\sum_{r=0}^{M-1}\omega^{rPy}-\frac{2}{N}\sum_{z\in
G}\omega^{zy}\\
&  =\frac{-2}{N}w^{sy}\left[  \frac{1-\omega^{MPy}}{1-\omega^{Py}}\right]
-\frac{2}{N}\sum_{z\in G}\omega^{zy}%
\end{align*}

\bigskip\noindent In the QFT\ case, we have $\Pr(y)$ is given by$\qquad
\qquad\qquad\qquad\qquad\qquad$%
\begin{align*}
\Pr(y)  &  =\frac{4}{N^{2}}\left\vert w^{sy}\left[  \frac{1-\omega^{MPy}%
}{1-\omega^{Py}}\right]  +\sum_{z\in G}\omega^{zy}\right\vert ^{2}\\
&  \leq\frac{4}{N^{2}}\left\vert w^{sy}\left\vert \frac{\sin(\pi MPy/N)}%
{\sin(\pi Py/N)}\right\vert +\sum_{z\in G}\omega^{zy}\right\vert ^{2}\\
&  \leq\frac{4}{N^{2}}\left[  M+L\right]  ^{2}\\
&  \leq\frac{4T^{2}}{N^{2}}%
\end{align*}

\noindent whereas in the Amplified-QFT case we have $\Pr(y)$ is given by
\[
\Pr(y)=\tan^{2}\theta\sin^{2}2k\theta\left\vert \frac{1}{T}\omega^{sy}\left[
\frac{1-\omega^{MPy}}{1-\omega^{Py}}\right]  +\frac{1}{T}\sum_{z\in G}%
\omega^{zy}\right\vert ^{2}%
\]

\noindent The ratio of the amplitudes of the Amplified-QFT case and QFT\ case
is given exactly by
\begin{align*}
AmpRatio(y)  &  =\frac{\tan\theta\sin2k\theta\left[  \frac{1}{T}\omega
^{sy}\left[  \frac{1-\omega^{MPy}}{1-\omega^{Py}}\right]  +\frac{1}{T}%
\sum_{z\in G}\omega^{zy}\right]  }{\frac{-2}{N}w^{sy}\left[  \frac
{1-\omega^{MPy}}{1-\omega^{Py}}\right]  -\frac{2}{N}\sum_{z\in G}\omega^{zy}%
}\\
&  =\frac{N}{-2T}\tan\theta\sin2k\theta
\end{align*}

\noindent We note that this ratio is the same as in that given in the previous
section and is independent of $y$. The variables in this ratio do not depend
in anyway on the QFT. We also have the following inequality for the $\Pr
Ratio(y)$, the increase in the probability due to amplification:%

\begin{align*}
\frac{N}{4T}(\frac{N}{N-T})  &  \geq\Pr Ratio(y)\geq\frac{N}{4T}(\frac{N}%
{N-T})(1-\frac{2T}{N})^{2}\\
&  \Longrightarrow\Pr Ratio(y)\approx\frac{N}{4T}%
\end{align*}

We notice that if in addition $MPy=0\operatorname{mod}N$ then%

\begin{align*}
\Pr(y)  &  =\frac{4}{N^{2}}\left\vert \sum_{z\in G}\omega^{zy}\right\vert
^{2}\\
&  \leq\frac{4L^{2}}{N^{2}}%
\end{align*}

\subsection{QFT Summary}

The probability $\Pr(y)$ is given exactly by%

\[
\left\{
\begin{tabular}
[c]{l}%
Case A: $\left(  1-\frac{2T}{N}\right)  ^{2}$\\
\\
Case B: $\frac{4}{N^{2}}\left\vert \omega^{sy}M+\sum_{z\in G}\omega
^{zy}\right\vert ^{2}$\\
\\
Case C: $\frac{4}{N^{2}}\left\vert w^{sy}\left[  \frac{1-\omega^{MPy}%
}{1-\omega^{Py}}\right]  +\sum_{z\in G}\omega^{zy}\right\vert ^{2}$\\
\\
Case D: $\frac{4}{N^{2}}\left\vert \sum_{z\in G}\omega^{zy}\right\vert ^{2}$%
\end{tabular}
\ \ \ \ \ \ \ \right\}
\]

\section{\textbf{Applying the QHS to the Oracle}}

In this section we provide the calculations for the probabilities of success
for the QHS algorithm.The QHS algorithm is a two register algorithm as follows
(see ref[13] for details). We begin with $|0>|0>$ where the first register is
$n$ qubits and the second register is $1$ qubit and apply the Hadamard
transform to the first register to get a uniform superposition state, followed
by the unitary transformation for the Oracle h to get:%

\[
|\psi>=\frac{1}{\sqrt{N}}%
{\displaystyle\sum\limits_{x=0}^{N-1}}
|x>|h(x)>
\]
Next we apply the QFT to the first register to get%

\begin{align*}
|\psi &  >=\frac{1}{\sqrt{N}}%
{\displaystyle\sum\limits_{x=0}^{N-1}}
\frac{1}{\sqrt{N}}\sum_{y=0}^{N-1}\omega^{xy}|y>|h(x)>\\
&  =%
{\displaystyle\sum\limits_{y=0}^{N-1}}
\frac{1}{N}\sum_{x=0}^{N-1}\omega^{xy}|y>|h(x)>\\
&  =%
{\displaystyle\sum\limits_{y=0}^{N-1}}
\frac{1}{N}|y>\sum_{x=0}^{N-1}\omega^{xy}|h(x)>\\
&  =\
{\displaystyle\sum\limits_{y=0}^{N-1}}
\frac{|||\Gamma(y)>||}{N}|y>\frac{|\Gamma(y)>}{|||\Gamma(y)>||}%
\end{align*}

\noindent where
\begin{align*}
|\Gamma(y)  &  >=\sum_{x=0}^{N-1}\omega^{xy}|h(x)>\\
&  =\sum_{x\in C}^{{}}\omega^{xy}|1>+\sum_{x\notin C}^{{}}\omega^{xy}|0>
\end{align*}

\noindent and where
\[
|||\Gamma(y)>||^{2}=\left\vert \sum_{x\in C}^{{}}\omega^{xy}\right\vert
^{2}+\left\vert \sum_{x\notin C}^{{}}\omega^{xy}\right\vert ^{2}%
\]

\bigskip\noindent Next we make a measurement to get $y$ and find that the
probability of this measurement is%
\begin{align*}
\Pr(y)  &  =\frac{|||\Gamma(y)>||^{2}}{N^{2}}\\
&  =\frac{1}{N^{2}}\left\vert \sum_{x\in C}^{{}}\omega^{xy}\right\vert
^{2}+\frac{1}{N^{2}}\left\vert \sum_{x\notin C}^{{}}\omega^{xy}\right\vert
^{2}%
\end{align*}

\noindent The state that we end up in is of the form
\[
|\phi>=|y>\frac{|\Gamma(y)>}{|||\Gamma(y)>||}%
\]

\noindent So now we are interested in the probability of measuring $\ y$ in
the usual cases in order to recover the period $P$.

\subsection{\bigskip QHS Analysis: $y=0$}

We have
\begin{align*}
\Pr(y)  &  =\frac{1}{N^{2}}\left\vert \sum_{x\in C}^{{}}\omega^{xy}\right\vert
^{2}+\frac{1}{N^{2}}\left\vert \sum_{x\notin C}^{{}}\omega^{xy}\right\vert
^{2}\\
&  =\frac{T^{2}}{N^{2}}+\frac{(N-T)^{2}}{N^{2}}=\frac{T^{2}+N^{2}-2NT+T^{2}%
}{N^{2}}\\
&  =1-\frac{2T(N-T)}{N^{2}}%
\end{align*}

\noindent whereas in the Amplified-QFT case we have $\Pr(y=0)$ is given by
\[
\Pr(y=0)=\cos^{2}2k\theta
\]

\subsection{QHS Analysis: $Py=0\operatorname{mod}N,y\neq0$}

We have%
\begin{align*}
\Pr(y)  &  =\frac{1}{N^{2}}\left\vert \sum_{x\in C}^{{}}\omega^{xy}\right\vert
^{2}+\frac{1}{N^{2}}\left\vert \sum_{x\notin C}^{{}}\omega^{xy}\right\vert
^{2}\\
&  =\frac{1}{N^{2}}\left\vert \omega^{sy}\sum_{r=0}^{M-1}\omega^{rPy}%
+\sum_{x\in G}^{{}}\omega^{xy}\right\vert ^{2}+\frac{1}{N^{2}}\left\vert
\sum_{x\notin C}^{{}}\omega^{xy}\right\vert ^{2}\\
&  =\frac{1}{N^{2}}\left\vert \omega^{sy}M+\sum_{x\in G}^{{}}\omega
^{xy}\right\vert ^{2}+\frac{1}{N^{2}}\left\vert -\omega^{sy}M-\sum_{x\in
G}^{{}}\omega^{xy}+\frac{1}{N}\sum_{x=0}^{N-1}\omega^{xy}\right\vert ^{2}\\
&  =\frac{2}{N^{2}}\left\vert \omega^{sy}M+\sum_{x\in G}^{{}}\omega
^{xy}\right\vert ^{2}\\
&  \leq\frac{2T^{2}}{N^{2}}%
\end{align*}

In the Amplified-QFT case we have $\Pr(y)$ is given by \
\[
\Pr(y)=\tan^{2}\theta\sin^{2}2k\theta\left\vert \frac{M}{T}\omega^{sy}%
+\frac{1}{T}\sum_{z\in G}\omega^{zy}\right\vert ^{2}%
\]

\bigskip

By comparing the results of the QHS and the Amplified-QFT algorithms we have
the following inequality for the $\Pr Ratio(y)=\Pr(y)_{Amplified-QFT}%
/\Pr(y)_{QHS}$, the increase in the probability due to amplification%

\begin{align*}
\Pr Ratio(y)  &  =\frac{\tan^{2}\theta\sin^{2}2k\theta\left\vert \frac{M}%
{T}\omega^{sy}+\frac{1}{T}\sum_{z\in G}\omega^{zy}\right\vert ^{2}}{\frac
{2}{N^{2}}\left\vert \omega^{sy}M+\sum_{x\in G}^{{}}\omega^{xy}\right\vert
^{2}}\\
&  =\frac{N^{2}}{2T^{2}}\tan^{2}\theta\sin^{2}2k\theta
\end{align*}

which gives%

\begin{align*}
\frac{N}{2T}(\frac{N}{N-T})  &  \geq\Pr Ratio(y)\geq\frac{N}{2T}(\frac{N}%
{N-T})(1-\frac{2T}{N})^{2}\\
&  \Longrightarrow\Pr Ratio(y)\approx\frac{N}{2T}%
\end{align*}

\subsection{QHS Analysis: $Py\neq0\operatorname{mod}N$}

We have%
\begin{align*}
\Pr(y)  &  =\frac{1}{N^{2}}\left\vert \omega^{sy}M+\sum_{x\in G}^{{}}%
\omega^{xy}\right\vert ^{2}+\frac{1}{N^{2}}\left\vert -\omega^{sy}M-\sum_{x\in
G}^{{}}\omega^{xy}+\frac{1}{N}\sum_{x=0}^{N-1}\omega^{xy}\right\vert ^{2}\\
&  =\frac{2}{N^{2}}\left\vert \omega^{sy}\sum_{r=0}^{M-1}\omega^{rPy}%
+\sum_{x\in G}^{{}}\omega^{xy}\right\vert ^{2}\\
&  =\frac{2}{N^{2}}\left\vert \omega^{sy}\ \left[  \frac{1-\omega^{MPy}%
}{1-\omega^{Py}}\right]  +\sum_{x\in G}^{{}}\omega^{xy}\right\vert ^{2}\\
&  \leq\frac{2}{N^{2}}\left\vert \omega^{sy}\ \left\vert \frac{\sin(\pi
MPy/N)}{\sin(\pi Py/N)}\right\vert +\sum_{x\in G}^{{}}\omega^{xy}\right\vert
^{2}\\
&  \leq\frac{2}{N^{2}}\left[  M+L\right]  ^{2}\\
&  \leq\frac{2T^{2}}{N^{2}}%
\end{align*}

\bigskip In the Amplified-QFT case we have $\Pr(y)$ is given by
\[
\Pr(y)=\tan^{2}\theta\sin^{2}2k\theta\left\vert \frac{1}{T}\omega^{sy}\left[
\frac{1-\omega^{MPy}}{1-\omega^{Py}}\right]  +\frac{1}{T}\sum_{z\in G}%
\omega^{zy}\right\vert ^{2}%
\]

By comparing the results of the QHS and the Amplified-QFT algorithms we have
the following inequality for the $\Pr Ratio(y)=\Pr(y)_{Amplified-QFT}%
/\Pr(y)_{QHS}$, the increase in the probability due to amplification%

\begin{align*}
\Pr Ratio(y)  &  =\frac{\tan^{2}\theta\sin^{2}2k\theta\left\vert \frac{1}%
{T}\omega^{sy}\left[  \frac{1-\omega^{MPy}}{1-\omega^{Py}}\right]  +\frac
{1}{T}\sum_{z\in G}\omega^{zy}\right\vert ^{2}}{\frac{2}{N^{2}}\left\vert
\omega^{sy}\ \left[  \frac{1-\omega^{MPy}}{1-\omega^{Py}}\right]  +\sum_{x\in
G}^{{}}\omega^{xy}\right\vert ^{2}}\\
&  =\frac{N^{2}}{2T^{2}}\tan^{2}\theta\sin^{2}2k\theta
\end{align*}

which gives%

\begin{align*}
\frac{N}{2T}(\frac{N}{N-T})  &  \geq\Pr Ratio(y)\geq\frac{N}{2T}(\frac{N}%
{N-T})(1-\frac{2T}{N})^{2}\\
&  \Longrightarrow\Pr Ratio(y)\approx\frac{N}{2T}%
\end{align*}

We notice that if in addition $MPy=0\operatorname{mod}N$ then%

\begin{align*}
\Pr(y)  &  =\frac{2}{N^{2}}\left\vert \sum_{z\in G}\omega^{zy}\right\vert
^{2}\\
&  \leq\frac{2L^{2}}{N^{2}}%
\end{align*}

\subsection{QHS Summary}

The $\Pr(y)$ in the QHS case is:%

\[
\left\{
\begin{tabular}
[c]{l}%
Case A: $1-\frac{2T(N-T)}{N^{2}}$\\
\\
Case B: $\frac{2}{N^{2}}\left\vert \omega^{sy}M+\sum_{z\in G}\omega
^{zy}\right\vert ^{2}$\\
\\
Case C: $\frac{2}{N^{2}}\left\vert w^{sy}\left[  \frac{1-\omega^{MPy}%
}{1-\omega^{Py}}\right]  +\sum_{z\in G}\omega^{zy}\right\vert ^{2}$\\
\\
Case D: $\frac{2}{N^{2}}\left\vert \sum_{z\in G}\omega^{zy}\right\vert ^{2}$%
\end{tabular}
\ \ \ \ \ \ \ \ \right\}
\]

\newpage\renewcommand{\thechapter}{4}

\chapter{An Uncertainty Principle for the Amplified-QFT}

In this chapter we show there is an uncertainty principle for the
Amplified-QFT\ algorithm. This result provides a relationship between the
support of the state vector after Grover's algorithm has been run and the
support of the state vector after the QFT has been run. This result uses the
results of Donoho and Stark found in ref[15], ref[16] and ref[17]. First we
state and prove the Donoho and Stark lemma 1 from their paper which we will
use to good effect for the Amplified-QFT\ case.

\begin{lemma}
If $\{x_{j}\}$ $j=0,1,...,N-1$ has $T$ nonzero elements, then $\{y_{k}\}$
$k=0,1,...,N-1$ cannot have $T$ consecutive zeros, where $\{y_{k}\}$ is the
discrete Fourier transform of $\{x_{j}\}.$
\end{lemma}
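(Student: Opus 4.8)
The plan is to argue by contradiction, reducing the statement to the nonsingularity of a Vandermonde matrix built from distinct $N$-th roots of unity. Dropping the harmless normalization $1/\sqrt N$ (which does not affect which entries vanish), write the discrete Fourier transform as $y_k=\sum_{j=0}^{N-1}x_j\,\omega^{jk}$ with $\omega=e^{-2\pi i/N}$. Suppose $\{x_j\}$ has exactly $T$ nonzero entries, located at distinct positions $j_1,j_2,\ldots,j_T\in\{0,1,\ldots,N-1\}$, and suppose toward a contradiction that $\{y_k\}$ has $T$ consecutive zeros $y_{k_0},y_{k_0+1},\ldots,y_{k_0+T-1}$ (indices read mod $N$).

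First I would record the $T$ linear relations coming from these vanishing coefficients. For each $m=0,1,\ldots,T-1$,
\[
0=y_{k_0+m}=\sum_{l=1}^{T}x_{j_l}\,\omega^{j_l(k_0+m)}=\sum_{l=1}^{T}\left(x_{j_l}\,\omega^{j_l k_0}\right)\left(\omega^{j_l}\right)^{m}.
\]
Setting $c_l=x_{j_l}\,\omega^{j_l k_0}$ and $u_l=\omega^{j_l}$, this is the homogeneous system $\sum_{l=1}^{T}c_l\,u_l^{\,m}=0$ for $m=0,\ldots,T-1$, i.e. $V\mathbf{c}=\mathbf{0}$, where $V=\left(u_l^{\,m}\right)_{m,l}$ is a $T\times T$ Vandermonde matrix in the nodes $u_1,\ldots,u_T$.

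Next I would observe that the nodes are pairwise distinct: since $j_1,\ldots,j_T$ are distinct residues mod $N$, the numbers $u_l=e^{-2\pi i j_l/N}$ are distinct $N$-th roots of unity. Hence $\det V=\prod_{1\le l<l'\le T}(u_{l'}-u_l)\neq 0$, so $V$ is invertible and the only solution is $\mathbf{c}=\mathbf{0}$. But $c_l=x_{j_l}\,\omega^{j_l k_0}\neq 0$ because $x_{j_l}\neq 0$ and $\omega^{j_l k_0}$ is a root of unity, contradicting $\mathbf{c}=\mathbf{0}$; the lemma follows. If one only assumes $\{x_j\}$ has \emph{at most} $T$ nonzero entries, the same argument applies verbatim to the $T'\le T$ actual nonzero positions, using any $T'$ of the vanishing coefficients to form a $T'\times T'$ Vandermonde subsystem, so the conclusion is unchanged.

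The argument is essentially routine, and I expect the only genuine obstacle to be the bookkeeping of the cyclic indices together with pinning down exactly where distinctness is used. The $T$ consecutive indices $k_0,k_0+1,\ldots,k_0+T-1$ may wrap around mod $N$, but this causes no difficulty, since $\omega^{j(k_0+m)}$ depends on $k_0+m$ only through its residue mod $N$ and the factorization $\omega^{j_l(k_0+m)}=\omega^{j_l k_0}(\omega^{j_l})^m$ stays valid throughout. The one substantive step is the nonvanishing of $\det V$, which hinges on the distinctness of the roots of unity $\omega^{j_l}$ — this is precisely where the hypothesis that the $j_l$ are distinct positions in $\{0,\ldots,N-1\}$ enters.
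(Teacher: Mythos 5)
Your proposal is correct and follows essentially the same route as the paper's proof: assume $T$ consecutive vanishing Fourier coefficients, restrict the sum to the $T$ nonzero positions, factor out the common phases $\omega^{j_l k_0}$ (the paper does this via a diagonal matrix $P$, you do it by absorbing them into the coefficients $c_l$), and derive a contradiction from the nonsingularity of the resulting Vandermonde matrix in the distinct roots of unity $\omega^{j_l}$. Your write-up is if anything slightly more careful than the paper's, since you state explicitly where distinctness of the nodes is used and handle the mod-$N$ wraparound inside the proof rather than in a remark afterward.
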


\begin{proof}
Define%
\[
y_{k}=\frac{1}{\sqrt{N}}\sum_{j=0}^{N-1}x_{j}w^{jk}%
\]
where $w=\exp(-2\pi i/N).$ Suppose there are $T$ consecutive positions
$\{y_{t+r}\}$ $r=0,1,...,T-1$ which are all zero. Then we have a system of
$\ T$ equations each of which are zero as follows:%
\[
y_{t+r}=\frac{1}{\sqrt{N}}\sum_{j=0}^{N-1}x_{j}w^{j(t+r)}=0,r=0,...,T-1
\]
However there are only $T$ values of $x_{j}$ which are nonzero. Let us call
these positions $S=\{s_{j}\}$ $j=0,1,...,T-1.$ Then we can rewrite our system
of equations as follows:%
\[
y_{t+r}=\frac{1}{\sqrt{N}}\sum_{j=0}^{T-1}x_{s_{j}}w^{s_{j}(t+r)}=0
\]
Then we have a system of $T$ equations in $T$ unknowns equal to zero.
\[
Zx=0
\]
However the vector $x$ contains elements $x_{s_{j}}$ which are all nonzero.
Therefore the matrix $Z$ must be singular. However we will show that $Z$ is
non-singular, thereby showing that we cannot have such a system of equations
and cannot have $T$ consecutive $y_{k}=0.$ Let us take a closer look at $Z.$%
\[
Z=\left[
\begin{array}
[c]{cccc}%
w^{s_{0}t} & w^{s_{1}t} & ... & w^{s_{T-1}t}\\
w^{s_{0}(t+1)} & w^{s_{1}(t+1)} & ... & w^{s_{T-1}(t+1)}\\
... & ... & ... & ...\\
w^{s_{0}(t+T-1)} & w^{s_{1}(t+T-1)} & ... & w^{s_{T-1}(t+T-1)}%
\end{array}
\right]
\]
We can consider an equivalent set of equations%
\[
ZPP^{-1}x=0
\]
where $P^{-1}x$ has all nonzero elements%
\[
P=\left[
\begin{array}
[c]{cccc}%
w^{-s_{0}t} & 0 & ... & 0\\
0 & w^{-s_{1}t} & ... & 0\\
... & ... & ... & ...\\
0 & 0 & ... & w^{-s_{T-1}t}%
\end{array}
\right]
\]
where $\ ZP$ is given by%
\[
Z^{\prime}=\left[
\begin{array}
[c]{cccc}%
1 & 1 & ... & 1\\
w^{s_{0}} & w^{s_{1}} & ... & w^{s_{T-1}}\\
... & ... & ... & ...\\
w^{s_{0}(T-1)} & w^{s_{1}(T-1)} & ... & w^{s_{T-1}(T-1)}%
\end{array}
\right]
\]
which can be rewritten as a Vandermonde matrix%
\[
V=\left[
\begin{array}
[c]{cccc}%
1 & 1 & ... & 1\\
\alpha_{0} & \alpha_{1} & ... & \alpha_{T-1}\\
... & ... & ... & ...\\
\alpha_{0}^{T-1} & \alpha_{1}^{T-1} & ... & \alpha_{T-1}^{T-1}%
\end{array}
\right]
\]
which is known to be nonsingular.
\end{proof}

Note that in the proof we can consider the positions to be taken
$\operatorname{mod}N$ so that the $\{x_{j}\}$ and $\{y_{k}\}$ can be viewed as
being on a circle. This takes into account that we cannot have $T$ consecutive
zeros wrapping around the endpoints.

If we cannot have $T$ consecutive zeros in $\{y_{k}\}$ then the number of
nonzero elements in $\{y_{k}\}$ must be at least $N/T$. For example, we could
have $T-1$ zeros followed by a single nonzero element in every $T$ long block
of $\{y_{k}\}$. Since there are $N/T$ such blocks we have the following result:

\begin{theorem}
(Donoho and Stark)\ Let $\{x_{j}\}$ $j=0,1...,N-1$ have $N_{x}$ nonzero
elements. Let $\{y_{k}\}$ $k=0,1,...,N-1$ be the Fourier transform of
$\{x_{j}\}$ with $N_{y}$ nonzero elements. Then
\[
N_{x}N_{y}\geq N
\]

\end{theorem}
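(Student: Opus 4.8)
The plan is to apply the preceding Lemma with the choice $T=N_x$ and then translate the conclusion ``$\{y_k\}$ has no run of $N_x$ consecutive zeros'' into the lower bound $N_y\geq N/N_x$ by a counting argument on the circle of $N$ indices. Rearranging that bound gives $N_xN_y\geq N$ immediately.

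First I would invoke the Lemma, together with the remark that the indices may be read modulo $N$: since $\{x_j\}$ has exactly $N_x$ nonzero entries, its discrete Fourier transform $\{y_k\}$ cannot contain $N_x$ consecutive zeros, where runs are allowed to wrap cyclically past the endpoints $N-1$ and $0$. This is the only input from the Lemma, and everything else is elementary counting.

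The quick version of the counting step is the block partition suggested in the text: partition $\{0,1,\dots,N-1\}$ into $N/N_x$ consecutive blocks of length $N_x$; each block must contain a nonzero $y_k$, since otherwise it would be a run of $N_x$ consecutive zeros, so $N_y\geq N/N_x$ and hence $N_xN_y\geq N$. To avoid tacitly assuming $N_x\mid N$, I would instead argue by contradiction using gaps. Suppose $N_y<N/N_x$. The $N_y$ nonzero positions partition the cyclic index set into $N_y$ maximal gaps of consecutive zeros, and these gaps contain $N-N_y$ zeros in total. By pigeonhole the largest gap has length at least $(N-N_y)/N_y=N/N_y-1$. Since $N_y<N/N_x$ gives $N/N_y>N_x$, this length exceeds $N_x-1$; being an integer, it is therefore at least $N_x$, producing a run of $N_x$ consecutive zeros and contradicting the Lemma. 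Hence $N_y\geq N/N_x$, which is exactly the claimed $N_xN_y\geq N$.

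The main obstacle is boundary bookkeeping rather than any deep idea: one must ensure the cyclic wrap-around is handled so that a long run of zeros straddling the endpoints is not overlooked, and one must avoid silently assuming that $N_x$ divides $N$. The gap/pigeonhole phrasing above is what I expect to make both points clean, since it operates directly on the circle of $N$ indices and requires no divisibility hypothesis; the block-partition version is worth stating first as the intuitive picture, with the gap argument supplied as the rigorous replacement.
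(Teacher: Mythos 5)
Your proposal is correct and follows essentially the same route as the paper: invoke the preceding lemma (with the mod-$N$ wraparound remark) to rule out any run of $N_x$ consecutive zeros in $\{y_k\}$, then count to get $N_y\geq N/N_x$. Your gap/pigeonhole refinement is a welcome tightening of the paper's informal block-partition count, which tacitly reads as if $N_x$ divides $N$, but it is the same underlying argument rather than a different proof.
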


Next we apply this to the quantum case. Consider the following state where
$\{p_{x}\}$ is a probability distribution%
\[
|\psi_{x}>=\sum_{x=0}^{N-1}\sqrt{p_{x}}|x>
\]

and apply the QFT which maps%

\[
|x>\rightarrow\frac{1}{\sqrt{N}}\sum_{y=0}^{N-1}w^{xy}|y>
\]

to get the state%

\[
|\psi_{y}>=\frac{1}{\sqrt{N}}\sum_{y=0}^{N-1}\sum_{x=0}^{N-1}\sqrt{p_{x}%
}w^{xy}|y>
\]

Suppose we have $T$ of the $\sqrt{p_{x}}$ amplitudes of $|\psi_{x}>$ nonzero
at positions $s_{j},$ $j=0,1,..,T-1$ and suppose we have $T$ consecutive
amplitudes of $|\psi_{y}>$ equal to zero, we have a system of $T$ equations as
in the lemma%

\[
y_{t+r}=\frac{1}{\sqrt{N}}\sum_{j=0}^{T-1}\sqrt{p_{s_{j}}}w^{s_{j}%
(t+r)}=0,r=0,1,...,T-1
\]

We see we can apply the lemma so that there are not $T$ consecutive amplitudes
of $|y>$ in the state $|\psi_{y}>$ that are all zero. Therefore if $N_{x}$ is
the number of nonzero amplitudes of $|\psi_{x}>$ and $N_{y}$ is the number of
nonzero amplitudes of $|\psi_{y}>$ then we have%
\[
N_{x}N_{y}\geq N
\]

Next we apply this result to the Amplified-QFT algorithm. We recall that we
have a set of labels $L=\{0,1,...,N-1\}$ and an oracle $f:L\rightarrow\{0,1\}$
which is $1$ on a periodic subset of labels $A$ of size $M$ and $0$ elsewhere.
We apply Grover's algorithm without measurement to arrive at the following
state (See Chapter 2)\ where $k=\left\lfloor \frac{\pi}{4\sin^{-1}(\sqrt
{M/N})}\right\rfloor $ is the number of steps of the Grover iteration:%
\[
|\psi_{x}>=a_{k}\sum_{x\in A}|x>+b_{k}\sum_{x\notin A}|x>
\]

\noindent where%

\[
a_{k}=\frac{1}{\sqrt{M}}\sin(2k+1)\theta,b_{k}=\frac{1}{\sqrt{N-M}}%
\cos(2k+1)\theta
\]

\noindent are the appropriate amplitudes of the states and where
\[
\sin\theta=\sqrt{M/N},\cos\theta=\sqrt{1-M/N}%
\]

We note that the number of non-zero amplitudes is $N$ because Grover's
algorithm puts nearly all of the probability on the set $A$ but leaves some
residual probability on $\overline{A}.$ In fact%

\begin{align*}
p(A)  &  \geq1-\frac{M}{N}\text{ and}\\
p(\overline{A})  &  \leq\frac{M}{N}%
\end{align*}

However we can still produce an uncertainty relation. Next we apply the QFT to
$|\psi_{x}>$ to obtain the state $|\psi_{y}>$%
\begin{align*}
|\psi_{y}  &  >=\sum_{y=0}^{N-1}\left[  \frac{a_{k}}{\sqrt{N}}\sum_{x\in
A}\omega^{xy}+\frac{b_{k}}{\sqrt{N}}\sum_{x\notin A}\omega^{xy}\right]  |y>\\
&  =\sum_{y=0}^{N-1}\left[  \frac{a_{k}-b_{k}}{\sqrt{N}}\sum_{x\in A}%
\omega^{xy}+\frac{b_{k}}{\sqrt{N}}\sum_{x=0}^{N-1}\omega^{xy}\right]  |y>
\end{align*}

Now, for $y\neq0$ we have
\[
\sum_{x=0}^{N-1}\omega^{xy}=0
\]

and the amplitude for $y\neq0$ is given by an $M$ term sum%

\[
\frac{a_{k}-b_{k}}{\sqrt{N}}\sum_{x\in A}\omega^{xy}%
\]

where $a_{k}-b_{k}\neq0.$

Consider the following system of $M$ equations%

\[
y_{t+r}=\frac{1}{\sqrt{N}}\sum_{j=0}^{M-1}\sqrt{p_{x_{j}}}w^{x_{j}%
(t+r)}=0,r=0,1,...,M-1
\]

where $A=\{x_{0},x_{1},...,x_{M-1}\}$ and $\sqrt{p_{x_{j}}}=a_{k}-b_{k}.$

We see we can apply the lemma so that there are not $M$ consecutive amplitudes
of $|y>$ in the state $|\psi_{y}>$ that are all zero.Therefore the number of
nonzero amplitudes of $|\psi_{y}>$ must be at least $N/M.$Therefore if $|A|=M$
and $N_{y}$ is the number of nonzero amplitudes of $|\psi_{y}>$ then we have%
\[
MN_{y}\geq N
\]

where $M$ is the number of the largest nonzero amplitudes of $|\psi_{x}>.$ If
Grover's algorithm worked perfectly $M$ would be exactly the number of nonzero
amplitudes of $|\psi_{x}>.$However since it works imperfectly, $M$ is the
number of elements whose probabilities are $>1/N$

\bigskip\newpage\renewcommand{\thechapter}{5}

\chapter{The Amplified-Haar Wavelet Transform}

\section{\textbf{Introduction}}

In the Deutsch-Jozsa problem we are given a function which is either constant
(all zeros or all ones) or balanced (is zeros half the time and ones half the
time). This problem is easily solved using the Hadamard transform followed by
a measurement. In this chapter we generalize this problem to consider the
Local Constant or Balanced Signal Decision Problem and we generalize the idea
of the amplified quantum Fourier transform in ref[14] to consider another
amplified quantum transform - the amplified 1-d Haar wavelet transform (ref 44).

Let $L=\{0,1,...,N-1\}$ be a set of $N=2^{n}$ labels and let $2M<<N$. Let%
\[
A=%
{\displaystyle\bigcup\limits_{i\in E}}
A_{i}%
\]
be a subset of $L$ of size $2M$, where $E$ is any set of $M$ even labels from
$L$, and $A_{i}=\{i,i+1\},i\in E$ are sets of consecutive labels and
$A_{i}\cap A_{j}=\phi$. Let%
\[
f:L\rightarrow\{0,1\}
\]
be an oracle which is $1$ on $A$ and $0$ elsewhere. Let
\[
S:L\rightarrow\{0,1\}
\]
be a signal. We wish to solve the following problem:

the \textbf{Local Constant or Balanced Signal Decision Problem}. \textit{We
wish to determine which of the following two possibilities are the case:}

\textit{a) On each }$A_{i}$\textit{ we can have\ }$S(i)=0$\textit{ and
}$S(i+1)=1$\textit{ or }$S(i)=1$\textit{ and }$S(i+1)=0$\textit{ - this
corresponds to the signal }$S$\textit{ being balanced on each }$A_{i}$

\textit{or}

\textit{b) On each }$A_{i}$\textit{ , }$S(i)=0$\textit{ and }$S(i+1)=0$%
\textit{ or }$S(i)=1$\textit{ and }$S(i+1)=1$\textit{ - this corresponds to
the signal }$S$\textit{ being constant on each }$A_{i}$\textit{.}

\textit{The value of the signal }$S$\textit{ on }$L\backslash A$\textit{ can
be any value in }$\{0,1\}.$

To solve the Local Constant or Balanced Signal Decision Problem, we first run
Grover's algorithm to amplify the amplitudes on the set $A$ by using the
Oracle $f$. We then put the signal $S$ into the amplitudes as $+/-1$ values
and then run the Haar Wavelet Transform on the resulting state. In the case a)
above, we will find that most of the probability lies in the following
interval of labels $[N/2,N-1]$. In case b) above, we will find that most of
the probability lies in the following interval $[0,N/2-1]$. Therefore if we
make a measurement with respect to the standard basis we can verify which
interval the measurement lies and discover whether a) is the case or b) is the
case. This algorithm works because of the special construction of the Haar
matrix, which computes sums and differences between successive values on the
even cut and puts the result in the upper half interval.

The Haar wavelet transform $W$ of dimension $2^{n}$ by $2^{n}$ has the
following form:

$W=W_{n}W_{n-1}...W_{1}$ where each $W_{k}$ is defined as

$W_{k}=\left[
\begin{array}
[c]{cc}%
H_{k} & 0\\
0 & I_{k}%
\end{array}
\right]  $where $H_{k}$ is of dimension $2^{n-k+1}$ by $2^{n-k+1}$and $I_{k}$
is the identity matrix of dimension $2^{n}-2^{n-k+1}$ by $2^{n}-2^{n-k+1}$ and
$O$ is the all zero matrix of the appropriate dimension,

where

$H_{k}=\frac{1}{\sqrt{2}}\left[
\begin{array}
[c]{cccccc}%
1 & 1 & 0 & 0 & ... & 0\\
0 & 0 & 1 & 1 & . & .\\
. & . & . & . & 0 & 0\\
0 & 0 & ... & 0 & 1 & 1\\
1 & -1 & 0 & 0 & ... & 0\\
0 & 0 & 1 & -1 & . & .\\
. & . & . & . & 0 & 0\\
0 & 0 & ... & 0 & 1 & -1
\end{array}
\right]  $

For example, consider $W$ of dimension $\ 2^{2}$ by $2^{2}$ then we have

$W=\left[
\begin{array}
[c]{cccc}%
\frac{1}{\sqrt{2}} & \frac{1}{\sqrt{2}} & 0 & 0\\
\frac{1}{\sqrt{2}} & \frac{-1}{\sqrt{2}} & 0 & 0\\
0 & 0 & 1 & 0\\
0 & 0 & 0 & 1
\end{array}
\right]  \frac{1}{\sqrt{2}}\left[
\begin{array}
[c]{cccc}%
1 & 1 & 0 & 0\\
0 & 0 & 1 & 1\\
1 & -1 & 0 & 0\\
0 & 0 & 1 & -1
\end{array}
\right]  $

Consider the example of $Wx$ where $x=\frac{1}{2}[1,1,1,1]^{T}$

We have

$Wx=\left[
\begin{array}
[c]{cccc}%
\frac{1}{\sqrt{2}} & \frac{1}{\sqrt{2}} & 0 & 0\\
\frac{1}{\sqrt{2}} & \frac{-1}{\sqrt{2}} & 0 & 0\\
0 & 0 & 1 & 0\\
0 & 0 & 0 & 1
\end{array}
\right]  \frac{1}{\sqrt{2}}\left[
\begin{array}
[c]{cccc}%
1 & 1 & 0 & 0\\
0 & 0 & 1 & 1\\
1 & -1 & 0 & 0\\
0 & 0 & 1 & -1
\end{array}
\right]  \frac{1}{2}\left[
\begin{array}
[c]{c}%
1\\
1\\
1\\
1
\end{array}
\right]  $

$=\frac{1}{2\sqrt{2}}\left[
\begin{array}
[c]{cccc}%
\frac{1}{\sqrt{2}} & \frac{1}{\sqrt{2}} & 0 & 0\\
\frac{1}{\sqrt{2}} & \frac{-1}{\sqrt{2}} & 0 & 0\\
0 & 0 & 1 & 0\\
0 & 0 & 0 & 1
\end{array}
\right]  \left[
\begin{array}
[c]{c}%
2\\
2\\
0\\
0
\end{array}
\right]  =\left[
\begin{array}
[c]{c}%
1\\
0\\
0\\
0
\end{array}
\right]  $

where the result is in the upper half as expected.

Next consider the example of $Wx$ where $x=\frac{1}{2}[1,-1,1,-1]^{T}$

We have

$Wx=\left[
\begin{array}
[c]{cccc}%
\frac{1}{\sqrt{2}} & \frac{1}{\sqrt{2}} & 0 & 0\\
\frac{1}{\sqrt{2}} & \frac{-1}{\sqrt{2}} & 0 & 0\\
0 & 0 & 1 & 0\\
0 & 0 & 0 & 1
\end{array}
\right]  \frac{1}{\sqrt{2}}\left[
\begin{array}
[c]{cccc}%
1 & 1 & 0 & 0\\
0 & 0 & 1 & 1\\
1 & -1 & 0 & 0\\
0 & 0 & 1 & -1
\end{array}
\right]  \frac{1}{2}\left[
\begin{array}
[c]{c}%
1\\
-1\\
1\\
-1
\end{array}
\right]  $

$=\frac{1}{2\sqrt{2}}\left[
\begin{array}
[c]{cccc}%
\frac{1}{\sqrt{2}} & \frac{1}{\sqrt{2}} & 0 & 0\\
\frac{1}{\sqrt{2}} & \frac{-1}{\sqrt{2}} & 0 & 0\\
0 & 0 & 1 & 0\\
0 & 0 & 0 & 1
\end{array}
\right]  \left[
\begin{array}
[c]{c}%
0\\
0\\
2\\
2
\end{array}
\right]  =\frac{1}{\sqrt{2}}\left[
\begin{array}
[c]{c}%
0\\
0\\
1\\
1
\end{array}
\right]  $

where the results are in the lower half as expected.

\section{\textbf{The Local Constant or Balanced Signal Decision
Problem-Analysis}}

The Amplified-Haar algorithm which solves the Local Constant or Balanced
Signal Decision Problem is defined by the following four step procedure.

\emph{Step 1:} Apply all of Grover's algorithm in its entirety except for the
last measurement step to the starting state $|0>$. The resulting state is
given by $|\psi_{k}>$ where $k=\left\lfloor \frac{\pi}{4\sin^{-1}(\sqrt
{2M/N})}\right\rfloor $:%
\[
|\psi_{k}>=a_{k}\sum_{z\in A}|z>+b_{k}\sum_{z\notin A}|z>
\]

\noindent where%

\[
a_{k}=\frac{1}{\sqrt{2M}}\sin(2k+1)\theta,b_{k}=\frac{1}{\sqrt{N-2M}}%
\cos(2k+1)\theta
\]

\noindent are the appropriate amplitudes of the states and where
\[
\sin\theta=\sqrt{2M/N},\cos\theta=\sqrt{1-2M/N}%
\]

.

\emph{Step 2: }We apply the signal $S$ to an auxiliary qubit $\frac{1}%
{\sqrt{2}}(|0>-|1>)$ added onto $|\psi_{k}>$ to put the signal into the
amplitudes of the state $|\psi_{k}>$ to get the state $|\lambda_{k}>$ where%

\[
|\lambda_{k}>=a_{k}\sum_{z\in A}(-1)^{S(z)}|z>+b_{k}\sum_{z\notin
A}(-1)^{S(z)}|z>
\]

\emph{Step 3:} We apply the Haar wavelet transform $W$ to the resulting state
$|\lambda_{k}>$.

\emph{Step 4: }We make a measurement $z$ and note which range the measured
value is in to determine the solution of the problem. If $z$ is in $[0,N/2-1]$
then the signal was constant on each $A_{i}$ otherwise the signal was balanced
on each $A_{i}.$

\noindent Now we have,

$k=\left\lfloor \frac{\pi}{4\theta}\right\rfloor $ $\Longrightarrow$
$\frac{\pi}{4\theta}-1\leq k\leq\frac{\pi}{4\theta}$ $\Longrightarrow$
$\frac{\pi}{2}-\theta\leq(2k+1)\theta\leq\frac{\pi}{2}+\theta$

$\Longrightarrow\sin\theta=\cos(\frac{\pi}{2}-\theta)\geq\cos(2k+1)\theta
\geq\cos(\frac{\pi}{2}+\theta)=-\sin\theta$

\noindent Notice that the total probability of the $N-2M$ labels that are not
in $A$ is%

\begin{align*}
(N-2M)(\frac{1}{\sqrt{N-2M}}\cos(2k+1)\theta)^{2}  &  =\cos^{2}(2k+1)\theta\\
&  \Longrightarrow\cos^{2}(2k+1)\theta\leq\sin^{2}\theta=\sin^{2}(\sin
^{-1}(\sqrt{\frac{2M}{N}}))\\
&  \Longrightarrow\cos^{2}(2k+1)\theta\leq\frac{2M}{N}%
\end{align*}
whereas the total probability of the $2M$ labels in $A$ is
\begin{align*}
2M(\frac{1}{\sqrt{2M}}\sin(2k+1)\theta)^{2}  &  =\sin^{2}(2k+1)\theta
=1-\cos^{2}(2k+1)\theta\\
&  \Longrightarrow\sin^{2}(2k+1)\theta\geq1-\frac{2M}{N}%
\end{align*}

We notice that in Step 3, after we have applied the first orthogonal transform
$W_{1}$ of $W$ we have essentially solved our problem. If the signal is
constant on $A$ then the total probability of the labels in $[0,N/2-1]$ is at
least $\frac{1}{2}M(a_{k}+a_{k})^{2}=2M(\frac{1}{\sqrt{2M}}\sin(2k+1)\theta
)^{2}=\sin^{2}(2k+1)\theta\geq1-\frac{2M}{N}$. So we have moved most of the
probability of the set $A$ to the lower half range of labels. Similarly if the
signal is balanced on $A$ then this probability would be moved to the upper
half range of labels $[N/2,N-1]$. The successive remaining orthogonal
transforms $W_{n}...W_{2}$ do not move the probabilities outside of these
ranges. So we see we need only apply $W_{1}$ to solve this problem and make a
measurement. The work factor of this algorithm is dominated by the Grover step
and which is $O(\sqrt{\frac{N}{2M}}).$

A classical solution to this problem would be to randomly choose labels $x$ in
the range $[0,N-1]$ and to verify that $f(x)=1$. If $x$ is even then we check
the values of $s(x)$ and $s(x+1)$ to see what kind of signal we have. If $x$
is odd, we check $s(x)$ and $s(x-1)$. This procedure has workfactor $O(N/2M)$
showing that the amplified-Haar wavelet transform is quadratically faster.

In order to consider a quantum algorithm that would solve the Local Constant
or Balanced Signal Problem without using amplification, we need to consider
the values of the signal on the set $L\backslash A$ \ We can consider the
following problem where we want to find out which situation is the case:

a) The set $A$ is constant~and the set $L\backslash A$ is balanced

or

b) The set $A$ is balanced and the set $L\backslash A$ is constant.

Suppose we perform the Haar transform on the signal $S$ corresponding to these
situations and make a measurement. Regardless of which case we are in, if the
measured value is in the interval $[0,N/2-1]$ we have measured a value due to
the probability of the constant part of the signal, whereas if we measure a
value in the range $[N/2,N-1]$ we have measured a value due to the probability
of the balanced part of the signal. We are performing sampling of a
probability distribution and we need to determine which case we are in. If we
repeat this process we can estimate the means of the Binomial probability
distributions we are sampling from.

What are the means and variances of the Binomial distributions we are sampling from?

Letting $p$ be probability of the constant signal and $q$ be the probability
of the balanced signal we have:

case a) $p_{a}=M/N$ and $q_{a}=1-M/N$ with variance $\sigma^{2}=Np_{a}%
q_{a}=M(1-M/N)$

and in

case b) $p_{b}=1-M/N$ and $q_{b}=M/N$ with variance $\sigma^{2}=Np_{b}%
q_{b}=M(1-M/N)$ (note the variances are the same)

Suppose we make a series of $n$ measurements with $c$ measurements from the
constant interval and $b$ measurements from the balanced interval. Then we can
get an estimator for $p$ which we will denote $\widehat{p}=c/n$ which is
normally distributed as $Normal(p_{a},\sigma^{2}/n)$ in case a) and which is
normally distributed as $Normal(p_{b},\sigma^{2}/n)$ in case b).

We want a sample size $n$ such that these two distributions intersect at
$p_{a}+3$ $\sigma/\sqrt{n}$ in case a)\ and $p_{b}-3$ $\sigma/\sqrt{n}$ in
case b). This gives us a sample size $n$ that is large enough that we
determine case a) if $\widehat{p}<p_{a}+3$ $\sigma$ and case b) if
$\widehat{p}>p_{b}-3$ $\sigma.$

We have
\begin{align*}
M/N+3\sqrt{M(1-M/N)/n}  &  =(1-M/N)-3\sqrt{M(1-M/N)/n}\\
&  \Rightarrow6\sqrt{M(1-M/N)/n}=1-2M/N\\
&  \Rightarrow n=\frac{36M(1-M/N)}{(1-2M/N)^{2}}\\
&  \sim36M\text{ when }M<<N
\end{align*}

In order for the Amplified-Haar transform to win we need the work factor of
the above method to be worse than the work factor of the Amplified-Haar
transform which is $O(\sqrt{N/2M}).$ This gives the following inquality:%

\begin{align*}
\frac{36M(1-M/N)}{(1-2M/N)^{2}}  &  >\sqrt{N/2M}\\
&  \Rightarrow\frac{1296M^{2}(1-M/N)^{2}}{(1-2M/N)^{4}}>N/2M\\
&  \Rightarrow M^{3}>\frac{N(1-2M/N)^{4}}{2592(1-M/N)^{2}}\\
&  \Rightarrow M>\frac{N^{1/3}(1-2M/N)^{4/3}}{2592^{1/3}(1-M/N)^{2/3}}%
\end{align*}

So we see that in this problem situation the Amplified-Haar transform wins if
$M>N^{1/3}$ approximately speaking.

We should note that in the more general setting, if the set $L\backslash A$ is
a mixture of balanced and constant components then just performing the Haar
transform alone will not help to solve the problem of determining the nature
of the set $A$ because the probabilities of the set of $A$ become impacted by
the makeup of $L\backslash A.$ The Amplified-Haar transform is not affected by
this and is able to easily solve this more general situation.

\newpage

\renewcommand{\thechapter}{6}

\chapter{REFERENCES}

\bigskip The following references are directly relevant to the chapters of
this thesis.

[1] Nakahara and Ohmi, \textquotedblleft Quantum Computing: From Linear
Algebra to Physical Realizations\textquotedblright, CRC Press (2008).

[2] S. Lomonaco, \textquotedblleft Shor's Quantum Factoring
Algorithm,\textquotedblright\ AMS PSAPM, vol. 58, (2002), 161-179.

[3] P. Shor, \textquotedblleft Polynomial time algorithms for prime
factorization and discrete logarithms on a quantum computer\textquotedblright,
SIAM J. on Computing, 26(5) (1997) pp1484-1509 (quant-ph/9508027).

[4] L. Grover, \textquotedblleft A fast quantum mechanical search algorithm
for database search\textquotedblright, Proceedings of the 28th Annual ACM
Symposium on Theory of Computing (STOC 1996), (1996) 212-219.

[5] Hardy and Wright \textquotedblleft An Introduction to the Theory of
Numbers\textquotedblright, Oxford Press Fifth Edition (1979).

[6] S. Lomonaco and L. Kauffman, \textquotedblleft Quantum Hidden Subgroup
Algorithms: A Mathematical Perspective,\textquotedblright\ AMS CONM, vol. 305,
(2002), 139-202.

[7] S. Lomonaco, \textquotedblleft Grover's Quantum Search
Algorithm,\textquotedblright\ AMS PSAPM, vol. 58, (2002), 181-192.

[8] S. Lomonaco and L. Kauffman, \textquotedblleft Is Grover's Algorithm a
Quantum Hidden Subgroup Algorithm?,\textquotedblright\ Journal of Quantum
Information Processing, Vol. 6, No. 6, (2007), 461-476.

[9] G. Brassard, P. Hoyer, M. Mosca and A. Tapp, "Quantum Amplitude
Amplification and Estimation", AMS CONM, vol 305, (2002), 53-74.

[10] M. Nielsen and I. Chuang, "Quantum Computation and Quantum Information",
Cambridge University Press (2000).

[11] P. Kaye, R. Laflamme and M. Mosca, "An Introduction to Quantum
Computing", Oxford University Press (2007).

[12] N. Yanofsky and M. Mannucci, "Quantum Computing For Computer Scientists",
Cambridge University Press (2008).

[13] S. Lomonaco, "A Lecture on Shor's Quantum Factoring Algorithm Version
1.1",quant-ph/0010034v1 9 Oct 2000.

[14] Cornwell, D., \textquotedblleft The amplified quantum Fourier transform:
solving the local period problem\textquotedblright, Quantum Inf Process (2013)
12: 1225-1253.

[15] Donoho and Stark, "Uncertainty Principles and Signal Recovery", SIAM J.
Appl Math, Vol 49, No 3, pp. 906-93 (1989)

[16] Massar and Spindel, "Uncertainty Relations for the Discrete Fourier
Transform", quant-ph, arXiv:0710.0723v2 (2008)

[17] Loo, K, "Quantum Algorithm Uncertainty Principles", math-ph,

arXiv:math-ph/0210007v2 (2004)

The following references are general references of relevance to the topics of
this thesis.

1.\qquad S. Aaronson, \textquotedblleft The Equivalence of Searching and
Sampling\textquotedblright,

arXiv:1009.5104 [quant-ph]

2.\qquad S. Aaronson and A. Ambainis, \textquotedblleft Quantum search of
spatial regions\textquotedblright, arXiv:quant-ph/0303041

3.\qquad S. Aaronson, \textquotedblleft Quantum lower bound on recursive
Fourier sampling\textquotedblright,

arXiv:quant-ph/0209060

4.\qquad G. Abal, R. Donangelo, M. Forets and R. Portugal, \textquotedblleft
Spatial quantum search in a triangular network\textquotedblright,
arXiv:1009.1422 [quant-ph]

5.\qquad A. Ambainis,\textquotedblright Quantum Algorithms\textquotedblright,arXiv:quant-ph/0504012.

6.\qquad A. Ambainis, A. Backurs, N. Nahimovs, R. Ozols and A. Rivosh,
\textquotedblleft Search by quantum walks on two-dimensional grid without
amplitude amplification\textquotedblright,

arXiv:1112.3337 [quant-ph]

7.\qquad A. Ambainis and A. Montanaro, \textquotedblleft Quantum algorithms
for search with wildcards and combinatorial group testing\textquotedblright,
arXiv:1210.1148 [quant-ph]

8.\qquad A. Ambainis, \textquotedblleft Quantum search with variable
times\textquotedblright,

arXiv:quant-ph/0609168

9.\qquad A. Ambainis, \textquotedblleft A better lower bound for quantum
algorithms searching an ordered list\textquotedblright, arXiv:quant-ph/9902053

10.\qquad A. Barenco, A. Ekert, K. Suominem and P. Torma,\textquotedblright
Approximate Fourier transform and decoherence\textquotedblright,arXiv:quant-ph/9601018

11.\qquad M. Ben-Or and A. Hassidim, \textquotedblleft Quantum search in an
ordered list via adaptive learning\textquotedblright

12.\qquad S. Berry and J. Wang, \textquotedblleft Quantum Walk-based search
and centrality\textquotedblright, arXiv:1010.0764 [quant-ph]

13.\qquad E. Biham, O. Biham, D. Biron, M. Grassl and D. Lidar,
\textquotedblleft Grover's quantum search algorithm for an arbitrary initial
amplitude distribution\textquotedblright,

arXiv:quant-ph/9807027

14.\qquad O. Biham, D. Shapira and Y. Shimoni, \textquotedblleft Analysis of
Grover's quantum search algorithm as a dynamical system\textquotedblright, arXiv:quant-ph/0307141

15.\qquad D. Biron, O. Biham, E. Biham, M. Grassl and D. Lidar,
\textquotedblleft Generalized Grover search algorithm for arbitrary initial
amplitude distribution\textquotedblright, arXiv:quant-ph/9801066

16.\qquad C. Bowden, G. Chen, Z. Diao and A. Klappenecker, \textquotedblleft
The universality of the quantum Fourier transform in forming the basis of
quantum computing algorithms\textquotedblright, arXiv:quant-ph/0007122

17.\qquad M. Boyer, G. Brassard, P. Hoeyer and A. Tapp, \textquotedblleft
Tight bounds on quantum searching\textquotedblright, arXiv:quant-ph/9605034

18.\qquad G. Brassard, P. Hoyer and A. Tapp,\textquotedblright Quantum
Counting\textquotedblright,\textquotedblright arXiv:quant-ph/9805082.

19.\qquad G. Brassard, P. Hoyer, M. Mosca and A. Tapp, "Quantum Amplitude
Amplification and Estimation", AMS CONM, vol 305, (2002), 53-74.

20.\qquad H. Burhman and R. de Wolf, \textquotedblleft Lower bounds for
quantum search and derandomization\textquotedblright, arXiv:quant-ph/9811046

21.\qquad C. Cafaro and S. Mancini, \textquotedblleft On Grover's Search
Algorithm from a Quantum Information Geometry Viewpoint\textquotedblright%
,arXiv.1110.6713 [quant-ph]

22.\qquad N. Cerf, L. Grover and C. Williams, \textquotedblleft Nested quantum
search and NP-complete problems\textquotedblright, arXiv:quant-ph/9806078

23.\qquad S. Chakraborty, S. Adhikari, \textquotedblleft Non-classical
Correlations in the Quantum Search Algorithm\textquotedblright,
arXiv:1302.6005v1 [quant-ph]

24.\qquad S. Chakraborty, S. Banerjee, S. Adhikari and A. Kumar,
\textquotedblleft Entanglement in the Grover's Search
Algorithm\textquotedblright, arXiv: 1305.4454 [quant-ph]

25.\qquad A. Chamoli and S. Masood, \textquotedblleft Two-Dimensional Quantum
Search Algorithm\textquotedblright, arXiv:1012.5629 [quant-ph]

26.\qquad A. Chamoli and M. Bhandari, \textquotedblleft Success rate and
entanglement evolution in search algorithm\textquotedblright, arXiv:quant-ph/0702221

27.\qquad J. Chappell, M. Lohe, L. Smekal, A. Iqbal and D. Abbot,
\textquotedblleft An improved formalism for the Grover search
algorithm\textquotedblright, arXiv:1201.1707 [quant-ph]

28.\qquad J. Chen and H. Fan, \textquotedblleft Quantum mechanical
perspectives and generalization of the fractional Fourier
transform\textquotedblright, arXiv:1307.6271

29.\qquad A. Childs and T. Lee, \textquotedblleft Optimal quantum adversary
lower bounds for ordered search\textquotedblright, arXiv:0708.3396

30.\qquad A. Childs, A. Landahl and P.Parrilo, \textquotedblleft Improved
quantum algorithms for the ordered search problem via semidefinite
programming\textquotedblright,

arXiv:quant-ph/0608161

31.\qquad A. Childs and J. Goldstone, \textquotedblleft Spatial search by
quantum walk\textquotedblright,

arXiv:quant-ph/0306054

32.\qquad A. Childs, E. Deotto, E. Farhi, J. Goldstone, S. Gutmann, A.
Landahl, \textquotedblleft Quantum search by measurement\textquotedblright, arXiv:quant-ph/0204013

33.\qquad B. Choi and V. Korepin, \textquotedblleft Quantum partial search of
a database with several target items\textquotedblright, arXiv:quant-ph/0608106

34.\qquad B. Choi, T. Walker and S. Braunstein, \textquotedblleft Sure success
partial search\textquotedblright, arXiv:quant-ph/0603136

35.\qquad R. Cleve and J. Watrous,\textquotedblright Fast parallel circuits
for the quantum Fourier transform\textquotedblright,arXiv:quant-ph/0006004

36.\qquad Cornwell, D., "The amplified quantum Fourier transform: solving the
local period problem", Quantum Inf Process (2013) 12: 1225-1253.

37.\qquad D. Coppersmith, \textquotedblleft An approximate Fourier transform
useful in quantum factoring\textquotedblright,arXiv:quant-ph/0201067

38.\qquad J.Cui and H. Fan, \textquotedblleft Correlations in Grover
Search\textquotedblright, arXiv:0904:1703 [quant-ph]

39.\qquad Z. Diao. \textquotedblleft Exactness of the Original Grover Search
Algorithm\textquotedblright,

arXiv:1010.3652 [quant-ph]

40.\qquad S. Dolev, I. Pitowsky and B. Tamir, \textquotedblleft Grover's
quantum search algorithm and Diophantine approximation\textquotedblright, arXiv:quant-ph/0507234

41.\qquad K. Dorai and D. Suter, \textquotedblleft Efficient implementations
of the quantum Fourier transform : an experimental
perspective\textquotedblright, arXiv:quant-ph/0211030

42.\qquad M. Falk, \textquotedblleft Quantum Search on the Spatial
Grid\textquotedblright, arXiv:1303.4127 [quant-ph]

43.\qquad E. Farhi and S. Gutmann,\textquotedblright Quantum mechanical square
root speedup in a structured search problem\textquotedblright, arXiv:quant-ph/9711035

44.\qquad A. Fijany and C. Williams, \textquotedblleft Quantum wavelet
transforms: fast algorithms and complete circuits\textquotedblright, arXiv:quant-ph9809004

45.\qquad D. Floess, E. Andersson and M. Hillery, \textquotedblleft Quantum
algorithms for testing Boolean functions\textquotedblright, arXiv:1006.1423v1 [quant-ph]

46.\qquad P. Gawron, J. Klemka and R. Winiarcyzk, \textquotedblleft Noise
effects in the quantum search algorithm from the computational complexity
point of view\textquotedblright,

arXiv:1108.1915 [quant-ph]

47.\qquad M. Gocwin, \textquotedblleft On the complexity of searching maximum
of a function on a quantum computer\textquotedblright, arXiv:quant-ph/0507060

48.\qquad R. Griffiths and C. Niu, \textquotedblleft Semiclassical Fourier
transform for quantum computation\textquotedblright, arXiv:quant-ph/9511007

49.\qquad L. Grover, "A fast quantum mechanical search algorithm for database
search", Proceedings of the 28th Annual ACM Symposium on Theory of Computing
(STOC 1996), (1996) 212-219.

50.\qquad L. Grover, \textquotedblleft A fast quantum mechanical search
algorithm for database search\textquotedblright, arXiv:quant-ph/9605043

51.\qquad L. Grover, \textquotedblleft Quantum computers can search
arbitrarily large databases by a single query\textquotedblright, arXiv:quant-ph/9706005v3.

52.\qquad L. Grover,\textquotedblright Quantum Search on Structured
Problems\textquotedblright,

arXiv:quant-ph/9802035

53.\qquad L. Grover, \textquotedblleft Searching with quantum
computers\textquotedblright,

arXiv:quant-ph/0011118

54.\qquad L. Grover,\textquotedblright Tradeoffs in the Quantum Search
Algorithm\textquotedblright,arXiv:quant-ph/0201152.

55.\qquad L. Grover,\textquotedblright Quantum Searching amidst
Uncertainty\textquotedblright,

arXiv:quant-ph/0507116.

56.\qquad L. Grover,\textquotedblright Superlinear amplitude
amplification\textquotedblright,

arXiv:0806.0154 [quant-ph]

57.\qquad L. Grover, \textquotedblleft A different kind of quantum
search\textquotedblright,

arXiv:quant-ph/0503205

58.\qquad L. Grover, \textquotedblleft How fast can a quantum computer
search\textquotedblright,arXiv:quant-ph/9809029

59.\qquad L. Grover, \textquotedblleft Quantum computers can search rapidly by
using almost any transformation\textquotedblright,arXiv:quant-ph/9712011

60.\qquad L. Grover, \textquotedblleft Quantum mechanics helps in searching
for a needle in a haystack\textquotedblright,arXiv:quant-ph/9706033

61.\qquad L. Grover and J Radhakrishnan,\textquotedblright Quantum search for
multiple items using parallel queries\textquotedblright,arXiv:quant-ph/0407217.

62.\qquad L. Grover and J. Radhakrishnan, \textquotedblleft Is partial quantum
search of a database any easier\textquotedblright, arXiv:quant-ph/0407122

63.\qquad L. Grover and T.Rudolph,\textquotedblright Creating superpositions
that correspond to efficiently integrable probability
distributions\textquotedblright,arXiv:quant-ph/0208112.

64.\qquad L. Grover, A. Patel and T. Tulsi, \textquotedblleft Quantum
algorithms with fixed points: the case of database search\textquotedblright, arXiv:quant-ph/0603132

65.\qquad L. Gyongyosi and S. Imre, \textquotedblleft An improvement in
quantum Fourier transform\textquotedblright, arXiv:1207.4464

66.\qquad L.Hales, \textquotedblleft The quantum Fourier transform and
extensions of the Abelian hidden subgroup problem\textquotedblright, arXiv:quant-ph/0212002

67.\qquad L. Hales and S. Hallgren,\textquotedblright Sampling Fourier
transforms on different domains\textquotedblright,arXiv:quant-ph/9812060

68.\qquad Hardy and Wright "An Introduction to the Theory of Numbers", Oxford
Press Fifth Edition (1979).

69.\qquad B. Hein and G. Tanner, \textquotedblleft Quantum search algorithms
on a regular lattice\textquotedblright, arXiv:1005:3676 [quant-ph]

70.\qquad B. Hein and G. Tanner, \textquotedblleft Quantum search algorithms
on the hypercube\textquotedblright, arXiv:0906.3094 [quant-ph]

71.\qquad M. Hillery, D. Reitzner and V. Bunek, \textquotedblleft Searching
via walking: How to find a marked subgraph of a graph using quantum
walks\textquotedblright, arXiv:0911:1102 [quant-ph]

72.\qquad T. Hogg,\textquotedblright Single-Step Quantum Search Using Problem
Structure\textquotedblright,

arXiv:quant-ph/9812049.

73.\qquad T. Hogg, \textquotedblleft A framework for structured quantum
search\textquotedblright,arXiv:quant-ph/9701013

74.\qquad T. Hogg, \textquotedblleft A framework for quantum search
heuristics\textquotedblright,arXiv:quant-ph/9611004

75.\qquad T. Hogg and M. Yanik,\textquotedblright Local search methods for
quantum computers\textquotedblright,

arXiv:quant-ph/9802043

76.\qquad P. Hoyer, J. Neerbek and Y. Shi,\textquotedblright Quantum
complexities of ordered searching, sorting and element
distinctness\textquotedblright, arXiv:quant-ph/0102078

77.\qquad J. Hsieh, C. Li, J. Lin and D. Chu, \textquotedblleft Formulation of
a family of sure success quantum search algorithms\textquotedblright, arXiv:quant-ph/0210201

78.\qquad M. Hunziker, D. Meyer, J. Park, J. Pommersheim and M. Rothstein,

\textquotedblright The Geometry of Quantum Learning\textquotedblright,arXiv:quant-ph/0309059.

79.\qquad N. Ilano, C. Villagonzalo and R. Banzon, \textquotedblleft Analysis
of the damped quantum search and its application to the one-dimensional Ising
system\textquotedblright,

arXiv:1208.5509v1 [quant-ph]

80.\qquad N. Ilano, C. Villagonzalo and R. Banzon, \textquotedblleft
Optimization of the damped quantum search\textquotedblright, arXiv:1208.5475 [quant-ph]

81.\qquad L. Ip, \textquotedblleft Solving shift problems and hidden coset
problem using the Fourier transform\textquotedblright,arXiv:quant-ph/0205034

82.\qquad S. Iriyama, M. Ohya, I.V. Volovich, \textquotedblleft On Quantum
Algorithm for Binary Search and its Computational Complexity\textquotedblright%
, arXiv:1306.5039v1 [quant-ph]

83.\qquad S. Ivanov, H. Tonchev and N. Vitanov, \textquotedblleft
Time-efficient implementation of quantum search with qudits\textquotedblright,
arXiv:1209.4489 [quant-ph]

84.\qquad R. Josza, \textquotedblleft Searching in Grover's
algorithm\textquotedblright, arXiv:quant-ph/9901021

85.\qquad R. Josza, \textquotedblleft Quantum algorithms and the Fourier
transform\textquotedblright,arXiv:quant-ph/9707033

86.\qquad P. Kaye, R. Laflamme and M. Mosca, "An Introduction to Quantum
Computing", Oxford University Press (2007).

87.\qquad V. Korepin and Y. Xu, \textquotedblleft Binary quantum
search\textquotedblright, arXiv:0705.0777

88.\qquad V. Korepin and J. Liao, \textquotedblleft Quest for fast partial
search algorithm\textquotedblright, arXiv:quant-ph/0510179

89.\qquad V. Korepin and L. Grover, \textquotedblleft Simple algorithm for
partial quantum search\textquotedblright, arXiv:quant-ph/0504157

90.\qquad V. Korepin, \textquotedblleft Optimization of partial
search\textquotedblright, arXiv:quant-ph/0503238

91.\qquad K. Kumar and G. Paraoanu, \textquotedblleft A quantum no reflection
theorem and the speeding up of Grover's search algorithm\textquotedblright,
arXiv:1105.4032 [quant-ph]

92.\qquad T. Laarhoven, M. Mosca and J. van de Pol, \textquotedblleft Solving
the Shortest Vector Problem in Lattices Faster Using Quantum
Search\textquotedblright, arXiv:1301.6176v1 [quant-ph]

93.\qquad C. Lavor, L. Manssur and R. Portugal, \textquotedblleft Grover's
algorithm: quantum database search\textquotedblright, arXiv:quant-ph/03010179

94.\qquad J. Lee, H. Lee and M. Hillery, \textquotedblleft Searches on star
graphs and equivalent oracle problems\textquotedblright, arXiv:1102:5480 [quant-ph]

95.\qquad T. Lee, F. Magniez and M. Santha, \textquotedblleft Improved Quantum
Query Algorithms for Triangle Finding and Associativity
Testing\textquotedblright, arXiv:1210.1014v1 [quant-ph]

96.\qquad S. Lloyd,\textquotedblright Quantum search without
entanglement\textquotedblright,

arXiv:quant-ph/9903057

97.\qquad S. Lomonaco, "Grover's Quantum Search Algorithm," AMS PSAPM, vol.
58, (2002), 181-192.

98.\qquad S. Lomonaco, "Shor's Quantum Factoring Algorithm," AMS PSAPM, vol.
58, (2002), 161-179.

99.\qquad S. Lomonaco, "A Lecture on Shor's Quantum Factoring Algorithm
Version 1.1",quant-ph/0010034v1 9 Oct 2000.

100.\qquad S. Lomonaco and L. Kauffman, "Quantum Hidden Subgroup Algorithms: A
Mathematical Perspective," AMS CONM, vol. 305, (2002), 139-202.

101.\qquad S. Lomonaco and L. Kauffman, "Is Grover's Algorithm a Quantum
Hidden Subgroup Algorithm?," Journal of Quantum Information Processing, Vol.
6, No. 6, (2007), 461-476.

102.\qquad C. Lomont, \textquotedblleft A quantum Fourier transform
algorithm\textquotedblright, arXiv:quant-ph/0404060

103.\qquad N. Lovett, M. Everitt, R. Heath and V. Kendon, \textquotedblleft
The quantum walk search algorithm: Factors affecting
efficiency\textquotedblright, arXiv:1110.4366v2 [quant-ph]

104.\qquad N. Lovett, M. Everitt, M. Trevers, D. Mosby, D. Stockton and V.
Kendon, \textquotedblleft Spatial search using the discrete time quantum
walk\textquotedblright, arXiv:1010:4705 [quant-ph]

105.\qquad F. Magniez, A. Nayak, J. Roland and M. Santha, \textquotedblleft
Search via quantum walk\textquotedblright, arXiv:quant-ph/0608026

106.\qquad A. Mani and C. Patvardhan, \textquotedblleft A Fast measurement
based fixed-point Quantum Search Algorithm\textquotedblright, arXiv:1102.2332 [quant-ph]

107.\qquad A. Mani and C. Patvardhan, \textquotedblleft A Fast fixed-point
Quantum Search Algorithm by using Disentanglement and
Measurement\textquotedblright, arXiv:1203.3178 [quant-ph]

108.\qquad F. Marquezino, R. Portugal and S. Boettcher, \textquotedblleft
Quantum Search Algorithms on Hierarchical Networks\textquotedblright,
arXiv:1205.0529 [quant-ph]

109.\qquad D. Meyer and T. Wong, \textquotedblleft Nonlinear Quantum Search
Using the Gross-Pitaevskii Equation\textquotedblright, arXiv:1303.0371v3 [quant-ph]

110.\qquad F. Marquezino, R. Portugal and S. Boettcher, \textquotedblleft
Spatial Search Algorithms on Hanoi Networks\textquotedblright, arXiv:1209.2871 [quant-ph]

111.\qquad A. Mizel, \textquotedblleft Critically damped quantum
search\textquotedblright, arXiv:0810.0470 [quant-ph]

112.\qquad A. Montanaro,\textquotedblright Quantum search with
advice\textquotedblright,arXiv:0908.3066 [quant-ph]

113.\qquad A. Montanaro, \textquotedblleft Quantum search of partially ordered
sets\textquotedblright,arXiv:quant-ph/0702196

114.\qquad C. Moore, D. Rockmore and A. Russell, \textquotedblleft Generic
quantum Fourier transforms\textquotedblright, arXiv:quant-ph/0304064

115.\qquad C. Moore, D. Rockmore, A. Russell and L. Schulman,
\textquotedblleft The power of strong Fourier sampling: quantum algorithms for
affine groups and hidden shifts\textquotedblright, arXiv:quant-ph/0503095

116.\qquad M. Mosca,\textquotedblright Quantum Algorithms\textquotedblright%
,arXiv:0808.0369 [quant-ph]

117.\qquad M. Mosca and C. Zalka,\textquotedblright Exact quantum Fourier
transforms and discrete logarithm algorithms\textquotedblright,arXiv:quant-ph/0301093.

118.\qquad Y. Most, Y. Shimoni and O. Biham, \textquotedblleft Entanglement of
periodic states, the quantum Fourier transform and Shor's factoring
algorithm\textquotedblright, arXiv:1001:3145

119.\qquad Nakahara and Ohmi, "Quantum Computing: From Linear Algebra to
Physical Realizations", CRC Press (2008).

120.\qquad A. Nesterov and G. Berman, \textquotedblleft Quantum search using
non-Hermitian adiabatic evolution\textquotedblright,arXiv:1208.4642 [quant-ph]

121.\qquad M. Nielsen and I. Chuang, "Quantum Computation and Quantum
Information", Cambridge University Press (2000).

122.\qquad R. Orus, J. Latorre and M. Martin-Delgado,\textquotedblright
Natural majorisation of the quantum Fourier transform in phase-estimation
algorithms\textquotedblright, arXiv:quant-ph/0206134

123.\qquad S. Parasa and K. Eswaran, \textquotedblleft Quantum pseudo
fractional Fourier transform and its application to quntum phase
estimation\textquotedblright,arXiv:0906.1033

124.\qquad A. Patel, \textquotedblleft Quantum Algorithms: Database Search and
its Variations\textquotedblright, arXiv:1102.2058 [quant-ph]

125.\qquad A. Patel, K. Raghunathan and M. Rahaman, \textquotedblleft Search
on a Hypercubic Lattice through a quantum random walk: II.
d=2\textquotedblright, arXiv:1003.5664 [quant-ph]

126.\qquad A. Patel and M. Rahaman, \textquotedblleft Search on a Hypercubic
Lattice through a quantum random walk: I. d%
$>$%
2\textquotedblright, arXiv:1003.0065 [quant-ph]

127.\qquad A. Perez, \textquotedblleft Non adiabatic quantum search
algorithms\textquotedblright,arXiv:0706.1139

128.\qquad A. Pittenger and M. Rubin, \textquotedblleft Complete separability
and Fourier representations of n-qubit states\textquotedblright,arXiv:quant-ph/9912116

129.\qquad A. Pittenger and M. Rubin,\textquotedblright Separability and
Fourier representations of density matrices\textquotedblright,arXiv:quant-ph/0001014

130.\qquad V. Potocek, A. Gabris, T. Kiss and I. Jex, \textquotedblleft
Optimized quantum random-walk search algorithms\textquotedblright, arXiv:0805.4347.

131.\qquad R. Qu, J. Wang, Z. Li, Y. Bao and X. Cao, \textquotedblleft
Multipartite entanglement and Grover's search algorithm\textquotedblright, arXiv:1210.3418

132.\qquad D. Bhaktavatsala Rao and K. Molmer, \textquotedblleft Effect of
qubit losses on Grover's quantum search algorithm\textquotedblright,
arXiv:1209.0637 [quant-ph]

133.\qquad R. Ramos, P. de Sousa, D. Oliveira, \textquotedblleft Solving
mathematical problems with quantum search algorithm\textquotedblright, arXiv:quanth-ph/0605003

134.\qquad O. Regev and L. Schiff, \textquotedblleft Impossibility of a
Quantum Speed-up with a Faulty Oracle\textquotedblright, arXiv:1202.1027v1 [quant-ph]

135.\qquad H. Roehrig, \textquotedblleft Searching an ordered list on a
quantum computer\textquotedblright, arXiv:quant-ph/9812061

136.\qquad M. Rossi, D. Brus and C. Macchiavello, \textquotedblleft Scale
invariance of entanglement dynamics in Grover's quantum search
algorithm\textquotedblright, arXiv:1205.3000 [quant-ph]

137.\qquad P. Rungta, \textquotedblleft The quadratic speedup in Grover's
search algorithm from the entanglement perspective\textquotedblright, arXiv:0707.1410

138.\qquad M. Santha, \textquotedblleft Quantum walk based search
algorithms\textquotedblright, arXiv:0808.0059 [quant-ph]

139.\qquad D. Shapira, S. Mozes and O. Biham, \textquotedblleft The effect of
unitary noise on Grover's quantum search algorithm\textquotedblright, arXiv:quant-ph/0307142

140.\qquad N. Shenvi, K. Brown and K. Whaley, \textquotedblleft Effects of
noisy oracle on search algorithm complexity\textquotedblright. arXiv:quant-ph/0304138

141.\qquad N. Shenvi, J. Kempe and K. Whaley, \textquotedblleft A quantum
random walk search algorithm\textquotedblright, arXiv:quant-ph/0210064

142.\qquad P. Shor, "Polynomial time algorithms for prime factorization and
discrete logarithms on a quantum computer", SIAM J. on Computing, 26(5) (1997)
pp1484-1509 (arXiv:quant-ph/9508027).

143.\qquad P. Shor, \textquotedblright Introduction to Quantum
Algorithms\textquotedblright,

arXiv:quant-ph/0005003.

144.\qquad R. Sufiani and N. Bahari, \textquotedblleft Quantum search in
structured database using local adiabatic evolution and spectral
methods\textquotedblright,arXiv:1208.0262 [quant-ph]

145.\qquad R. Tucci, \textquotedblleft Quantum fast Fourier transform viewed
as a special case of recursive application of cosine-sine
decomposition\textquotedblright, arXiv:quant-ph/0411097

146.\qquad A. Tulsi, \textquotedblleft Optimal quantum searching to find a
common element of two sets\textquotedblright, arXiv:1210.04648 [quant-ph]

147.\qquad A. Tulsi, \textquotedblleft General framework for quantum search
algorithms\textquotedblright,

arXiv:0806.1257 [quant-ph]

148.\qquad A. Tulsi, \textquotedblleft Faster quantum walk algorithm for the
two dimensional spatial search\textquotedblright, arXiv:0801.0497

149.\qquad A. Tulsi, \textquotedblleft Quantum computers can search rapidly
using almost any selective transformation\textquotedblright, arXiv:0711.4299

150.\qquad J. Tyson, \textquotedblleft Operator-Schmidt decomposition of the
quantum Fourier transform on C\symbol{94}N1 tensor C\symbol{94}%
N2\textquotedblright, arXiv:quant-ph/0210100

151.\qquad P.Vrana, D. Reeb, D. Reitzner and M. Wolf, \textquotedblleft
Fault-ignorant Quantum Search\textquotedblright, arXiv:1307.0771v1 [quant-ph]

152.\qquad N. Yanofsky and M. Mannucci, "Quantum Computing For Computer
Scientists", Cambridge University Press (2008).

153.\qquad A. Younes, \textquotedblleft Strength and Weakness in Grover's
Quantum Search Algorithm\textquotedblright, arXiv:0811.4481 [quant-ph]

154.\qquad A. Younes, \textquotedblleft Constant-Time Quantum Search Algorithm
for the Unstructured Search Problem\textquotedblright, arXiv:08114247 [quant-ph]

155.\qquad A. Younes, \textquotedblleft Fixed phase quantum search
algorithms\textquotedblright, arXiv:0704.1585

156.\qquad A. Younes, J. Rowe and J. Miller, \textquotedblleft A hybrid
quantum search algorithm: a fast quantum algorithm for multiple
matches\textquotedblright, arXiv:quant-ph/0311171

157.\qquad C. Zalka, \textquotedblleft A Grover-based quantum search of
optimal order for an unknown number of marked elements\textquotedblright, arXiv:quant-ph/9902049

158.\qquad C. Zalka, \textquotedblleft Grover's quantum searching algorithm is
optimal\textquotedblright,

arXiv:quant-ph/9711070

159.\qquad M. Zakaria, \textquotedblleft Binary Subdivision for Quantum
Search\textquotedblright,

arXiv:1101.4703 [quant-ph]

The following is a list of books:

1.\qquad A. Aezel,\textquotedblright Entanglement\textquotedblright,Plume, 2001

2.\qquad Y. Aharonov and D. Rohrlich,\textquotedblright Quantum
paradoxes\textquotedblright, Wiley,2009

3.\qquad G. Van Assche,\textquotedblright Quantum cryptography and secret-key
distillation\textquotedblright, Cambridge,2006

4.\qquad J. Audretsch (Ed),\textquotedblright Entangled
world\textquotedblright.Wiley, 2002

5.\qquad J. Baggot,\textquotedblright The meaning of quantum
theory\textquotedblright,Oxford,1994

6.\qquad I. Bengtsson and K. Zyczkowski,\textquotedblright\ Geometry of
quantum states\textquotedblright, Cambridge,2008

7.\qquad D. Bernstein, J. Buchmann and E. Dahmen (Eds), \textquotedblleft
Post-quantum cryptography\textquotedblright, Springer, 2009

8.\qquad D. Bouwmeester, A. Ekert, A, Zeilinger (Eds), \textquotedblleft The
physics of quantum information\textquotedblright, Springer,2000

9.\qquad G. Chen, L. Kauffman and S. Lomonaco,\textquotedblright Mathematics
of quantum computation and quantum technology\textquotedblright, Chapman \& Hall,2008

10.\qquad I. Daubechies, \textquotedblleft Ten lectures on
wavelets\textquotedblright,SIAM Vol 61, 2006

11.\qquad L. Debnath,\textquotedblright Wavelet transforms and their
applications\textquotedblright, Birkhauser, 2002

12.\qquad P. Dirac, \textquotedblleft The principles of quantum
mechanics\textquotedblright,Oxford, 1999

13.\qquad R. Feynman,\textquotedblright QED: The strange theory of light and
matter\textquotedblright,Princeton, 1985

14.\qquad G. Gamow, \textquotedblleft Thirty years that shook physics: the
story of quantum theory\textquotedblright,Dover, 1985

15.\qquad J. Gruska, \textquotedblleft Quantum computing\textquotedblright,
McGraw Hill, 1999

16.\qquad M. Hirvensalo, \textquotedblleft Quantum computing\textquotedblright%
, Springer, 2004

17.\qquad P. Kaye, R. LaFlamme and M. Mosca, \textquotedblleft An Introduction
to Quantum Computing\textquotedblright, OUP 2007.

18.\qquad M. Kumar,\textquotedblright Quantum: Einstein, Bohr and the great
debate about the nature of reality\textquotedblright, Norton, 2008

19.\qquad F. Laloe, \textquotedblleft Do we really understand quantum
mechanics\textquotedblright,Cambridge, 2013

20.\qquad S. Lomonaco (Ed.), \textquotedblleft Quantum Computation: A grand
mathematical challenge for the twenty first century and the
millennium\textquotedblright, AMS Vol 58, 2000

21.\qquad S. Lomonaco (Ed.),\textquotedblright Quantum Computation and
Information\textquotedblright,AMS 305

22.\qquad D. McMahon, \textquotedblleft Quantum computing
explained\textquotedblright,Wiley, 2008

23.\qquad N. Mermin, \textquotedblleft Quantum computer science -- An
introduction\textquotedblright,

Cambridge,2007

24.\qquad M. Nakahara and T. Ohmi,\textquotedblright Quantum computing: from
linear algebra to physical realizations\textquotedblright,CRC Press, 2008

25.\qquad M. Neilsen and I. Chuang,\textquotedblright Quantum computation and
quantum information\textquotedblright, Cambridge 2000

26.\qquad Y. Nievergelt,\textquotedblright Wavelets made
easy\textquotedblright,Birkhauser\textquotedblright,2001

27.\qquad J. Polkinghorne, \textquotedblleft The quantum
world\textquotedblright,Princeton, 1989

28.\qquad R. Portugal, \textquotedblleft Quantum walks and search
algorithms\textquotedblright, Springer, 2013

29.\qquad A. Rae, \textquotedblleft Quantum Physics: Illusion or
Reality\textquotedblright, Canto,2005

30.\qquad E. Rieffel and W. Polak, \textquotedblleft Quantum computing: A
gentle introduction\textquotedblright,MIT, 2011

31.\qquad A. Terras,\textquotedblright Fourier Analysis on Finite Groups and
Applications\textquotedblright,LMS vol 43, 2001

32.\qquad P. Van Fleet,\textquotedblright Discrete Wavelet
Transformations\textquotedblright,Wiley,2008

33.\qquad J. Walker, \textquotedblleft A primer on Wavelets and their
scientific applications\textquotedblright, Chapman \& Hall, 2008

34.\qquad D. Walnut,\textquotedblright An introduction to wavelet
analysis\textquotedblright, Birkhauser, 2004

35.\qquad R. Wang,\textquotedblright Introduction to orthogonal
transforms\textquotedblright,Cambridge, 2012

36.\qquad A. Whitaker, \textquotedblleft The new quantum age\textquotedblright%
,Oxford, 2012

37.\qquad A. Whitaker,\textquotedblright Einstein, Bohr and the quantum
dilemma\textquotedblright,

Cambridge,2006

38.\qquad C. Williams, \textquotedblleft Explorations in Quantum
Computing\textquotedblright, Springer 2011

39.\qquad N. Yanofsky and M. Mannucci, \textquotedblleft Quantum computing for
computer scientists\textquotedblright,Cambridge, 2008

\newpage\ \thispagestyle{empty}
\end{document}